\title{Parameterized Streaming Algorithms for Vertex Cover}
\author{
Rajesh Chitnis\thanks{Department of Computer Science , University of Maryland at
College Park, USA.  rchitnis@cs.umd.edu. Supported in part by NSF CAREER award 1053605, NSF grant CCF-1161626, ONR YIP award
N000141110662, DARPA/AFOSR grant FA9550-12-1-0423 and a Simons Award for Graduate Students in Theoretical Computer Science.}
\and
Graham Cormode \thanks{Department of Computer Science, University of
  Warwick, UK. g.cormode@warwick.ac.uk.}
    \and
 MohammadTaghi Hajiaghayi\thanks{Department of Computer Science , University of Maryland,
 USA.  hajiagha@cs.umd.edu. Supported in part by NSF CAREER award 1053605, NSF grant CCF-1161626, ONR YIP award
N000141110662, and DARPA/AFOSR grant FA9550-12-1-0423.}
    \and
Morteza Monemizadeh\thanks{Goethe-Universit\"{a}t Frankfurt, Germany and Department of Computer Science, University of Maryland at
College Park, USA.   monemizadeh@em.uni-frankfurt.de,
Supported in part by MO 2200/1-1.}
}
\def\zeit{\number\shorthour:\ifnum\shortminute<10 0\number\shortminute
\else\number\shortminute\fi}
\newtheorem{theorem}{Theorem}
\newtheorem{corollary}[theorem]{Corollary}
\newtheorem{lemma}[theorem]{Lemma}
\newtheorem{definition}[theorem]{Definition}
\newtheorem*{rep@theorem}{\rep@title} \newcommand{\newreptheorem}[2]{%
\newenvironment{rep#1}[1]{%
\def\rep@title{\bf #2 \ref{##1}}%
\begin{rep@theorem}}%
{\end{rep@theorem} } }
\newenvironment{proofof}[1]{\begin{trivlist} \item {\bf Proof
#1:~~}}
  {\qed\end{trivlist}}
\renewcommand{\Pr}[1]{\ensuremath{\mathbf{Pr}[#1]}}
\renewcommand{\paragraph}[1]{\medskip \noindent {\bf #1}}
\newcommand{\COMMENTED}[1]{{}}
\newcommand{\junk}[1]{\COMMENTED{#1}}
\newcommand{\NAT}{\ensuremath{\mathbb{N}}}
\newcommand{\NATURAL}{\NAT}
\newcommand{\REAL}{\ensuremath{\mathbb{R}}}
\newlength{\savedparindent}
\begin{document}

%---------------------------------------------------------------------------------------------------------------------------------------------------

\sloppy
\maketitle
%\begin{titlepage}
%\maketitle\thispagestyle{empty}

\begin{abstract}
As graphs continue to grow in size, we seek ways to effectively process such data at scale.
The model of streaming graph processing, in which a compact summary is maintained as each
edge insertion/deletion is observed, is an attractive one.
However, few results are known for optimization problems over such dynamic graph streams.

In this paper, we introduce a new approach to handling graph streams,
by instead seeking solutions for the parameterized versions of these problems
where we are given a parameter $k$ and the objective is to decide
whether there is a solution bounded by $k$.
%We design novel algorithms to solve a number of challenging parameterized problems even if the graph stream is dynamic, i.e., edges can be both inserted or deleted.
By combining kernelization techniques with randomized sketch structures,
we obtain the first streaming algorithms for the parameterized versions of
the Vertex Cover problem. We consider two models for a graph stream on
$n$ nodes:
the insertion-only model where the edges can only be added,
and the dynamic model where edges can be both inserted and deleted.
More formally, we show the following results:

\begin{itemize}
\item In the insertion only model, there is a one-pass deterministic algorithm
for the parameterized Vertex Cover problem which computes a sketch
using $\tilde{O}(k^{2})$ space\footnote{$\tilde{O}(f(k))=O(f(k)\cdot
  \log^{O(1)} m)$, where $m$ is the number of edges.} such that at each
  timestamp in time $\tilde{O}(2^{2k^2})$ it can either extract a solution of
  size at most $k$ for the current instance, or report that no such solution exists.

\item In the dynamic model, and under the \emph{promise} that at each
timestamp there is a solution of size at most $k$, there is a one-pass
algorithm for the parameterized Vertex Cover problem which computes
a sketch using $\tilde{O}(k^{2})$ space such that in time $\tilde{O}(2^{2k^2})$ it can
extract a solution for the final instance with probability $1-\delta/{n^{O(1)}}$, where $\delta<1$.
To the best of our knowledge, this is the first graph streaming algorithm that combines linear
sketching with sequential operations that depend on the graph at the current time.

\item In the dynamic model without any promise, there is a one-pass
randomized algorithm for the parameterized Vertex Cover problem
which computes a sketch using $\tilde{O}(nk)$ space such that in time $\tilde{O}(nk+2^{2k^2})$ it can either extract a solution of size
at most $k$ for the final instance, or report that no such solution exists.
\end{itemize}
%The costs of these algorithms (space, update time, and query time) are
%polynomial in the size of the graph, and bounded functions of the parameter $k$.
%These (randomized, streaming) results are complementary to recent
%progress in (deterministic, dynamic) maximal matching.
%
%
We also show a tight lower bound of $\Omega(k^2)$ for the space
complexity of any (randomized) streaming algorithms for
the parameterized Vertex Cover, even in the insertion-only model.
%Finally, we also consider parameterized versions of other problems
%like Maximal Matching and Feedback Vertex Set.
\end{abstract}

%\end{titlepage}

%---------------------------------------------------------------------------------------------------------------------------------------------------

\newpage

\section{Introduction}
Many large graphs are presented in the form of a sequence of edges.
This stream of edges may be a simple stream of edge arrivals, where
each edge adds to the graph seen so far, or may include a mixture of
arrivals and departures of edges.
In either case, we want to be able to quickly answer basic optimization
questions over the current state of the graph, such as finding a
(maximal) matching over the current graph edges, or finding a
(minimum) vertex cover, while storing only a limited amount of
information, sublinear in the size of the current graph.

The semi-streaming model introduced by
Feigenbaum, Kannan, McGregor, Suri and Zhang \cite{FKMSZ05}  is a classical
streaming model in which maximal matching and vertex cover are studied.
In the semi-streaming model we are interested to solve (mostly approximately) graph problems
using one pass over the graph and using $O(n\ \text{polylog}\ n)$ space.
Numerous problems have been studied in this setting, such as
maintaining random walks and page rank over large graphs~\cite{DGP08}.

However, in many real world applications, we often observe instances of graph problems
whose solutions are small comparing to the size of input. Consider for example
the problem of finding the minimum number of fire stations to cover an entire city,
or other cases where we expect a small number of facilities will serve
a large number of locations. In these scenarios, assuming that the number of fire
stations or facilities is a small number $k$ is very practical. So, it will be interesting
to solve instances of graph problems (like minimal matching
and vertex cover) whose solutions are small (say, sublinear in the input size)
in a streaming fashion using space which is bounded with respect to
the size of their solutions, not the input size.

In this paper to solve graph problems whose solutions are small,
we {\em parameterize} problems with a parameter $k$, and
solve the decision problem of finding whether there exists a solution whose size is bounded by $k$.
We therefore seek \emph{parameterized streaming algorithms} whose space and time complexities
are bounded with respect to $k$ , i.e.,  sublinear in the size of the input.

There are several ways to formalize this question, and we give results for the most natural formalizations.
The basic case is when the input consists of a sequence of edge
arrivals only, for which we seek a {\em parameterized streaming
  algorithm} (PSA).
%We show how existing results on kernelization for maximal matching can
%be adapted to work in this setting to obtain a bounded-space PSA for matching,
%and hence for vertex cover.
%
%
%%
More challenging problems arise when the input stream is more
dynamic, and can contain both deletions and insertions of edges. In this case we seek a {\em dynamic parameterized streaming
  algorithm} (DPSA).
The challenge here is that when an edge in the matching is deleted, we
sometimes need substantial work to repair the solution, and have to
ensure that the algorithm has enough information to do so, while keeping only a bounded amount of working space. If we are \emph{promised} that at every timestamp there is a solution of cost $k$, then we seek a {\em promised dynamic parameterized streaming
  algorithm} (PDPSA).

%First, we give a brief introduction to parameterized complexity.

%---------------------------------------------------------------------------------------------------------------------------------------------------

\subsection{Parameterized Complexity}

Most interesting optimization problems on graphs are NP-hard,
implying that, unless P=NP, there is no polynomial time algorithm
that solves all the instances of an NP-hard problem exactly.
However as noted by Garey and Johnson~\cite{garey-johnson},
hardness results such as NP-hardness should merely constitute
the beginning of research.
The traditional way to combat  intractability is to design approximation
algorithms or randomized algorithms which run in polynomial time.
These methods have their own shortcomings:
we either get an approximate solution or lose the guarantee that
the output is always correct.

Parameterized complexity is essentially a two-dimensional analogue of ``P
vs NP". The running time is analyzed in finer detail:
instead of expressing it as a function of only the input size $n$, one
or more parameters of the input instance are defined,
and we investigate the effects of these parameters on the running time.
The goal is to design algorithms that work efficiently if
the parameters of the input instance are small, even if the size of
the input is large. We refer the reader to~\cite{DF99, FG06} for more background.

A  \textit{parameterization} of a decision problem  $P$ is a function
that assigns an integer parameter $k$ to each instance $I$ of $P$.
We assume that instance $I$ of problem $P$ has
the corresponding input $X=\{x_1,\cdots,x_i,\cdots, x_m\}$
consisting of elements $x_i$ (e.g. edges defining a graph).
We denote the input size of instance  $I$ by $|I|=m$.
In what follows, we assume that $f(k)$ and $g(k)$ are functions of an
integer parameter $k$.

\begin{definition}[Fixed-Parameter Tractability (FPT) ]
\label{def:fpt}
A parameterized problem $P$ is \emph{fixed-parameter tractable} (FPT) if there is an algorithm
that in time $f(k)\cdot m^{O(1)}$ returns a solution for each instance $I$ whose  size
fulfills a given condition corresponding to $k$ (say, at most  $k$ or at least $k$)
or reports that such a solution does not exist.
\end{definition}

%---------------------------------------------------------------------------------------------------------------------------------------------------

To illustrate this concept,
we define the parameterized version of Vertex Cover as follows.
A \textit{vertex cover} of an undirected graph $G=(V,E)$ is a subset $S$
of vertices such that for every edge $e\in E$ at least one
of the endpoints (or vertices) of $e$ is in $S$.

\begin{definition}[Parameterized Vertex Cover ($VC(k)$)]
\label{def:vc}
Given an instance $(I,k)$ where $I$ is an undirected graph $G=(V,E)$ (with input size
$|I|=|E|=m$ and $|V|=n$) and parameter $k\in \NATURAL$,  the goal in the
\textit{parameterized Vertex Cover} problem ($VC(k)$  for short)
is to develop an algorithm that in time $f(k)\cdot m^{O(1)}$ either returns
a vertex cover of size at most $k$ for $G$, or reports that $G$ does not have
any vertex cover of size at most $k$.
\end{definition}

A simple branching method gives a $2^{k}\cdot m^{O(1)}$ algorithm for $VC(k)$:
choose any edge and branch on choosing either end-point of the edge into the solution.
The current fastest FPT algorithm for $VC(k)$ is due to
Chen et al.~\cite{kanj-vc} and runs in time $1.2738^{k}\cdot m^{O(1)}$.

%---------------------------------------------------------------------------------------------------------------------------------------------------

One of the techniques used to obtain FPT algorithms is
\textit{kernelization}. In fact, it is known that a problem is FPT if
and only if it has a kernel~\cite{FG06}.
Kernelization has been used to design efficient algorithms by using polynomial-time
preprocessing to replace the input by another equivalent input of smaller size. More formally, we have:

\begin{definition}[Kernelization]
\label{def:kernel}
For a parameterized problem $P$, kernelization is a polynomial-time transformation
that maps an instance $(I,k)$ of $P$ to an instance $(I',k')$ such that
\begin{itemize}
\item $(I,k)$ is a yes-instance if and only if $(I',k')$ is a yes-instance;
\item $k'\le g(k)$ for some computable function $g$;
\item the size of $I'$  is bounded by some computable function $f$ of $k$, i.e., $|I'|\le f(k)$.
\end{itemize}
The output $(I',k')$ of a kernelization algorithm is called a \textit{kernel}.
\end{definition}

In Section~\ref{sec:kernel:VC} we review
the kernelization algorithm of Buss and Goldsmith~\cite{BG93} for the parameterized Vertex Cover problem
which relies on finding a maximal matching of a graph $G=(V,E)$.
This kernel gives a graph with $O(k^2)$ vertices and $O(k^2)$ edges.
Another kernelization algorithm given in~\cite{FG06} exploits the half-integrality
property of LP-relaxation for vertex cover due to Nemhauser and Trotter,
and produces a graph with at most $2k$ vertices.

%---------------------------------------------------------------------------------------------------------------------------------------------------

\subsection{Parameterized Streaming Algorithms: Our Results}

In order to state our results for \textit{parameterized streaming}
we first define the notion of a \textit{sketch}  in a very general form.

\begin{definition}[Sketch \cite{AMS99,FKSV02,I06JACM}]
\label{def:sketch}
A \it{sketch} is a sublinear-space data structure that supports
a fixed set of queries and updates.
\end{definition}

%---------------------------------------------------------------------------------------------------------------------------------------------------

%
\paragraph{Insertion-Only Streaming.}
Let  $P$ be a problem parameterized by $k\in \NATURAL$.
Let $I$ be an instance of $P$ that has the input
$X=\{x_1,\cdots,x_i,\cdots, x_m\}$.
Let $S$ be a stream of $\textsc{Insert}(x_i)$ (i.e., the insertion of an element $x_i$)
operations of underlying instance $(I,k)$.
In particular, stream $S$ is a permutation
$X'=\{x'_1,\cdots,x'_i,\cdots, x'_m\}$ for $x'_i\in X$ of an input $X$. Here we denote
the time when an input $x'_i$ is inserted by \textit{time} $i$. At time $i$,
the input which corresponds to instance $I$ is $X'_i=\{x'_1,\cdots,x'_i\}$.

\begin{definition}({\sc Parameterized streaming algorithm (PSA)})
Given stream $S$, let  $\mathcal{A}$ be an algorithm  that computes a sketch for problem $P$
using $\tilde{O}(f(k))$-space
%\footnote{$\tilde{O}(f(k))=O(f(k)\cdot
%  \log^{O(1)} m)$}
and with one pass over stream $S$. Suppose at a time $i$,
algorithm $\mathcal{A}$ in time $\tilde{O}(g(k))$ extracts,  from the sketch,
a solution for input $X'_i$ (of instance $I$) whose  size fulfills
the condition corresponding to $k$  or reports that such a solution does not exist.
Then we say $\mathcal{A}$ is a $(f(k), g(k))$-PSA.
\label{defn:psa}
\end{definition}

%---------------------------------------------------------------------------------------------------------------------------------------------------

%
For many problems, whether or not there is a solution of size at most
$k$ is monotonic under edge additions, and so
if at time $i$, algorithm $\mathcal{A}$ reports that a solution
for input $X'_i$ does not exist, then
there is also no solution for any input $X'_t$ of instance $I$
at all times $t>i$.
Consequently, we can terminate the algorithm $\mathcal{A}$.
We state our result on the parameterized streaming
algorithm for Vertex Cover and prove it in Section \ref{sec:psa:vc}.

\begin{theorem}
\label{THM:VC:INSERTION}
Let $S$ be a stream of  insertions of edges of an underlying graph $G$.
Then there exists a \emph{deterministic} $(k^2,2^{2k^2})$-PSA for $VC(k)$ problem.
\end{theorem}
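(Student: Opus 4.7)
The plan is to simulate the Buss–Goldsmith kernelization of Section~\ref{sec:kernel:VC} in the streaming setting by maintaining, at all times, (i) a greedy maximal matching $M$ of the edges seen so far, and (ii) for each matched vertex $v\in V(M)$, a list $L(v)$ of at most $k+1$ other endpoints of edges incident to $v$. The update rule is the natural one: when an edge $\{u,v\}$ arrives, if neither endpoint is currently in $V(M)$, add $\{u,v\}$ to $M$; otherwise, for each endpoint already in $V(M)$ whose list is not yet saturated, append the other endpoint to that list. As soon as $|M|>k$, we halt and declare that no vertex cover of size $k$ exists, since any vertex cover must pick at least one endpoint from every matching edge.

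First, I would establish that the stored data determines the current graph up to vertices that are forced into the solution. Because $M$ is kept maximal in the insertion-only model and a vertex never leaves $V(M)$ once it enters, every edge $\{u,v\}$ that ever arrives has at least one endpoint in $V(M)$ at the moment of its arrival: if both $u$ and $v$ were unmatched at that moment, the edge itself would have been added to $M$. Consequently every edge of the current graph $G$ has an endpoint in $V(M)$, and for any matched vertex $w$ whose list has not reached capacity, the list $L(w)$ together with the references to $w$ inside the lists of the other matched vertices recovers the entire neighborhood of $w$ in $G$.

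Next, I would perform the extraction. Let $F\subseteq V(M)$ be the set of matched vertices whose list is saturated at $k+1$; each such vertex has degree at least $k+1$ in $G$, so by the Buss rule any vertex cover of size at most $k$ must include $F$, and if $|F|>k$ we report failure. Otherwise we form the reduced graph $G'$ on $V(G)\setminus F$ using the union of the remaining (non-saturated) lists. By the reconstruction argument, the edge set of $G'$ is known exactly; by construction its matched side $V(M)\setminus F$ has at most $2k$ vertices of degree at most $k$, so $|E(G')|=O(k^{2})$ and $|V(G')|=O(k^{2})$, matching the standard Buss–Goldsmith kernel. We then decide whether $G'$ admits a vertex cover of size $k-|F|$ by exhaustive enumeration of subsets of $V(G')$, which costs $2^{O(k^{2})}\le\tilde{O}(2^{2k^{2}})$ time. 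The algorithm is deterministic, makes one pass, and stores $O(k)$ matching edges plus $O(k)$ lists of length $O(k)$ each, using $O(k^{2}\log n)=\tilde{O}(k^{2})$ bits.

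The main obstacle I anticipate is the reconstruction argument in the second paragraph. One has to track carefully that the neighborhood of a surviving matched vertex can be reassembled from both its own list and references to it in the lists of other surviving matched vertices, and that no edge is overlooked: any edge arriving before a vertex $w$ entered $V(M)$ must already have had a matched other endpoint (since otherwise the greedy rule would have absorbed it into $M$ and $w$ would have joined $V(M)$ then), so it is present in that endpoint's list unless that endpoint lies in $F$ and is being discarded anyway. Once this bookkeeping is pinned down, correctness of the extraction follows directly from the offline Buss–Goldsmith analysis.
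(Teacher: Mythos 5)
Your proposal is correct and takes essentially the same approach as the paper: maintain a greedy maximal matching, keep a bounded list of incident edges per matched vertex, reject as soon as $|M|>k$, and at extraction time simulate the Buss--Goldsmith kernelization on the stored subgraph, arguing that any edge not stored has an endpoint that is forced into the cover anyway. The only differences are cosmetic --- you cap lists at $k+1$ (slightly cleaner than the paper's $k$, since it makes the degree test explicit) and remove the saturated set $F$ in a single sweep before brute-forcing, whereas the paper reruns the full iterative kernelization on $(G_M,k)$ and argues lockstep agreement with $G$ --- plus one harmless imprecision: a non-saturated matched vertex may have degree up to $\Theta(k)$ in $G'$ rather than $\le k$, since it can also appear in the lists of up to $2k$ other matched vertices, but this still gives $|E(G')|=O(k^2)$ and $2^{O(k^2)}$ extraction time.
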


%---------------------------------------------------------------------------------------------------------------------------------------------------

The best known kernel size for the $VC(k)$ problem is $O(k^2)$ edges~\cite{BG93}.
In fact, Dell and van Melkebeek~\cite{dell-lower-bound} showed that it is not possible to get
a kernel for the $VC(k)$ problem with $O(k^{2-\epsilon})$ edges for any $\epsilon>0$,
under some assumptions from classical complexity.
Interestingly, the space complexity of our PSA of Theorem \ref{THM:VC:INSERTION} matches
this best known kernel size.
In Section \ref{sec:vc:bound} we show that the space complexity of above PSA
is optimal even if we use randomization.
More precisely, we prove the following result.

\begin{theorem}
\label{thm:vc:lower:bound}
Any (randomized) PSA for the $VC(k)$ problem requires $\Omega(k^2)$ space.
\label{thm:lower-bound-vc}
\end{theorem}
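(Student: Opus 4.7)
The plan is to reduce from the one-way communication problem \textsc{Index}, whose randomized complexity with constant error is known to be $\Omega(N)$ bits, where $N$ is the length of Alice's string. I will instantiate $N=k^2$ and convert any $s$-bit (randomized) PSA for $VC$ into a one-way \textsc{Index} protocol of length $s$, thereby forcing $s=\Omega(k^2)$.

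For the reduction, Alice holds $x\in\{0,1\}^{k\times k}$ and Bob holds $(i^\star,j^\star)\in[k]\times[k]$ and must output $x_{i^\star,j^\star}$. The graph is built on vertex sets $\{a_1,\dots,a_k\}$ and $\{b_1,\dots,b_k\}$, together with pendant vertices to be added later. Alice feeds into the PSA, one by one, every edge $(a_i,b_j)$ with $x_{i,j}=1$, and then forwards the current sketch to Bob. Bob resumes the same stream by attaching $2k-1$ fresh pendant vertices to each $a_i$ with $i\neq i^\star$ and to each $b_j$ with $j\neq j^\star$, and finally queries the PSA with parameter $K:=2k-2$.

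Correctness will rest on the standard \emph{high-degree observation}: any vertex of degree $>K$ must belong to every vertex cover of size $\le K$, because otherwise all $>K$ of its neighbours would be forced in. In Bob's construction every $a_i$ with $i\neq i^\star$ and every $b_j$ with $j\neq j^\star$ has degree at least $2k-1>K$, so these $2k-2=K$ vertices are forced into any $K$-cover. They already cover every pendant edge and every Alice-edge except possibly $(a_{i^\star},b_{j^\star})$. That last edge is present iff $x_{i^\star,j^\star}=1$, in which case an additional cover vertex is required and no $K$-cover exists; otherwise a $K$-cover of exactly size $K$ exists. Hence the PSA's yes/no answer determines $x_{i^\star,j^\star}$ exactly.

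Since the protocol is one-way (Alice transmits only her sketch; Bob issues further stream updates and a single query), the sketch must carry $\Omega(k^2)$ bits of information about $x$. With $K=2k-2=\Theta(k)$ this yields the desired $\Omega(K^2)$ space lower bound for $VC(K)$, matching the upper bound of Theorem~\ref{THM:VC:INSERTION} up to logarithmic factors. The only real design choice in the argument is the pendant gadget used to force the $2k-2$ vertices in; once that is set up, the rest is a direct invocation of the \textsc{Index} lower bound, so I do not anticipate any substantive obstacle beyond verifying the forcing and the small edge-covering casework above.
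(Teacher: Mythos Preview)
Your proposal is correct and follows essentially the same approach as the paper: a one-way reduction from \textsc{Index} on $k^2$ bits, with Alice encoding her string as a bipartite graph on $[k]\times[k]$ and Bob attaching pendants to every $a_i$ ($i\neq i^\star$) and $b_j$ ($j\neq j^\star$) so that a vertex cover of size $2k-2$ exists iff the queried bit is $0$. The only cosmetic difference is that the paper attaches just two pendants per forced vertex and argues via a replacement/WLOG step on degree-one leaves, whereas you attach $2k-1$ pendants and invoke the high-degree forcing rule directly; both yield the same $\Omega(k^2)$ bound.
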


%A feedback vertex set of an undirected graph $G=(V,E)$ is a set $V'\subseteq V$ such that $G\setminus V'$ has no cycles. The Feedback Vertex Set problem asks to find a feedback vertex set of minimum size.

%We state our results for the Feedback Vertex Set problem in Appendix \ref{app:FVS}.

%---------------------------------------------------------------------------------------------------------------------------------------------------

\paragraph{Dynamic Streaming.}
We define \textit{dynamic parameterized stream} as a generalization of \textit{dynamic graph stream}
introduced by Ahn, Guha and McGregor \cite{AGM12a}.

\begin{definition}[Dynamic Parameterized Stream]
Let  $P$ be a problem parameterized by $k\in \NATURAL$. Let $I$ be an instance of $P$
that has an input $X=\{x_1,\cdots,x_i,\cdots, x_m\}$ with input size $|I|=m$.
We say stream $S$ is a dynamic parameterized stream if $S$
is  a stream of $\textsc{Insert}(x_i)$ (i.e., the insertion
of an element $x_i$) and $\textsc{Delete}(x_i)$ (i.e., the deletion of
an element $x_i$) operations applying to the underlying instance
$(I,k)$ of $P$.
\end{definition}

Now stream $S$ is not simply a permutation
$X'=\{x'_1,\cdots,x'_i,\cdots, x'_m\}$ for $x'_i\in X$ of an input
$X$, but rather a sequence of transactions that collectively define a
graph. We assume the size of stream $S$
is $|S|\le m^c$ for a constant $c$ which means  $\log|S|\le c\log m$
or asymptotically, $O(\log|S|)= O(\log m)$. We denote the time which corresponds to
the $i$-th update operation of $S$ by \textit{time} $i$. The $i$-th update operation
can be $\textsc{Insert}(x'_i)$ or $\textsc{Delete}(x'_i)$ for $x'_i\in X$
(note that we can perform $\textsc{Delete}(x'_i)$ only if $x'_i$ is
present at time $i-1$). At time $i$,
the input of instance $I$ is a subset $X'_i\subseteq X$ of inputs
which are, up to time $i$, inserted but not deleted.

%---------------------------------------------------------------------------------------------------------------------------------------------------
We next define a \textit{promised} streaming model as follows.
Suppose we know for sure that at every time $i$ of a dynamic parameterized
stream $S$, the size of the vertex cover of underlying graph $G(V,E)$ (where $E$
is the set of edges that are inserted up to time $i$ but not deleted)
is at most $k$. We show that within the framework of the
promised streaming model we are able to develop a dynamic parameterized
streaming algorithm whose space usage matches the lower bound of
Theorem \ref{thm:vc:lower:bound} up to $\tilde{O}(1)$ factor.

We formulate a dynamic parameterized streaming algorithm within the framework of
the promised streaming model as follows.

\begin{definition}
({\sc Promised dynamic parameterized streaming algorithm (PDPSA)})
Let  $S$ be a promised dynamic parameterized stream, i.e., we are promised
that at every time $i$, there is  a solution for input $X'_i$
whose  size fulfills the condition corresponding to $k$.
Let  $\mathcal{A}$ be an algorithm  that computes a sketch for problem $P$
using $\tilde{O}(f(k))$-space in one pass over stream $S$.
Suppose at the end of stream $S$, i.e., time $|S|$, algorithm $\mathcal{A}$
in time $\tilde{O}(g(k))$ extracts,  from the sketch,
a solution for input $X'_{|S|}$ (of instance $I$) whose  size fulfills
the condition corresponding to $k$.
We say $\mathcal{A}$ is an $(f(k), g(k))$-PDPSA.
\end{definition}

%---------------------------------------------------------------------------------------------------------------------------------------------------

Next, using the well-known connection (see, for example, Chapter 9 of~\cite{FG06})
between maximal matching and kernelization algorithms for parameterized Vertex Cover,
we show that kernels for matching can be implemented in data streams in small space, and
this in turn gives  a PDPSA for $VC(k)$ problem.
We summarize this main result in the following theorem and
we develop it in Section \ref{sec:promised:dynamic}.

\begin{theorem}
\label{thm:vc:max:k}
Suppose at every timestep
%of dynamic parameterized stream $S$
the size of the vertex cover of underlying graph $G(V,E)$ is at most $k$.
There exists a $(k^2,2^{2k^2})$-PDPSA for $VC(k)$ with probability $\ge 1-\delta/n^c$,
where $\delta<1$ and $c$ is a constant.
\end{theorem}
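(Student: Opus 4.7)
The plan is to reduce the problem to maintaining a maximal matching of the dynamic graph in small space, and then invoke the Buss-Goldsmith kernelization reviewed in Section~\ref{sec:kernel:VC}. First, I would observe that since the minimum vertex cover is promised to have size at most $k$ at every timestep, and since any matching has size at most that of the minimum vertex cover, any maximal matching $M$ of the current graph satisfies $|M|\le k$. Given such an $M$, Buss-Goldsmith produces a kernel on $O(k^2)$ edges; this kernel can be solved by brute force in $2^{O(k^2)}\cdot \poly(k)$ time, which matches the $\tilde O(2^{2k^2})$ running time in the statement.

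The core task is therefore to maintain, under edge insertions and deletions, a $\tilde O(k^2)$-bit sketch from which a maximal matching of the final graph can be recovered with probability at least $1-\delta/n^{O(1)}$. For this I would employ $\ell_0$-samplers, each of which uses $\poly\log(n)$ bits of linear-sketch space and returns a uniformly random edge of the current support with inverse-polynomial failure probability. The sketch will consist of $O(k)$ ``levels,'' each containing $O(k\log n)$ independent $\ell_0$-samplers, for total space $\tilde O(k^2)$. To make greedy extension possible at query time, the sampler family within each level is combined with a pre-selected random hash of the vertices into $\Theta(k^2)$ color classes: by a birthday-style argument the at-most-$2k$ endpoints that will eventually lie in the extracted matching fall into distinct color classes with high probability, so at query time one can, by taking appropriate linear combinations of basic sketches, restrict any sampler to edges whose endpoint colors avoid any prescribed set of at most $2k$ already-used classes.

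At query time I would iteratively build the matching in at most $k$ rounds: starting with $M=\emptyset$, in round $i$ I scan the level-$i$ samplers until one of them returns an edge with both endpoints outside $V(M)$, which I add to $M$; if every sampler in a round fails to produce such an edge, I declare $M$ maximal and stop. A union bound over the $O(k)$ rounds, the $O(k\log n)$ samplers per round, and the hash event yields overall failure probability at most $\delta/n^{O(1)}$. The Buss-Goldsmith procedure is then applied to $M$ to obtain a kernel on $O(k^2)$ edges, which is solved by brute force in $2^{O(k^2)}$ time. The main technical obstacle is the middle step: linear sketches cannot adaptively depend on the stream, so the samplers must be fixed in advance and yet must still support round-by-round matching extraction with controlled failure; handling this cleanly is precisely what the color-class construction buys us, since it effectively pre-allocates slots for each potential matching endpoint and lets the query-time algorithm perform the ``masking'' that would otherwise be impossible to do online.
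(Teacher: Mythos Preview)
Your high-level plan---reduce to maintaining a maximal matching, then kernelize via Buss--Goldsmith---matches the paper's, but your mechanism for obtaining the matching is entirely different from the paper's and contains a genuine gap.

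The paper does \emph{not} extract the matching offline from a static linear sketch. It maintains a maximal matching $M$ \emph{online} throughout the stream: for each currently matched vertex $u$ it keeps an $x$-sample recovery sketch $S_u$ of $u$'s incident edges (with $x=\tilde{O}(k)$), together with a bookkeeping dictionary $\mathcal{T}$ of size $O(k^2)$ and per-vertex timestamps. Insertions update $M$ greedily; when a matched edge is deleted, the endpoint's sketch is queried to rematch it---if the vertex has high degree, a uniform sample from $S_u$ hits an exposed neighbor with high probability (since $|V_M|\le 2k$), and if it has low degree the full sketched neighborhood is recovered exactly. Three invariants guarantee that every live edge is represented in at least one endpoint's sketch, so rematching never misses an available exposed neighbor. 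The paper explicitly flags this combination of linear sketching with state-dependent online operations as its novelty.

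The step in your proposal that fails is the ``masking'' claim: with $\Theta(k^2)$ vertex color classes and only $O(k\log n)$ basic sketches per level, you cannot, by linear combinations, restrict an $\ell_0$-sampler to edges both of whose endpoints avoid a prescribed set $F$ of up to $2k$ classes. That restriction is not a linear map on the edge-indicator vector; to realize it by summing pre-built sketches you would need one sketch per \emph{pair} of color classes, i.e.\ $\Theta(k^4)$ sketches and $\tilde{O}(k^4)$ space, not $\tilde{O}(k^2)$. Without masking, a uniform edge sample lands in the residual graph $G[V\setminus V(M)]$ with probability that can be as small as $\Theta(1/(nk))$, so your round-by-round extraction does not succeed with the stated probability in the stated space. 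A secondary issue: Buss--Goldsmith (and the correctness argument in Section~\ref{subsec:psa-vc}) requires up to $k$ incident edges per matched vertex, not merely the matching $M$ itself; the paper reads these from the per-vertex sketches $S_u$, whereas your sketch, as described, does not supply them.
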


Our algorithm takes the novel approach of combining linear sketching
with sequential operations that depend on the current state of the
graph.
Prior work in sketching has instead only performed updates of sketches
for each stream update, and postponed insepecting them until the end
of the stream.
%To the best of our knowledge, the algorithm of this theorem is the first graph streaming algorithm that combines linear
%sketching with sequential operations that depend on the graph at the
%current time.

As a byproduct of this main theorem we have the following corollary.

\begin{corollary}
\label{cor:general:MM:dyn:promised}
Assume we are promised that a maximal matching of underlying graph $G(V,E)$
at every time $i$ of dynamic parameterized stream $S$ is of size
at most $k$ for $k\in \NATURAL$.
Then, there exists a dynamic algorithm that maintains
a maximal matching of graph $G(V,E)$ using $\tilde{O}(k^2)$ space.
The update time (i.e., the time to maintain the sketch)
and query time (i.e., the time to maintain a maximal matching using the sketch)
of this algorithm are worst-case $\tilde{O}(k)$. For $k=\tilde{O}(\sqrt{n})$,
this gives a dynamic  algorithm for maximal matching whose space,
worst-case update and query times
are $\tilde{O}(n)$, $\tilde{O}(\sqrt{n})$ and $\tilde{O}(\sqrt{n})$, respectively.
\end{corollary}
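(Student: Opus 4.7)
The strategy is to derive the corollary as an immediate byproduct of the sketch constructed in the proof of Theorem~\ref{thm:vc:max:k}, reading out the maintained maximal matching rather than decoding a vertex cover from the kernel. I would begin by relating the two promises: for every graph $G$, any maximal matching $M$ and minimum vertex cover $C$ satisfy $|M|\le |C|\le 2|M|$, since the endpoints of $M$ already form a vertex cover. Consequently, the promise that $G(V,E)$ admits a maximal matching of size at most $k$ at every time $i$ implies that its minimum vertex cover has size at most $2k$ at every time $i$, so I can invoke Theorem~\ref{thm:vc:max:k} with parameter $2k$ and obtain a linear sketch of size $\tilde{O}((2k)^2)=\tilde{O}(k^2)$ that succeeds with probability $1-\delta/n^c$.

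Next I would inspect the internals of that sketch. The PDPSA realizes the Buss--Goldsmith kernel reviewed in Section~\ref{sec:kernel:VC}, whose very first step is to compute a maximal matching of the current graph; the $2^{O(k^2)}$ factor in the query time of Theorem~\ref{thm:vc:max:k} is spent \emph{after} this step, branching over the $O(k^2)$-edge kernel to decide vertex cover. Since we only want the matching itself, we skip the branching entirely. The sketch is implemented as $O(k)$ independent linear samplers (e.g., $\ell_0$-samplers) that support edge peeling: at query time one draws a sample, removes both endpoints from the remaining samplers, and repeats for $O(k)$ rounds. Each round costs polylogarithmic time and yields one matching edge, giving total query time $\tilde{O}(k)$. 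Each stream update similarly touches the $O(k)$ samplers in polylogarithmic time each, giving worst-case update time $\tilde{O}(k)$.

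The main obstacle is ensuring that the success probability of Theorem~\ref{thm:vc:max:k}, which is stated only for the final time $|S|$, is strong enough to support continuous maintenance of a valid matching at every intermediate time. I would handle this by inflating the per-sampler failure probability by a $\poly(n)$ factor and taking a union bound across the at most $m^c$ updates of the stream; the resulting $O(\log n)$ overhead per sampler is absorbed into the $\tilde{O}(\cdot)$ notation, so the sketch size remains $\tilde{O}(k^2)$. Finally, substituting $k=\tilde{O}(\sqrt{n})$ into the bounds yields space $\tilde{O}(n)$ and worst-case update and query times $\tilde{O}(\sqrt{n})$, which completes the corollary.
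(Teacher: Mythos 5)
Your framing — that the Corollary falls out of the data structures underlying Theorem~\ref{thm:vc:max:k}, with the branching/kernel-solving step stripped away — is correct, and the observation that a maximal-matching promise of $k$ implies a vertex-cover promise of $2k$ (since the endpoints of a maximal matching form a cover) is a clean, if unnecessary, detour: the algorithm only ever uses the bound $|V_M|\le 2k$ on matched vertices, which follows directly from the Corollary's promise. The genuine gap is in how you describe the query. You write that the sketch is ``$O(k)$ independent linear samplers that support edge peeling: at query time one draws a sample, removes both endpoints from the remaining samplers, and repeats for $O(k)$ rounds.'' This is not what the algorithm does, and as a self-contained procedure it would not work. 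After sampling an edge $(u,v)$, a linear sketch does not let you ``remove both endpoints from the remaining samplers'': you would have to subtract out every edge incident on $u$ or $v$, and you do not know those edges. Even setting that aside, a set of $O(k)$ sampled, pairwise-disjoint edges is a matching but not obviously a \emph{maximal} one; you would still need to certify that every edge of $E_t$ touches a sampled vertex, which peeling alone does not give you.

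What the paper actually does is maintain the maximal matching $M$ explicitly at all times; the per-matched-vertex sketches $S_u$, the structure $\mathcal{T}$, and the vertex/edge timestamps exist only to \emph{repair} $M$ when a matched edge is deleted, via {\sf Rematch}, {\sf AnnounceNeighborhood}, and {\sf DeleteNeighborhood}. Maximality at every step is guaranteed by Invariants~1--3, which ensure that any exposed neighbor of an exposed vertex would be found and rematched. Consequently the ``query'' in the Corollary is trivial (read out the stored $M$ in $O(k)$ time), and the $\tilde{O}(k)$ bound really reflects the worst-case cost per stream update, already established in the closing argument of Theorem~\ref{thm:vc:max:k}. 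Your worry about correctness at intermediate times is legitimate but already handled there: each invocation of {\sf Rematch} on a high-degree vertex succeeds with probability $1-\delta/n^c$ (Lemma~\ref{lem:high:rematch}), and a union bound over the $\le n^c$ stream operations is exactly how the stated overall success probability is obtained — the algorithm must maintain a maximal matching at every timestep in order for the invariants, and hence its own future updates, to remain correct.
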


%---------------------------------------------------------------------------------------------------------------------------------------------------

%
Finally, we formulate a dynamic parameterized streaming algorithm without any promise as follows.

\begin{definition}
({\sc Dynamic parameterized streaming algorithm (DPSA)})
\label{defn:dpsa}
Let $S$ be a dynamic parameterized stream $S$.
Let  $\mathcal{A}$ be an algorithm  that computes a sketch for problem $P$
using $\tilde{o}(m)\cdot f(k)$-space and with one pass over stream $S$.
Suppose at the end of stream $S$, i.e., time $|S|$, algorithm $\mathcal{A}$
in time $\tilde{o}(m)\cdot g(k)$ extracts,  from the sketch,
a solution for input $X'_{|S|}$ whose  size fulfills
the condition corresponding to $k$  or reports that such a solution does not exist.
We say $\mathcal{A}$ is an $(\tilde{o}(m)\cdot f(k),\tilde{o}(m)\cdot g(k))$-DPSA.
\end{definition}

%We show that in the dynamic model without any promise, there is
%a dynamic parameterized streaming algorithm for $VC(k)$
%which computes a sketch using $\tilde{O}(nk)$ space such that it can
%in time $\tilde{O}(nk+2^{2k^2})$ either extract a solution of size
%at most $k$ for the final instance, or report that no such solution exists.
We state our result on the DPSA (without any promise) for Vertex Cover
and prove it in Section~\ref{app:vc:dpsa}.

\begin{theorem}
\label{THM:VC:DPSA}
Let $S$ be a dynamic parameterized stream of  insertions and deletions of edges of an underlying graph $G$.
There exists a randomized $(\min(m,nk), \min(m,nk)+2^{2k^2})$-DPSA for $VC(k)$ problem.
\end{theorem}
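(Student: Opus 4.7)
The plan is to maintain, for every vertex $v\in V$, an independent $(k{+}1)$-sparse recovery sketch of the signed adjacency vector of $v$ (the vector indexed by $V$ whose $u$-coordinate is the current multiplicity of the edge $\{u,v\}$). Standard linear-sketch constructions such as Cormode--Muthukrishnan sparse recovery, or $\ell_0$-samplers amplified $O(\log n)$ times, occupy $\tilde{O}(k)$ bits, process each edge update in $\tilde{O}(k)$ time, and on query either return the exact neighborhood $N(v)$ whenever $\deg(v)\le k$ or correctly declare overflow. Maintaining $n$ such sketches costs $\tilde{O}(nk)$ space in total; when $m\le nk$ we instead just store the active edge list directly, which yields the claimed $\tilde{O}(\min(m,nk))$ space bound.

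At the end of the stream we query every sketch and set $H:=\{v:\deg_G(v)>k\}$. A simple degree argument forces $H$ into every feasible solution: if a vertex $v$ with $\deg_G(v)>k$ were omitted from a vertex cover, all $>k$ of its neighbors would have to lie in the cover, exceeding the budget $k$. Hence $|H|>k$ lets us immediately report ``no solution''; otherwise we commit $H$ to the candidate cover and reduce to solving $VC(k-|H|)$ on $G':=G[V\setminus H]$.

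Because every $v\notin H$ satisfies $\deg_G(v)\le k$, its sketch recovers $N_G(v)$ in full, so we read off $N_{G'}(v)=N_G(v)\setminus H$ and thereby reconstruct $G'$ in time $\tilde{O}(nk)$. We then invoke the Buss--Goldsmith kernelization reviewed in Section~\ref{sec:kernel:VC} on $G'$: if $|E(G')|>(k-|H|)^2$ we report ``no''; otherwise $G'$ is already an $O(k^2)$-edge kernel on which exhaustive search decides $VC(k-|H|)$ in $\tilde{O}(2^{2k^2})$ time, exactly as in Theorem~\ref{THM:VC:INSERTION}. Combining the extracted cover of $G'$ with $H$ gives a correct answer to $VC(k)$ on $G$, and the total post-processing time is $\tilde{O}(nk+2^{2k^2})$.

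The main obstacle is to guarantee that all $n$ sparse-recovery queries are simultaneously correct over a stream of length at most $m^{O(1)}=n^{O(1)}$. We handle this by boosting each per-vertex sketch's failure probability to $n^{-(c+1)}$ via $O(\log n)$ independent parallel copies (the extra factor is absorbed in the $\tilde{O}$ notation), and then taking a union bound over the $n$ sketches, so the algorithm's overall failure probability is at most $n^{-c}$, as required of a randomized DPSA.
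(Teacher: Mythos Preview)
Your argument is correct, but the paper takes a more direct route. Instead of maintaining a $(k{+}1)$-sparse recovery sketch per vertex and then peeling off the high-degree set $H$, the paper keeps a \emph{single} $nk$-sample recovery sketch over the entire edge set, together with a counter for the number of live edges. The key observation is that any graph with a vertex cover of size at most $k$ has at most $nk$ edges (each cover vertex has degree at most $n$, and every edge touches the cover). So at query time one simply checks the counter: if it exceeds $nk$, output \textsc{no}; otherwise recover the whole edge set from the global sketch and hand it to the Buss--Goldsmith kernelization offline.

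The two approaches achieve the same $\tilde{O}(nk)$ space and the same query cost. The paper's version is conceptually simpler (one sketch, one threshold test, then run the kernel exactly as in the RAM model), whereas your version effectively executes Buss rule~(1) ``inside the sketches'' by using overflow as a proxy for $\deg(v)>k$. Your route is closer in spirit to the PDPSA construction of Section~\ref{sec:promised:dynamic} and would generalize more readily if one wanted finer per-vertex information; the paper's route avoids the separate treatment of $H$ and the union bound over $n$ per-vertex sketches. (Amusingly, the authors appear to have drafted a per-vertex version as well---it survives as a commented-out alternate proof---but opted for the global sketch in the final text.)
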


For graphs which are not  sparse (i.e., $m>O(nk)$) the algorithm of Theorem \ref{THM:VC:DPSA}
gives $(\tilde{o}(m)\cdot f(k),\tilde{o}(m)\cdot g(k))$-DPSA for $VC(k)$.
The space usage of PDPSA of Theorem \ref{thm:vc:max:k} matches
the lower bound of Theorem \ref{thm:vc:lower:bound}. On the other hand,
there is a gap between space bound $\tilde{O}(nk)$ of DPSA of Theorem \ref{THM:VC:DPSA}
and lower bound $\Omega(k^2)$ of Theorem \ref{thm:vc:lower:bound}.
We conjecture that the lower bound for the space usage of any
(randomized) DPSA for $VC(k)$ problem is indeed $\Omega(nk)$.

%---------------------------------------------------------------------------------------------------------------------------------------------------

\subsection{Related Work}

The question of finding maximal and maximum cardinality matchings has
been heavily studied in the model of (insert-only) graph streams.
A greedy algorithm trivially obtains a maximal matching (simply store
every edge that links two currently unmatched nodes); this can also be
shown to be a 0.5-approximation to the maximum cardinality matching~\cite{Feigenbaum:Kannan:McGregor:Suri:Zhang:05}.
By taking multiple passes over the input streams, this can be improved
to a $1-\epsilon$ approximation, by finding augmenting paths with
successive passes~\cite{McGregor:05,McGregor:09}.

Subsequent work has extended to the case of weighted edges (when a
maximum weight matching is sought), and reducing the number of passes
to provide a guaranteed approximation~\cite{Eggert:Kliemann:Srivastav:09,Eggert:Kliemann:Munstermann:Srivastav:12}.
While approximating the size of the vertex cover has been studied in other sublinear models, such as
sampling~\cite{Parnas:Ron:07,Onak:Ron:Rosen:Rubinfeld:12}, we are not
aware of prior work that has addressed the question of finding a
vertex cover over a graph stream.
Likewise, parameterized complexity has not been explicitly studied in
the streaming model, so we initiate it here.

The model of dynamic graph streams has recently received much attention,
due to breakthroughs by Ahn, Guha and
McGregor~\cite{AGM12a,Ahn:Guha:McGregor:12PODS}.
Over two papers, they showed the first results for a number of graph
problems over dynamic streams, including
determining connected components,
testing bipartiteness,
minimum spanning tree weight and building a sparsifier.
They also gave multipass algorithms for maximum weight matchings
and spanner constructions.
This has provoked much interest into what can be computed over dynamic
graph streams.

%---------------------------------------------------------------------------------------------------------------------------------------------------

\paragraph{Outline.}
Section~\ref{sec:prelims} provides background on techniques for
kernelization of graph problems, and on streaming algorithms for
building a sketch to recover a compact set.
Our results on PSA and DPSA are stated in Section~\ref{sec:psa:vc}
and Section~\ref{app:vc:dpsa}, respectively.
Section~\ref{sec:promised:dynamic} is the most involved, as it addresses the most
difficult dynamic case in the promised model.

%---------------------------------------------------------------------------------------------------------------------------------------------------

\section{Preliminaries}
\label{sec:prelims}

In this section, we present the definitions of streaming model and
the graph sketching that we use.

%---------------------------------------------------------------------------------------------------------------------------------------------------

\paragraph{Streaming Model.}
%\label{sec:streaming:model}
%
Let $S$ be a stream of insertions (or similarly, insertions and deletions) of edges of an underlying graph
$G(V,E)$. We assume that vertex set $V$ is fixed and given, and the size of $V$
is $|V|=n$. We assume that the size of stream $S$ is $|S|\le n^c$ for some large
enough constant $c$ so that we may assume that
$O(\log|S|)=O(\log n)$.
Here $[x]=\{1,2,3,\cdots,x\}$ when $x\in \NATURAL$.
Throughout the paper we denote a failure probabilities by $\delta$, and
approximation parameters by $\epsilon$.
%assume $0< \delta<1$ and $0< \epsilon<1$ are two parameters.

We assume that there is a unique numbering for the vertices in
$V$ so that we can treat $v \in V$ as a unique number $v$ for $1 \le v \le n=|V|$.
We denote an undirected edge in $E$ with two endpoints $u,v\in V$ by
$(u,v)$. The graph $G$ can
have at most ${n \choose 2} = n(n-1)/2$ edges.
Thus, each edge can also be thought of as referring to a unique number
between 1 and ${n \choose 2}$.
%Thus, we assume we have a unique numbering for the edges in
%$E$ so that when we refer to an edge $(u,v) \in E$, we mean this edge has
%a unique number $(u,v)$ for $1 \le (u,v) \le n(n-1)/2$ according to this numbering.

%
At the start of stream $S$, edge set $E$ is an empty set.
We assume in the course of stream $S$, the maximum size of $E$ is a
number $m$, i.e., $m'=|E|\le m$. Counter $m'$ stores the current number of
edges of stream $S$, i.e., after every insertion we increment $m'$ by one and
after every deletion we decrement $m'$ by one.

Let $M$ be a maximal matching that we maintain for stream $S$.
Edges in $M$ are called \textit{matched} edges; the other edges are \textit{free}.
If $uv$ is a matched edge, then $u$ is the \textit{mate} of $v$ and $v$ is the
\textit{mate} of $u$. Let $V_M$ be the vertices of $M$ and
$\overline{V}_M=V\backslash V_M$. A vertex $v$ which is in $V_M$ is called
a \textit{matched} vertex, otherwise, i.e., if $v\in\overline{V}_M$, $v$ is called
an \textit{exposed} vertex.

The \textit{neighborhood} of a vertex $u\in V$ is defined as
$\mathcal{N}_u=\{v\in V: uv\in E\}$. Hence the degree of a vertex
$u\in V$ is $d_u=|\{uv\in E\}|=|\mathcal{N}_u|$.
We split the neighborhood of $u$ into the set
%$\mathcal{N}'_u$
of matched neighbors of $u$, $\mathcal{N}_u \cap V_M$, and the set
%$\mathcal{N}''_u$
of exposed neighbors of $u$, i.e.,
$\mathcal{N}_u \setminus V_M$.
%$\mathcal{N}'_u=\{z\in V_M: (u,z)\in E\}$ and
%$\mathcal{N}''_u= \mathcal{N}_u \backslash \mathcal{N}'_u$.

%---------------------------------------------------------------------------------------------------------------------------------------------------

%
\paragraph{Oblivious Adversarial Model.}
We work in the \textit{oblivious adversarial model} as is common for analysis of randomized data structures
such as universal hashing \cite{CW77}.
This model has been used in a series of papers on
dynamic maximal matching and dynamic connectivity problems: see for example
\cite{OR10, BGS11, KKM13, NS13}.
The model allows the adversary to know all the edges in the graph
$G(V,E)$ and their arrival order, as well as the algorithm to be
used.
However, the adversary is not aware of the random bits used by the
algorithm, and so cannot choose updates adaptively in response to the
randomly guided choices of the algorithm.
%We assume that the adversary knows edges
%in graph $G(V,E)$ and also he knows our algorithm, but the random bits that
%our algorithm uses at any time are not known to the adversary; so the
%adversary cannot choose updates adaptively.
This effectively means that we can assume that the adversary prepares
the full input (inserts and deletes) before the algorithm runs.

%---------------------------------------------------------------------------------------------------------------------------------------------------

\paragraph{$k$-Sparse Recovery Sketch and Graph Sketching.}
We first define an $\ell_0$-Sampler as follows.
%\label{sec:graph:sketch}
\begin{definition}[$\ell_0$-Sampler \cite{FIS05, mw10}]
\label{def:l0}
Let $0< \delta <1$ be a parameter.
Let $S= (a_1,t_1), \cdots, (a_i,t_i), \cdots$ be a stream of updates of an underlying vector $x\in \REAL^n$
where $a_i\in[n]$ and $t_i\in \REAL$. The $i$-th update $(a_i,t_i)$
updates  the $a_i$-th element of $x$ using
$x[a_i]=x[a_i]+t_i$. A $\ell_0$-sampler algorithm for $x\neq 0$ returns FAIL with probability at most $\delta$.
Else, with probability $1-\delta$, it returns an element $j\in [n]$ such that the probability that $j$-th element
is returned is $\Pr{j}= \frac{|x_j|^0}{\ell_0(x)}$.
\end{definition}

Here, $\ell_0(x)=(\sum_{i\in[n]} |x_i|^0)$ is the
(so-called) ``$0$-norm'' of $x$ that counts the number of non-zero entries.

\begin{lemma}[\cite{JST11}]
\label{lem:l0:sampling}
Let $0< \delta <1$ be a parameter.
There exists a linear sketch-based algorithm for $\ell_0$-sampling
using $O(\log^2 n\log\delta^{-1})$ bits of space.
\end{lemma}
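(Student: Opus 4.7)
The plan is to combine two classical ingredients: a geometric subsampling scheme that controls the number of surviving nonzero coordinates, and an $s$-sparse recovery sketch that exactly reconstructs any vector whose support has size at most $s$. Both are linear functions of the input vector $x$, so the whole data structure will be a linear sketch and therefore compatible with the insert/delete update semantics of Definition~\ref{def:l0}.

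First I would build the $s$-sparse recovery sketch $R_s$. Pick a prime $p=\Theta(n^c)$, a pairwise-independent hash $h\colon[n]\to[r]$ with $r=\Theta(s)$ buckets, and for each bucket $b\in[r]$ maintain two counters over $\mathbb{F}_p$, namely $\sigma_b=\sum_{h(i)=b} x_i$ and $\tau_b=\sum_{h(i)=b} i\cdot x_i$. A bucket with a unique active coordinate $i^\star$ satisfies $\tau_b/\sigma_b=i^\star$, revealing both identity and value; the entire support can then be recovered by peeling off such singletons and updating the remaining buckets iteratively. A standard analysis shows that if $\|x\|_0\le s$ this procedure succeeds with constant probability, and $O(\log\delta^{-1})$ parallel repetitions drive the overall failure probability below $\delta$. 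The per-instance bit cost is $O(s\log n)$.

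Next, I would layer $R_s$ with subsampling. Choose a second pairwise-independent hash $g\colon[n]\to[n]$ and, for each level $\ell\in\{0,1,\dots,\lceil\log n\rceil\}$, define $x^{(\ell)}$ by zeroing coordinates $i$ for which the top $\ell$ bits of $g(i)$ are not all zero, and feed $x^{(\ell)}$ into its own copy of $R_s$. By pairwise independence, $\mathbb{E}\|x^{(\ell)}\|_0=\ell_0(x)/2^\ell$ with a controlled variance, so setting $s$ to a suitable constant and choosing $\ell^\star$ with $2^{\ell^\star}\approx\ell_0(x)/s$ guarantees that with constant probability $1\le\|x^{(\ell^\star)}\|_0\le s$; the sketch $R_s$ at that level then decodes successfully, and from the decoded support we return a uniformly random element.

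The main obstacle is proving the output distribution is exactly uniform on $\mathrm{supp}(x)$ rather than merely close. The key observation is that, conditioned on $\ell^\star$ being the first level to yield a successful decoding, the surviving set is a pairwise-independent subsample of $\mathrm{supp}(x)$ chosen symmetrically in all nonzero coordinates, so every index in $\mathrm{supp}(x)$ appears in the decoded set with the same probability; returning a uniform element then gives probability $1/\ell_0(x)$ as required. The event of a ``good'' level holds only with constant probability, so amplifying to $1-\delta$ requires $O(\log\delta^{-1})$ independent parallel copies of the whole scheme. Multiplying the per-level bit cost $O(s\log n)=O(\log n)$ by the $O(\log n)$ levels and by $O(\log\delta^{-1})$ repetitions yields the claimed $O(\log^2 n\log\delta^{-1})$-bit bound.
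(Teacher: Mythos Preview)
The paper does not prove this lemma; it is quoted from~\cite{JST11} and used as a black box. Your sketch captures the standard architecture of $\ell_0$-samplers---geometric subsampling layered on top of a sparse-recovery primitive---and reaches the correct space bound, so at the level of an outline it is fine.

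Two technical points would need attention in a full proof. First, the peeling description for $R_s$ does not work as written: with a single hash $h$, each active coordinate lands in exactly one bucket, so recovering a singleton from bucket $b$ gives you nothing to subtract from any other bucket and no iteration is possible. You would need several independent hashes (as in an invertible Bloom lookup table), or, since $s=O(1)$ here, simply take $r=\Theta(s^2)$ buckets so that with constant probability every occupied bucket is a singleton. Second, the exact-uniformity argument by symmetry is not justified by pairwise independence alone: a family such as $z\mapsto az+b$ over a prime field is not invariant under permutations of the domain, so the marginal probability of outputting $i$ need not be identical for all $i\in\mathrm{supp}(x)$. The construction in~\cite{JST11} in fact delivers a sampler with small relative bias rather than perfect uniformity; meeting the exact requirement in Definition~\ref{def:l0} would need an additional device (for instance a random relabeling of $[n]$, or a rejection step) that your outline omits.
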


The concepts behind sketches for $\ell_0$-sampling can be generalized
to draw $k$ distinct elements from the support set of $x$:

\begin{definition}[$k$-sample recovery]
\label{def:ksparse}
A $k$-sample recovery algorithm recovers
$\min(k, \|x\|_0)$ elements from $x$ such that sampled index $i$ has
$x_i \neq 0$ and is sampled uniformly.
\end{definition}

Constructions of $k$-sample recovery mechanisms are known which
require space $\tilde{O}(k)$ and fail only with probability
polynomially small in $n$~\cite{Barkay:Porat:Shalem:12}.
We apply this algorithm to the neighborhood of vertices: for each
node $v$, we can maintain an instance of the $k$-sample recovery sketch (or algorithm) to the
vector corresponding to the row of the adjacency matrix for $v$.
Note that as edges are inserted or deleted, we can propagate these to
the appropriate $k$-sample recovery algorithms, without needing
knowledge of the full neighborhood of nodes.

%---------------------------------------------------------------------------------------------------------------------------------------------------

Specifically,
let $a_1,\cdots,a_v,\cdots,a_n$ be the rows of the adjacency matrix of
$G$, $\mathcal{A}_G$, where
$a_v$ encodes the neighborhood of a vertex $v\in V$. We define the sketch of $\mathcal{A}_G$ as follows.
Let $S$ be a stream of insertions and deletions of edges to an underlying graph $G=(V,E)$.
We sketch each row $a_u$ of $\mathcal{A}_G$ using the sketching matrix of
Lemma \ref{lem:l0:sampling}. Let us denote this sketch by $S_{u}$.
Since sketch $S$ is linear, the following operations can be done in the sketch space.

\begin{itemize}
\item \textsc{Query}$(S_u)$:  This operation queries sketch $S_u$ to
find a uniformly random neighbor of vertex $u$. Since $S_u$ is a $k$-sample
recovery sketch, we can query up to $k$ uniformly random neighbors of vertex $u$.
\item \textsc{Update}$(S_u, \pm(u,v))$:
This operation updates the sketch of a vertex $u$.
In particular, operation \textsc{Update}$(S_u, (u,v))$
means that edge $(u,v)$ is added to  sketch $S_u$.
And, operation \textsc{Update}$(S_u, -(u,v))$
means that edge $(u,v)$ is deleted from  sketch $S_u$.
\end{itemize}

%---------------------------------------------------------------------------------------------------------------------------------------------------

\section{Parameterized Streaming Algorithm (PSA) for $VC(k)$}
\label{sec:psa:vc}

In this section, we give a $(k^2, 2^{2k^2})$-PSA for $VC(k)$ along with a matching $\Omega(k^2)$ lower bound for the space complexity of $VC(k)$. First, we review the kernelization algorithm of Buss and Goldsmith~\cite{BG93}
since we use it in our PSA for $VC(k)$.

\subsection{Kernel for $VC(k)$}
\label{sec:kernel:VC}
%
%The kernelization algorithm of Buss and Goldsmith~\cite{BG93} for $VC(k)$ problem is as follows.
Let $(G,k)$ be the original instance of the problem which is initialized by graph
$G=(V, E)$ and parameter $k$. Let $d_v$ denote the degree of $v$ in $G$.
While one of the following rules can be applied, we follow it.

\begin{enumerate}
 \item \underline{There exists a vertex $v\in G$ with $d_v > k$}: Observe that if we do not include $v$ in the vertex cover, then we must include all of $\mathcal{N}_{v}$. Since $|\mathcal{N}_v|=d_v >k$, we must include $v$ in our vertex cover for now. Update $G\leftarrow G\setminus \{v\}$ and $k\leftarrow k-1$.

%Otherwise, in order  to cover the edges incident to $v$,
% all the neighbors of $v$ must be in any vertex cover of $G$ which
% gives a vertex cover of size more than $k$. Also, note that a vertex cover for instance
%$(G,k')$ may be formed by adding $v$ to the vertex cover of instance
%$(G\backslash \{v\},k'-1)$. Thus, we remove $v$ and all edges incident on $v$
%from $G$ and decrease $k'$ by one.
 %
 %
 \item \underline{There is an isolated vertex $v\in G$}: Remove $v$ from $G$, since
 $v$ cannot cover any edge.

\end{enumerate}

If neither of above rules can be applied, then we look at the number of edges of $G$. Note that the maximum degree of $G$ is now $\leq k$. Hence, if $G$ has a vertex cover of size $\leq k$, then the maximum number of edges in $G$ is $k^2$. If $|E|> k^2$, then we can safely answer NO. Otherwise we now have a kernel graph $G=(V, E)$ such that $|E|\leq k^2$. Since $G$ does not have any isolated vertex, we have $|V|\leq 2|E|\leq 2k^2$. Observe that we obtain the kernel graph $G$ in polynomial time.
%
%
%but the number of edges in $E$
% is more than $k'^2$. Since each remaining vertex in $G$ can cover
% at most $k'$ edges, $G$ does not have a vertex cover of size
% at most $k'$.
%
%If none of the above rules can be applied, then let the resulting graph be $G'=(V', E')$. Since
%
%
%such that $|E'|\leq $
%
%
%$|V|\le 2k$ and $|E|\le k'^2\le k^2$.

Now we show how to obtain an FPT algorithm for Parameterized Vertex Cover using the above kernelization: Enumerate all vertex subsets of
$G$ of size $k$, and checks whether any of them forms a vertex cover. The number of such subsets is $\binom{2k^2}{k}=2^{O(k^3)}$,
and checking whether a given subset is a vertex cover can be done in polynomial time.
We answer YES if any of the subsets of size $k$ forms a vertex cover, and NO otherwise. After obtaining the kernel graph in polynomial time, the running time of this algorithm is $2^{O(k^3)}\cdot n^{O(1)}$.

\subsection{$(k^2, 2^{2k^2})$-PSA for $VC(k)$}
\label{subsec:psa-vc}

We now prove Theorem~\ref{THM:VC:INSERTION}, which is restated below:

%\begin{thm:associativity}
%Lorem ipsum ...
%\end{thm:associativity}

%(in a later section)
\begin{reptheorem}{THM:VC:INSERTION}
Let $S$ be a stream of  insertions of edges of an underlying graph $G$. Then
there exists a \emph{deterministic} $(k^2,2^{2k^2})$-PSA for $VC(k)$ problem.
\end{reptheorem}
\begin{proof}
The proof is divided into three parts: first we describe the algorithm, analyze its complexity and then show its correctness.

\paragraph{Algorithm.} Let $S$ be a stream of insertions of edges of an underlying graph $G(V,E)$.
We maintain a maximal matching $M$ of stream $S$ in a greedy fashion.
Let $V_M$ be the vertices of matching $M$. For every matched vertex $v$,
we also store up to $k$ edges incident on $v$ in a set $E_M$. If at a timestamp $i$ of stream $S$
we observe that $|M|>k$, we report that the size of any vertex cover of
$G=(V,E)$ is more than $k$ and quit.
At the end of stream $S$, we run the kernelization algorithm of Section~\ref{sec:kernel:VC}
on instance $(G_M=(V_M,E_M),k)$.

%The cost analysis and proof of correctness of this algorithm are
%relatively straightforward, and are given in Appendix~\ref{app:vc:psa}.

\paragraph{Complexity of the Algorithm.}
We observe that the space complexity of the algorithm is $O(k^2)$. In fact,
for each vertex $v\in V_M$ assuming $|M|\le k$ we keep at most $k$ edges, thus
we need space of at most $2k\cdot k=2k^2$. If $|M|>k$, as soon as the size
of the matching $M$ goes beyond $k$ we quit the algorithm and so in this case
we also use space of at most $2k\cdot k=2k^2$.
The query time of this algorithm
is dominated by the time to extract the vertex cover of $G_M$ (and
also of $G$) using the brute-force search algorithm (if one exists), which is $2^{O(2k^2)}$.

%---------------------------------------------------------------------------------------------------------------------------------------------------

\paragraph{Correctness.}
We argue that
\begin{enumerate}
\item if the kernelization algorithm succeeds on
instance $(G_M=(V_M,E_M),k)$ and finds a vertex cover of size at most $k$ for $G_M$,
then that vertex cover is also a vertex cover of size at most $k$ for $G$.
\item On the other hand, if the kernelization algorithm reports that
instance $(G_M=(V_M,E_M),k)$ does not have a vertex cover of size at most $k$,
then instance $(G=(V,E),k)$ does not have a vertex cover of size at most $k$.
\end{enumerate}

First, note that trivially, any matching provides a lower bound on the
size of the vertex cover, and hence we are correct to reject if
$|M|>k$.

Otherwise, i.e., if $|M|\le k$, we write $d_v$ and $d'_v$ for the degree of $v$
in $G$ and $G_M$, respectively. We follow rules of the kernelization algorithm
on $G$ and $G_M$ in lockstep. Observe that since every edge $e\in E$ is incident
on at least one matched vertex $v\in V_M$, when an edge
$(u,v)\in E$ is not stored in $E_M$ it is in one of the following
cases.
\begin{enumerate}
\item $u\in V_M$ and $v\in V_M$:
Then, we must have $d'_u>k$ and $d'_v>k$ which means that $d_u>k$ and $d_v>k$.
\item Only $u\in V_M$:
Then, we must have $d'_u>k$ which means that $d_u>k$.
\item Only $v\in V_M$:
Then, we must have $d'_v>k$ which means that $d_v>k$.
\end{enumerate}

%---------------------------------------------------------------------------------------------------------------------------------------------------

Now, let us consider a set $X=\{v_{k},v_{k-1},\cdots,v_r\}$ (for $r\ge 0$) of vertices
that Rule $(1)$ of the kernelization algorithm for $G_M$ removes. According to Rule
$(1)$, for a vertex $v_{k'}\in X$ (for $k\ge k'\ge r$) with $d'_{v_{k'}} > k'$, we
remove $v_{k'}$ and all edges incident on $v_{k'}$ from $G_M$ and decrease $k'$ by one.
Note that $d'_{v_{k'}} > k'$ iff $d_{v_{k'}} > k'$. This is due to the fact that,
before we remove vertex $v_{k'}$ from $G_M$,
we have removed only  those neighbors of $v_{k'}$ that are matched and the number of
such vertices is less than $k-k'$. Thus, Rule $(1)$ of the kernelization algorithm can be applied
on $G$ and we remove $v_{k'}$ and all edges
incident on $v_{k'}$ from $G$ and decrease $k'$ by one.

%---------------------------------------------------------------------------------------------------------------------------------------------------

Next we consider Rule $(2)$. Assume in one step of the kernelization algorithm for $G_M$,
we have an isolated vertex $v\in G_M$. Observe that those neighbors of $v$ that we
have removed using Rule $(1)$ (before vertex $v$ becomes isolated) are all matched vertices
and the number of such vertices is less than $k$.
Moreover, $v$ never had any neighbor in $V\backslash V_M$ otherwise, $v$ is not isolated.
Thus, if $v$ has a neighbor $u$ in the remaining vertices of $V_M$,
edge $(u,v)$ must be in $E_M$ as we store up to $k$ edges incident on $v$ in
set $E_M$ which means $v$ is not isolated in $G_M$ and that is in contradiction
to our assumption that $v$ is isolated in $G_M$.
Since we run the kernelization algorithm on $G_M$ and
on $G$ for the vertices in set $X$, the same thing happens for $G$, i.e.,
$v$ in $G$ is also isolated. So, using Rule $(2)$, $v$ is removed from
$G_M$ iff $v$ is removed from $G$.

%---------------------------------------------------------------------------------------------------------------------------------------------------

 Now assume neither Rule $(1)$ nor Rule $(2)$ can be applied for $G_M$,
 but the number of edges in $E_M$ is more than $k'^2$. The same thing must
 happen for $E$. Therefore, $G_M$ and $G$ do not have a vertex cover of size
 at most $k$.

If none of the above rules can be applied for $G_M$, we have a kernel $(G_M,k')$
such that $|V_M|\le 2k$ and $|E_M|\le k'^2\le k^2$. Now observe that after removal
of all vertices of $X$ and their incident edges from $G$, for every remaining vertex $v$
in $G_M$, $d_v\le k'$; otherwise $d_v> k'$ and $d'_v> k'$; so we can apply Rule $(1)$ which is
in contradiction to our assumption that none of the above rules can be applied for $G_M$.
Therefore,  kernel $(G_M,k')$ is also a kernel for $(G,k')$ and this proves the correctness
of our algorithm.
\end{proof}

%---------------------------------------------------------------------------------------------------------------------------------------------------
\subsection{$\Omega(k^2)$ Lower Bound for $VC(k)$}
\label{sec:vc:bound}

Next, we prove Theorem~\ref{thm:vc:lower:bound} which is restated below:

\begin{reptheorem}{thm:vc:lower:bound}
Any (randomized) PSA for the $VC(k)$ problem requires $\Omega(k^2)$ space.
\label{thm:lower-bound-vc}
\end{reptheorem}
\begin{proof}

We will reduce from the \textsc{Index} problem in communication complexity:

\begin{center}\framebox{\begin{minipage}{0.8\columnwidth}
\underline{\textsc{Index}}\\
\emph{Input}: Alice has a string $X\in \{0,1\}^n$ given by
$x_{1}x_{2}\ldots x_n$. Bob has an index $\iota\in [n]$\\
\emph{Question}: Bob wants to find $x_\iota$, i.e., the $\iota^{th}$ bit of $X$.
\end{minipage}}\end{center}

It is well-known that there is a lower bound of $\Omega(n)$ bits
in the one-way randomized communication model for Bob to compute $x_i$~\cite{nisan}.
We assume an instance of the index problem where $n$ is a perfect
square, and let $k=\sqrt{n}$. Fix a canonical mapping from $[n]\rightarrow [k] \times
[k]$.
Consequently we can interpret the bit string as an adjacency matrix for a
bipartite graph with $k$ vertices on each side.

From the instance of \textsc{Index}, we construct an instance $G_X$
of Vertex Cover.
Assume that Alice has an algorithm which solves the $VC(k)$ problem
using $f(k)$ bits.
For each $i \in [k]$, we have vertices,
$v_i,v'_i, v''_i$, and $w_i, w'_i, w''_i$.
First, we insert the edges corresponding to the edge interpretation of
$X$ between nodes $v_i$ and $w_j$:
  for each $i, j \in [k]$, Alice adds the edge $(v_i, w_j)$ if
  the corresponding entry in $X$ is 1.
%Here, we assume that there is some fixed canonical mapping from $[n]
%\rightarrow (\sqrt{n} \times \sqrt{n})$.
Alice then sends the memory contents of her algorithm to Bob, using
$f(k)$ bits.

Bob has the index $\iota \in [n]$, which he interprets as $(I,J)$ under
the same canonical remapping to $[k] \times [k]$.
He receives the memory contents of the algorithm, and proceeds to add
edges to the instance of vertex cover.
For each $i \in [k], i \neq I$, Bob adds the edges $(v_i,
v'_i)$ and $(v_i, v''_i)$.
Similarly, for each $j \in [k], j \neq J$, Bob adds the edges
$(w_j, w'_j)$ and $(w_j, w''_j)$.

%We consider a two-dimensional version of this problem defined as follows:
%\begin{center}
%\noindent\framebox{\begin{minipage}{6.00in}
%\textsc{Matrix Index}\\
%\emph{Input}: Alice has a binary matrix $M_{k\times k}$ such that the $(i,j)$
%entry is given by $M_{i,j}$. Bob has a tuple $(I,J)\in [n]\times [n]$\\
%\emph{Question}: Bob wants to find the entry $M_{I,J}$.
%\end{minipage}}
%\end{center}

\junk{
It follows that there is a lower bound of $\Omega(k^2)$ bits for Bob
to compute the matrix entry $M_{I,J}$. The next theorem shows that
if there is a single pass streaming algorithm, say $\mathcal{A}$,
which solves the $VC(k)$ problem in $f(k)$ space,
then there is a protocol for the \textsc{Matrix Index} instance with $f(k)$ bits.
Hence the lower bound of $\Omega(k^2)$ transfers to the $VC(k)$ problem as well.

Given an instance $M_{k\times k}$ of \textsc{Matrix Index},
we construct an instance $G_M$ of Vertex Cover as follows:
\begin{itemize}
\item For each $i\in [k]$ introduce three vertices $v_i, v'_i, v''_i$
\item For each $j\in [k]$ introduce three vertices $w_i, w'_i, w''_i$
\item For each $i,j\in [k]$ Alice adds the edge $v_i - w_j$ if and only if $M(i,j)=1$
\item Alice then sends this edge set to Bob using $f(k)$ bits
\item Bob knows the index $(I,J)$
\item For each $i\in [k], i\neq I$ Bob adds the edges $v_i - v'_i$ and $v_i - v''_i$
\item For each $j\in [k], j\neq J$ Bob adds the edges $w_j - w'_j$ and $w_j - w''_j$
\end{itemize}
}

The next lemma shows that finding the minimum vertex cover of $G_X$
allows us to solve the corresponding instance of \textsc{Index}.
%$M_{k\times k}$ of \textsc{Matrix Index}.

\begin{lemma}
The minimum size of a vertex cover of $G_X$ is $2k-1$ if and only if
$x_\iota = 1$.
%$M(I,J)=1$.
\label{thm:lower-bound-vc-insertion}
\end{lemma}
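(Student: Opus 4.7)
My plan is to analyze the vertex cover of $G_X$ by decomposing it into independent ``padding gadgets'' plus the bipartite Alice-edges, and then show that the parity of the answer encodes the bit $x_\iota$. The key structural observation I will exploit is that for each $i \neq I$, the two edges $(v_i, v'_i)$ and $(v_i, v''_i)$ form a ``cherry'' (path of length 2) whose middle vertex $v_i$ is the unique efficient cover: any vertex cover must either contain $v_i$ (cost $1$) or contain both leaves $v'_i, v''_i$ (cost $2$). An analogous statement holds for each $j \neq J$ on the $w$-side. Since the $2(k-1)$ cherries are vertex-disjoint from each other, any vertex cover of $G_X$ already pays at least $2k-2$ just for these gadgets, with equality only when it picks exactly $\{v_i : i \neq I\} \cup \{w_j : j \neq J\}$ (up to trading a single $v_i$ for $\{v'_i, v''_i\}$ or similarly on the $w$-side, which never helps).

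Having established the lower bound of $2k-2$, I will next show the upper bound of $2k-1$ by exhibiting an explicit cover. Let $U = \{v_i : i \neq I\} \cup \{w_j : j \neq J\}$. This set covers every cherry edge by construction, and it covers every Alice-edge $(v_i, w_j)$ unless $i = I$ and $j = J$ simultaneously. Hence if $x_\iota = 0$ the set $U$ already is a vertex cover of size $2k-2 < 2k-1$, while if $x_\iota = 1$ we add either $v_I$ or $w_J$ to obtain a vertex cover of size $2k-1$. This gives the ``if'' direction (more precisely, it shows that $x_\iota = 1$ implies VC size $\le 2k-1$, and $x_\iota = 0$ implies VC size $\le 2k-2$).

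For the ``only if'' direction I need to show that $x_\iota = 1$ forces the minimum vertex cover to have size at least $2k-1$. Take any optimal vertex cover $C$. By the gadget argument, $C$ restricted to the vertices $\{v_i, v'_i, v''_i : i \neq I\}$ has size at least $k-1$, and likewise on the $w$-side; these contribute $\ge 2k-2$ vertices, none of which is in $\{v_I, v'_I, v''_I, w_J, w'_J, w''_J\}$. Since $x_\iota = 1$, the edge $(v_I, w_J)$ is present in $G_X$ and neither endpoint lies in the gadget portion of $C$, so $C$ must contain at least one further vertex (namely $v_I$ or $w_J$) to cover this edge, giving $|C| \ge 2k-1$. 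Combined with the contrapositive view of the upper bound (if the VC size is exactly $2k-2$ then no additional vertex is available to cover $(v_I, w_J)$, forcing $x_\iota = 0$), this completes the biconditional.

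The only subtle step is ensuring the lower bound is truly tight: I must rule out the possibility that ``cheating'' on a gadget (paying $2$ instead of $1$, or using $v_I$/$w_J$ which sit outside all gadgets) could somehow reduce the total cost. Because the gadgets are vertex-disjoint and the only edges touching $\{v_I, w_J, v'_I, v''_I, w'_J, w''_J\}$ are the Alice-edges incident to $v_I$ or $w_J$, swapping in $v_I$ or $w_J$ never saves on the gadget count; it only helps cover Alice-edges. This local-exchange argument is the one mildly finicky piece, but it is short, and with it the lemma follows. The resulting $\Omega(k^2) = \Omega(n)$ lower bound for \textsc{Index} then transfers to any PSA for $VC(k)$.
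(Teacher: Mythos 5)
Your proof is correct and takes essentially the same approach as the paper: both arguments hinge on the observation that the $2(k-1)$ vertex-disjoint cherries force $\geq 2k-2$ cover vertices none of which is $v_I$ or $w_J$, and that when $x_\iota=1$ one additional vertex is needed for the edge $(v_I,w_J)$, while when $x_\iota=0$ the set $\{v_i : i\neq I\}\cup\{w_j : j\neq J\}$ already covers everything. The paper phrases the lower bound via a WLOG degree-one exchange argument, whereas you phrase it as a direct disjoint-gadget counting bound, but this is only a cosmetic difference; the worry you flag at the end about ``cheating'' is already fully resolved by your own observation that the cherry vertices are disjoint from $\{v_I, v'_I, v''_I, w_J, w'_J, w''_J\}$, so no extra local-exchange lemma is needed.
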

\begin{proof}
%Given an instance $M_{k\times k}$ of \textsc{Matrix Index},
%we construct an instance $G_M$ of Vertex Cover as follows:
%\begin{itemize}
%\item For each $i\in [k]$ introduce three vertices $v_i, v'_i, v''_i$
%\item For each $j\in [k]$ introduce three vertices $w_i, w'_i, w''_i$
%\item For each $i,j\in [k]$ Alice adds the edge $v_i - w_j$ if and only if $M(i,j)=1$
%\item Alice then sends this edge set to Bob using $f(k)$ bits
%\item Bob knows the index $(I,J)$
%\item For each $i\in [k], i\neq I$ Bob adds the edges $v_i - v'_i$ and $v_i - v''_i$
%\item For each $j\in [k], j\neq J$ Bob adds the edges $w_j - w'_j$ and $w_j - w''_j$
%\end{itemize}
%Suppose $M(I,J)=0$.
Suppose $x_\iota = 0$.
Then it is easy to check that the set $\{v_i : i\in [k], i\neq I\}\cup \{w_j\ |\ j\in [k], j\neq J\}$
forms a vertex cover of size $2k-2$ for $G_X$.

Now suppose
%$M(I,J)=1$
$x_\iota = 1$,
and let $Y$ be a minimum vertex cover for $G_X$.
For any $i\in [k], i\neq I$ the vertices $v'_i$ and $v''_i$ have degree one in $G_X$.
Hence, without loss of generality, we can assume that $v_i\in Y$.
Similarly, $w_j\in Y$ for each $j\in [k], j\neq J$.
This covers all edges except $(v_{I},w_{J})$.
To cover this we need to pick one of $v_I$ or $w_J$,
which shows that $|Y|=2k-1$.
\end{proof}
Thus, by checking whether the output of $\mathcal{A}$ on the instance
$G_X$ of $VC(k)$ is $2k-1$ or $2k-2$, Bob can determine the index
$x_\iota$.
%$M_{I,J}$.
The total communication between Alice and Bob was $O(f(k))$ bits,
and hence we can solve the \textsc{Index} problem in $f(k)$ bits. Recall that the
lower bound for the \textsc{Index} problem is $\Omega(n)=\Omega(k^2)$, and hence we have $f(k)=\Omega(k^2)$.
\end{proof}

\begin{corollary}
Let $1>\epsilon>0$. Any (randomized) PSA that approximates $VC(k)$ within a relative error of
$\epsilon$ requires $\Omega(\frac{1}{\epsilon^2})$ space.
\end{corollary}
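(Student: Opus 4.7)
\begin{proofof}{Corollary}
The plan is to reuse the reduction from \textsc{Index} to Vertex Cover from the proof of Theorem~\ref{thm:vc:lower:bound}, but with the parameter $k$ chosen as a function of $\epsilon$ so that an $\epsilon$-relative approximation suffices to distinguish the two YES/NO cases that encode the unknown bit $x_\iota$.

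Concretely, I would fix $k = \lfloor 1/(4\epsilon) \rfloor$ and apply the reduction to an instance of \textsc{Index} on $n = k^2$ bits. As established in Lemma~\ref{thm:lower-bound-vc-insertion}, the constructed graph $G_X$ has minimum vertex cover of size exactly $2k-1$ if $x_\iota = 1$ and exactly $2k-2$ if $x_\iota = 0$. So if $\mathcal{A}$ is a randomized streaming algorithm that returns an estimate $\hat{v}$ with $|\hat{v} - \mathrm{OPT}| \le \epsilon \cdot \mathrm{OPT}$, then in the two cases the output lies in the intervals $[(2k-2)(1-\epsilon),(2k-2)(1+\epsilon)]$ and $[(2k-1)(1-\epsilon),(2k-1)(1+\epsilon)]$ respectively. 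A short calculation shows that these intervals are disjoint precisely when $(4k-3)\epsilon < 1$, which is satisfied by our choice $k = \lfloor 1/(4\epsilon)\rfloor$; hence Bob can decide $x_\iota$ by thresholding $\hat{v}$ at, say, $2k - \tfrac{3}{2}$.

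Following the communication argument of Theorem~\ref{thm:vc:lower:bound}, the total number of bits Alice sends to Bob equals the space of $\mathcal{A}$, so the one-way randomized communication lower bound $\Omega(n) = \Omega(k^2)$ for \textsc{Index} transfers to $\mathcal{A}$. Substituting $k = \Theta(1/\epsilon)$ gives the claimed space bound $\Omega(1/\epsilon^2)$. The only slightly delicate point I would double-check is the handling of the rounding in $k = \lfloor 1/(4\epsilon)\rfloor$ (so that $k$ is an integer and the two OPT values are genuinely separated by more than $\epsilon \cdot (2k-1) + \epsilon \cdot (2k-2)$), and carrying through the constant factor in the INDEX lower bound through to the final $\Omega(1/\epsilon^2)$ statement; both are routine.
\end{proofof}
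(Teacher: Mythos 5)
Your proof is correct and follows the same approach as the paper: rerun the $\textsc{Index}\to VC(k)$ reduction with $k = \Theta(1/\epsilon)$ so that an $\epsilon$-relative estimate distinguishes cover size $2k-1$ from $2k-2$, then import the $\Omega(k^2)$ communication bound. In fact your interval-disjointness calculation (requiring $(4k-3)\epsilon < 1$, hence $k = \lfloor 1/(4\epsilon)\rfloor$) is more careful than the paper's terse parameter choice $\epsilon = 1/(2k)$, which elides exactly this separation condition (and as written even compares $1/(2k-1)$ to $\epsilon$ with the inequality in the wrong direction); your version handles two-sided relative error cleanly.
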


\begin{proof}
Choose $\epsilon=\frac{1}{2k}$. Theorem~\ref{thm:lower-bound-vc-insertion}
shows that the relative error is at most $\frac{1}{2k-1}$, which is less than $\epsilon$.
Hence finding an approximation within $\epsilon$ relative error amounts to
finding the exact value of the vertex cover. The lower bound of $\Omega(k^2)$
from Theorem~\ref{thm:lower-bound-vc-insertion} translates to $\Omega(\frac{1}{\epsilon^2})$ here.
\end{proof}

%---------------------------------------------------------------------------------------------------------------------------------------------------

\section{Promised Dynamic Parameterized Streaming Algorithm (PDPSA) for $VC(k)$}
\label{sec:promised:dynamic}
In this section we prove Theorem~\ref{thm:vc:max:k} which is restated here.
We let $c$ be a constant so that for the length of stream $S$ we have $|S|\le n^c$.

\begin{reptheorem}{thm:vc:max:k}
Assume that at every timestep
%of dynamic parameterized stream $S$
the size of the vertex cover of underlying graph $G(V,E)$ is at most $k$.
There exists a $(k^2,2^{2k^2})$-PDPSA for $VC(k)$ problem with probability at least $1-\delta/n^c$,
where $\delta<1$ and $c$ is a constant.
\end{reptheorem}

%---------------------------------------------------------------------------------------------------------------------------------------------------

\subsection{Outline}
We develop a streaming algorithm that maintains a maximal matching of underlying graph $G(V,E)$
in a streaming fashion.
At the end of stream $S$ we run the kernelization algorithm of Section~\ref{sec:kernel:VC}
on the maintained maximal matching.
Our data structure to maintain a maximal matching $M$ of stream $S$ consists of two parts.

First, for each matched vertex $u$, we maintain an $x$-sample recovery sketch
$S_u$ of its incident edges, where
$x$ is chosen to be $\tilde{O}(k)$.
%$x=8ck\log(n/\delta)$ and $c$ is a constant.
Insertions of new edges are relatively easy to handle: we update the
matching with the edge if we can, and update the sketches if the new
edge is incident on matched nodes.
The difficulty arises with deletions of edges: we must try to ``patch
up'' the matching, so that it remains maximal, using only the stored
information, which is constrained to be $O(k^2)$.
The intuition behind our algorithm is that, given the promise, there
cannot be more than $k$ matched nodes at any time.
Therefore, keeping $\tilde{O}(k)$ information about the
neighborhood of each matched node can be sufficient to identify any
adjacent unmatched nodes with which it can be paired if it becomes
unmatched.
However, this intuition requires significant care and case-analysis to
put into practice.
The reason is that we need some extra book-keeping to record where
information is stored, since nodes are entering and leaving the
matching, and we do not necessarily have access to the full
neighborhood of a node when it is admitted to the matching.
Nevertheless, we show that additional book-keeping information of size
$O(k^2)$ is sufficient for our purposes, allowing us to meet the
$O(k^2)$ space bound.

This book-keeping comes in the form of another
%Second, we maintain a
data structure $\mathcal{T}$, that
stores a set of edges $(u,v)$ such that both endpoints are matched
(not necessarily to each other),
and $(u,v)$ has been inserted into sketches $S_u$ and $S_v$, but not deleted from them.
The size of $\mathcal{T}$ is clearly $O(k^2)$.
%We assume the insertion, deletion and query times of data structure %$\mathcal{T}$
%are all worst-case $O(\log k)$.
To implement $\mathcal{T}$, we can adopt any fast dictionary data
structure (AVL-tree, red-black tree, or hash-tables).

The update at a time $t$ is either the insertion or the deletion of
an edge $(u,v)$ for $1\le t\le |S|$ where $|S|\le n^c$ is
the length of stream $S$.
We continue our outline of the algorithm by describing the behavior
in each case informally, with the formal details spelled out in
subsequent sections.

\paragraph{Insertion of an Edge $(u,v)$ at Time $t$.}
When the update at time $t$ is insertion of an edge $(u,v)$
two cases can occur.
The first case is if at least one of $u$ and $v$ is matched,
we insert edge $(u,v)$ to the sketches of those vertices (from $u$ and $v$)
which are matched.
If both $u$ and $v$ are matched, we also insert edge $(u,v)$
to $\mathcal{T}$.
%The pseudocode of this case is given in {\sf InsertToDS$((u,v)$)}.

The second case occurs if both vertices $u$ and $v$ are exposed.
We add edge $(u,v)$ to the current matching and to $\mathcal{T}$, and initialize sketches $S_u$
and $S_v$ by insertion of edge $(u,v)$ to $S_u$ and $S_v$.
%The pseudocode of this case is given in {\sf AddEdgeToMatching$((u,v), t)$}.
However, we also need to perform some additional book-keeping updates
to ensure that the information is up to date.
%However, there is one issue that we need to deal with. Let us fix $u$.
Fix one of the nodes $u$.
There can be matched vertices, say $w\in V_M$, which are neighbors of $u$.
If previously an edge $(w, u)$ arrived while $u$ was not in the
matching, then 
%Upon insertion of edge $(w,u)$ 
we inserted $(w,u)$ to
sketch $S_w$, but $(w,u)$ was not inserted to sketch $S_u$
as $u$ was an exposed vertex at that time.
If at some subsequent point $w$ becomes an exposed vertex
and the matching edge $(u,v)$ is deleted
then vertex $u$ must have the option of
choosing an unexposed vertex $w$ to be rematched.
For that, we need to ensure that some information about the edge
$(w,u)$ is accessible to the algorithm.
%And for that, we need to
%insert edge $(w,u)$ to $S_u$ before $w$ becomes an exposed vertex.

%---------------------------------------------------------------------------------------------------------------------------------------------------

A first attempt to address this is to try interrogating each sketch
$S_w$ for all edges incident on $u$, say when $u$ is first added to
the matching.
%One approach to deal with this issue is for each vertex $w\in V_M$ check sketch $S_w$.
%If we find that edge $(w,u)$ is in sketch $S_w$, we then insert edge
%$(w,u)$ to sketch $S_u$.
However, this may not work while respecting the space bounds:
$w$ may have a large number of neighbors, much larger than the limit
$x$.
In this case, we can only use $S_w$ to recover a sample of the
neighbors of $w$, and $u$ may not be among them.
%Unfortunately, this simple approach does not work.
%The problem is $w$ might be a high-degree vertex. We say a vertex is
%\textit{high-degree} if its degree (at current edge set $E$) is more than $x$,
%otherwise it is \textit{low-degree}. Since we maintain $x$-sample recovery sketch $S_w$
%for every vertex $w\in V_M$, edge $(w,u)$ might not be among sampled edges stored in $S_w$.

To solve this problem we must wait until $w$
has low enough degree that we can retrieve its complete neighborhood
from $S_w$.
%becomes a low-degree matched vertex
%if then edge $(w,u)$ is still in $S_w$ we inform $u$ about $(w,u)$
%by inserting $(w,u)$ to $S_u$. In particular, once $w$
%becomes low-degree matched, we query set $E_u$ of edges
%which are still in sketch $S_w$. We can recover all of them as
%$S_w$ is a $x$-sample recovery sketch and $w$ is now a low-degree vertex.
At this point, we can use these recovered edges to update the sketches
of other matched nodes.
We use the structure $\mathcal{T}$ to track information about edges
on matched vertices that are already represented in sketches, to avoid
duplicate representations of an edge.
This is handled during the deletion of an edge, since this is the only
event that can cause the degree of a node $w$ to drop.
%For every edge $(w,z)\in E_u$ whose $z$ is a matched vertex,
%we check if edge $(w,z)$ is in $\mathcal{T}$. If so, $(w,u)\in S_u$ and we are done.
%If not, we insert $(w,z)$ to $S_z$ and $\mathcal{T}$.
%We give the implementation of this solution in Procedure {\sf AnnounceNeighborhood$(u)$}
%of Algorithm {\sf Deletion$((u,v), t)$} in below.

%---------------------------------------------------------------------------------------------------------------------------------------------------

\paragraph{Deletion of an Edge $(u,v)$ at Time $t$.}
When
the update at time $t$ is deletion of an edge $(u,v)$,
we have three cases to consider.
The first case is if only one of vertices $u$ and $v$ is matched, we
delete edge $(u,v)$ from the sketch of that matched vertex.

The second case is if both $u$ and $v$ are matched vertices,
but $(u,v)\notin M$.  We want to delete edge $(u,v)$ from sketches
$S_u$ and $S_v$, but $(u,v)$ might not be represented in both these sketches.
We need to find out if $(u,v)$ has been inserted to $S_u$ and $S_v$,
or only to one of them.
This can be found from $\mathcal{T}$.
If $(u,v)\in \mathcal{T}$, edge $(u,v)$ has been inserted to both $S_u$ and $S_v$.
So, we delete $(u,v)$ from both sketches safely.
Otherwise, i.e., if $(u,v)\notin \mathcal{T}$, $(u,v)$ has been
inserted to the sketch of only one of $u$ and $v$.
Assume that this is $u$.
To discover this we define {\em timestamps} for matched vertices.
The timestamp $t_i$ of a matched vertex $u$ is the (most recent) time
when $u$ was matched.
%We say timestamp $t_u$ of a matched vertex $u$ is the last time $t'\le t$ of stream $S$
%in which $u$ is matched and remains matched in time interval $[t',t]$.
We show that edge $(u,v)$ is only in sketch $S_u$ (not $S_v$)
if and only if $(u,v)\notin \mathcal{T}$ and $t_u<t_v$.
Therefore, if $t_u<t_v$, we delete $(u,v)$ from sketch $S_u$.
Otherwise, i.e., if $t_v<t_u$, we delete $(u,v)$ from sketch $S_v$.
Observe that if $t_u=t_v$, we have inserted $(u,v)$ to $S_u$ and
$S_v$ as well as $\mathcal{T}$.
%The pseudocode of these two cases is given in
%{\sf DeleteFromDS(edge $(u,v)$)}.

%---------------------------------------------------------------------------------------------------------------------------------------------------

The third case
%whose pseudocode is given in {\sf Rematch$((u,v),t)$}
is when $(u,v)\in M$.
We delete edge $(u,v)$ from sketches $S_u$ and $S_v$ as well as matching $M$ and $\mathcal{T}$.
To maintain the maximality of matching $M$ we need to see whether
we can rematch $u$ and $v$.
Let us consider $u$ (the case for $v$ is identical).
If $u$ has high degree,
we sample edges $(u,z)$ from sketch $S_u$.
Given the size of the sketch, we argue that there is high probability
of finding an edge to rematch $u$.
%Recall that the maximum size of a maximal matching is $k$.
%Also, note that $S_u$ is a $x$-sample recovery sketch which
%contains up to $x$ sampled edges incident on $u$.
%Thus, once we sample an edge $(u,z)$ from $S_u$, with high probability
%vertex $z$ is an exposed vertex and so we rematch $u$ to $z$.
Meanwhile, if $u$ is has low degree, then we can recover its full
neighborhood, and test whether any of these can match $u$.
%If $u$ is a low degree vertex, we query set $E_u$ of edges which are in sketch $S_w$.
%We then check whether there is an edge $(u,z)\in E_u$ whose $z$ is an exposed vertex.
%If so, we rematch $u$ to $z$.
Otherwise,
 $u$ is an exposed vertex, and its sketch is deleted.
We also remove all edges incident on $u$ from $\mathcal{T}$.
%But if not, we first announce $u$ as an exposed vertex and release sketch $S_u$.
%Since vertex $u$ is not a matched vertex anymore,
%we then remove all incident edges on $u$ from data structure $\mathcal{T}$.

%---------------------------------------------------------------------------------------------------------------------------------------------------

\subsection{Notations, Data Structures and Invariants}

We now describe and prove the properties of this process in full.
%Next we give the procedures of Algorithms {\sf Insertion$((u,v), t)$} and
%{\sf Deletion$((u,v), t)$} and the proof of correctness in detail.
We begin with notations, data structures and invariants.

\paragraph{Timestamp of a Vertex and an Edge.}
We define \textit{time $t$} corresponding to the $t$-th update
operation (insert or delete of an edge) in stream $S$.
%The $t$-th update operation can be insertion or deletion of an edge
%$(u,v)$.
We define the {\em timestamp of a matched vertex} as follows.
%Suppose we are in a time $t$ of stream $S$ (i.e., we see the $t$-th
%update operation of stream $S$).
Let $u$ be a matched vertex at time $t$.
Let $t'\le t$ be the greatest time such that $u$ was unmatched before
time $t'$ and is matched in the interval $[t',t]$.
Then we say the timestamp $t_u$ of vertex $u$ is $t'$ and we set
$t_u=t'$.
If at time $t$, vertex $u$ is exposed we define $t_u=\infty$,
i.e. a value larger than any timestamp.
% where $|S|$
%is the length of stream $S$, i.e., we let $t_u$ be a big number.

We define the {\em timestamp of an edge} as follows.
%Suppose we are in a time $t$ of stream $S$.
Let $E_t$ denote the set of edges present at time $t$, i.e. which have
been inserted without a corresponding deletion.
%Let $(u,v)\in E$ be a present edge at time $t$.
Let $t'\le t$ be the last time in which the edge $(u,v) \in E_t$ is inserted
but not deleted in the interval $[t',t]$.
Then we say the timestamp $t_{(u,v)}$ of edge $(u,v)$ is $t'$ and we
set $t_{(u,v)}=t'$.
If at timestamp $t$, edge $(u,v)$ is deleted we define $t_{(u,v)}=\infty$.
%i.e., we let $t_{(u,v)}$ be a big number.

%---------------------------------------------------------------------------------------------------------------------------------------------------

\paragraph{Low and High Degree Vertices.}
Let $x=8ck\log(n/\delta)$, for constant $c$ (where, we assume that
$|S| = O(n^c)$).
At time $t$ we say a vertex $u$ is a \textit{high-degree} vertex if $d_u>x$;
otherwise, if $d_u\le x$, we say $u$ is a \textit{low-degree} vertex.

%---------------------------------------------------------------------------------------------------------------------------------------------------

\paragraph{Data Structures.}
For every matched vertex $u$, i.e., $u\in V_M$, we maintain an
$x$-sample recovery sketch $S_u$ of edges incident on $u$.
We also maintain a dictionary data structure $\mathcal{T}$
of size $O(k^2)$.
We assume the insertion, deletion and query times of
$\mathcal{T}$ are all worst-case $O(\log k)$.
%For that we can use AVL-tree, red-black tree and hash tables.
At every time $t$, $\mathcal{T}$ stores edges $(u,v)$ for which
%both conditions hold:
%\begin{itemize}
%\item
vertices $u$ and $v$ are matched at time $t$ (not necessarily
  to each other); and also
%
%\item
edge $(u,v)$ is represented in both sketches $S_u$ and $S_v$,
  i.e.
there is a time $t'\le t$ at which we  invoked
\textsc{Update}$(S_{u}, (u,v))$, but there is no time
in interval $[t',t]$ in which we have invoked
\textsc{Update}$(S_{u}, -(u,v))$, and symmetrically for $S_v$.
%\end{itemize}

%---------------------------------------------------------------------------------------------------------------------------------------------------

\paragraph{Sketched Neighbors of a Vertex.}
%Let us fix a time $t$ of stream $S$.
Let $u$ be a matched vertex at some time $t$, i.e., $u\in V_M$.
Recall that $\mathcal{N}_u=\{v\in V: (u,v)\in E_t\}$ is the full neighborhood of $u$ at time $t$.
Let $\mathcal{N}'_u \subseteq \mathcal{N}_u$
be the set of neighbors of $u$ that up to time $t$
are inserted to $S_u$ but not deleted from $S_u$,
that is for every vertex $v\in \mathcal{N}'_u$ we have invoked
\textsc{Update}$(S_{u}, (u,v))$ at a time $t'\le t$ but have not invoked
\textsc{Update}$(S_{u}, -(u,v))$ in time interval $[t',t]$.
We call the vertices in $\mathcal{N}'_u$ the sketched neighbors of vertex $u$.
%The next lemma shows that we can recover $\mathcal{N}'_u$ of
%a low-degree matched vertex $u$ with high probability.
Note that we can recover $\mathcal{N}'_u$ {\em exactly} when
$|\mathcal{N}'_u| < x$, per Definition~\ref{def:ksparse}.

%---------------------------------------------------------------------------------------------------------------------------------------------------

\junk{
\begin{lemma}
\label{lem:low:degree:recover:set}
Let $t$ be a time of stream $S$.
Let $u$ be a low-degree matched vertex at time $t$ that is $u\in V_M$ and $d_u\le x$.
Let $\mathcal{N}'_u$ be the set of neighbors of $u$ that up to time $t$
are inserted to sketch $S_u$ but not deleted from $S_u$.
By querying $S_u$ and with probability at least $1-\frac{\delta}{2n^c}$,
we can recover $\mathcal{N}'_u$.
\end{lemma}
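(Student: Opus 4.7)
The plan is to observe that the lemma is essentially a direct consequence of the guarantee of the $x$-sample recovery sketch from Definition~\ref{def:ksparse} together with the size constraint imposed by $u$ being a low-degree vertex. The key point is that $\mathcal{N}'_u$ is precisely the support set of the vector that $S_u$ sketches, because $S_u$ is a linear sketch over the adjacency vector of $u$ restricted to the edge insertions/deletions that we forward to $S_u$.

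First I would bound $|\mathcal{N}'_u|$. By construction $\mathcal{N}'_u \subseteq \mathcal{N}_u$, so $|\mathcal{N}'_u| \le d_u$. Since $u$ is low-degree at time $t$, we have $d_u \le x$, which gives $|\mathcal{N}'_u| \le x$. Consequently the number of non-zero coordinates of the vector being sketched by $S_u$ is at most $x$, so a single invocation of the $x$-sample recovery algorithm on $S_u$ returns $\min(x, |\mathcal{N}'_u|) = |\mathcal{N}'_u|$ distinct non-zero indices — i.e.\ it recovers $\mathcal{N}'_u$ exactly rather than merely a random sample.

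Next I would handle the probability bound. The $k$-sample recovery construction of Barkay--Porat--Shalem succeeds with probability polynomially small in $n$ when the space is $\tilde{O}(k)$; concretely, one obtains failure probability at most $n^{-\alpha}$ for any desired constant $\alpha$ by paying an $O(\log n)$ factor in space. With the choice $x = 8ck\log(n/\delta)$ I would instantiate this construction so that its failure probability on a single query is at most $\delta/(2n^c)$. This is the reason the parameter $x$ in the data structure carries the extra $\log(n/\delta)$ factor: it buys the desired tail bound while keeping the per-sketch space at $\tilde{O}(k)$.

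The only subtlety — and the one step that needs a careful sentence — is to justify that it is really $\mathcal{N}'_u$ (and not $\mathcal{N}_u$) that the sketch recovers. This follows from linearity of the sketch and the bookkeeping convention: the updates forwarded to $S_u$ are exactly the $+1/-1$ events \textsc{Update}$(S_u,\pm(u,v))$, so the coordinate of the underlying vector corresponding to $v$ is non-zero at time $t$ iff there is a $t' \le t$ at which we invoked \textsc{Update}$(S_u,(u,v))$ without a matching \textsc{Update}$(S_u,-(u,v))$ in $[t',t]$ — which is exactly the defining condition for $v \in \mathcal{N}'_u$. Combining these three ingredients yields the stated recovery with probability at least $1 - \delta/(2n^c)$. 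I do not anticipate any real obstacle; the lemma is a bookkeeping wrapper around the guarantee of Definition~\ref{def:ksparse}.
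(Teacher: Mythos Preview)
Your proposal is correct and follows essentially the same approach as the paper: bound $|\mathcal{N}'_u| \le d_u \le x$ via $\mathcal{N}'_u \subseteq \mathcal{N}_u$, and then invoke the sketch's recovery guarantee on a support of size at most $x$. The one cosmetic difference is in how the $\delta/(2n^c)$ failure bound is obtained: the paper implements $S_u$ explicitly as $x$ independent $\ell_0$-samplers (Lemma~\ref{lem:l0:sampling}), union-bounds their individual failure events to $x\delta$, and then rescales $\delta \to \delta/(2xn^c)$, also working out the resulting space $O(x\log^2 n\log(n/\delta))$; you instead appeal directly to the Barkay--Porat--Shalem $k$-sample recovery black box and its polynomially-small failure probability. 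Both are fine; the paper's version is more self-contained about the space accounting, while yours is cleaner and additionally spells out why the recovered set is $\mathcal{N}'_u$ rather than $\mathcal{N}_u$.
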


\begin{proof}
From Definition \ref{def:l0}, a $\ell_0$-sampler returns an element $i\in [n]$
with probability $\Pr{i}= \frac{|x_i|^0}{\ell_0(x)}$ and returns FAIL with
probability at most $\delta$. Using Lemma \ref{lem:l0:sampling}, there exists
a linear sketch-based algorithm for $\ell_0$-sampling using
$O(\log^2 n\log\delta^{-1})$ bits of space.

Sketch $S_u$ is a $x$-sample recovery sketch which means
we can recover $\min(x,d_u)$ of items (here edges) that are
inserted into sketch $S_u$.  We can think of $S_u$
as $x$ instances of a $\ell_0$-sampler. Note the in this way the space
to implement $S_u$ would be $x$ times the space to implement a $\ell_0$-sampler
which is $O(x\log^2 n\log\delta^{-1})$ bits of space.

Each one of these $x$ $\ell_0$-samplers
returns FAILS with probability at most $\delta$. Using a union bound the probability
that $S_u$ returns FAIL is $x\delta$. We replace $\delta$ by
$\frac{\delta}{2xn^c}$ for a constant $c$. Therefore, the probability
that sketch $S_u$ returns FAIL is $\frac{\delta}{2n^c}$. Therefore, the overall space
of $S_u$ would be
$O(x\log^2 n\log(xn^c/\delta))=O(cx\log^2 n(\log(n/\delta)+\log\log(n/\delta)))=
O(x\log^2 n\log(n/\delta))$
as $x=8ck\log(n/\delta)$ and $k\le n$.

Let us condition on the event that $S_u$ does not return FAIL
which happens with probability $1-\frac{\delta}{2n^c}$.
We query $S_u$ and it returns $\min(x,d_u)$ edges.
Since $u$ is a low degree vertex ($d_u=|\mathcal{N}_u|\le x$),
we have $|\mathcal{N}'_u|\le |\mathcal{N}_u|\le x$.
Therefore, we recover all edges incident on $u$
which are added to sketch $S_u$ up to time $t$
but are not deleted from $S_u$. Thus, we can recover $\mathcal{N}'_u$.
\end{proof}
}

%---------------------------------------------------------------------------------------------------------------------------------------------------

\paragraph{Invariants.}
Recall that at every time $t$ of stream $S$, set $E_t$ is the set of edges which
are inserted but not deleted up to time $t$.
We develop a dynamic algorithm that at every time $t$ of stream $S$
maintains the following three invariants.

\begin{center}
\shadowbox{
%\hspace{-0.1cm}
\parbox{\columnwidth} {
\begin{itemize}
\item \textbf{Invariant 1:} For every edge $(u,v)\in E_t$ at time $t$ we have at least one of $v\in \mathcal{N}'_u$ or $u\in \mathcal{N}'_v$.
\item Let $(u,v)\in E_t$ be an edge at time $t$ such that $u,v\in V_M$.
At time $t$,
     \begin{itemize}
     \item \textbf{Invariant 2:}  $u\notin \mathcal{N}'_v$ {\bf iff } $t_u < t_v$ and $(u,v) \notin\mathcal{T}$.
     \item \textbf{Invariant 3:}   $v\in \mathcal{N}'_u$ and $u\in \mathcal{N}'_v$ {\bf iff } $(u,v) \in\mathcal{T}$.
     \end{itemize}
\end{itemize}
}}
\end{center}

Observe that these invariants imply that at any time
$|\mathcal{T}| < 2k^2$.
That is, since $\mathcal{T}$ only holds edges such that both ends are
matched, and we assume that the matching has at most $2k$ nodes, then
the number of edges can be at most ${ {2k} \choose 2} < 2k^2$.

\junk{
\begin{lemma}
\label{lem:size:tree}
Suppose Invariant $3$ holds at a time $t$.
Then, at time $t$, we have $|\mathcal{T}|\le (2k)^2$.
\end{lemma}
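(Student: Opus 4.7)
The plan is to bound $|\mathcal{T}|$ by a simple counting argument over pairs of matched vertices. First I would invoke the promise of the model: at time $t$ the underlying graph $G(V,E)$ has a vertex cover of size at most $k$. Since any matching $M$ in a graph satisfies $|M| \le \tau(G)$, where $\tau(G)$ denotes the minimum vertex cover number (every matched edge forces at least one of its endpoints into any vertex cover, and these endpoints are distinct across matched edges), the maximal matching $M$ maintained by the algorithm at time $t$ has $|M|\le k$, and hence the set of matched vertices satisfies $|V_M|\le 2k$.

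Next I would read off from the definition of $\mathcal{T}$ that every element stored in $\mathcal{T}$ is an edge $(u,v)$ whose two endpoints are both in $V_M$ at time $t$. Invariant $3$ is invoked here to rule out duplicate or phantom entries: an unordered pair $\{u,v\}\subseteq V_M$ is stored in $\mathcal{T}$ exactly when $v\in \mathcal{N}'_u$ and $u\in \mathcal{N}'_v$, so $\mathcal{T}$ is in bijection with a subset of the unordered pairs of matched vertices (specifically those representing edges of $E_t$ currently sketched on both sides).

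Combining the two observations yields $|\mathcal{T}| \le \binom{|V_M|}{2}\le \binom{2k}{2} = k(2k-1) < (2k)^2$, which is the claimed bound. The main (and only) non-routine step is the first one, translating the promise on the vertex cover of $G$ into a bound on $|V_M|$; the remainder is elementary counting. No case analysis on how edges enter or leave $\mathcal{T}$ across time is required, because Invariant $3$ already pins down the contents of $\mathcal{T}$ at the fixed time $t$ under consideration.
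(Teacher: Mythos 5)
Your proof is correct and takes essentially the same approach as the paper: use the promise to bound $|M|\le k$ (hence $|V_M|\le 2k$), then observe via Invariant~3 and the definition of $\mathcal{T}$ that $\mathcal{T}$ only contains edges with both endpoints in $V_M$, so $|\mathcal{T}|\le\binom{2k}{2}<(2k)^2$. The paper leaves the ``matching size $\le$ vertex cover size'' step implicit, which you spell out; otherwise the two arguments coincide.
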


\begin{proof}
At every time of stream $S$, for the size of a maximal matching $M$ we have $|M|\le k$.
Thus $|V_M|\le 2k$.
Using Invariant $3$, every edge $(u,v)$ at time $t$ is in $\mathcal{T}$ if
$u$ and $v$ are both matched. Therefore, $|\mathcal{T}|\le \frac{2k(2k+1)}{2}\le (2k)^2$
for $k>0$.
\end{proof}
}

%---------------------------------------------------------------------------------------------------------------------------------------------------

%We first develop several handy tools and basic primitives.

\subsection{Adding an Edge to Matching $M$}
The first primitive that we develop is Procedure
{\sf AddEdgeToMatching$((u,v),t)$}.
This procedure first adds edge $(u,v)$ to matching $M$ and data structure $\mathcal{T}$.
%Then if vertex $u$ is not already in set $V_M$ of matched vertices,
Then it inserts vertex $u$ to $V_M$, sets timestamp $t_u$ to the current time
$t$, and initializes sketch $S_u$ by inserting
edge $(u,v)$ to sketch $S_u$.
It also repeats these steps for $v$.
%The pseudocode of this procedure is given below.
We invoke this procedure in Procedures {\sf Rematch$((u,v),t)$}
and {\sf Insertion$((u,v), t)$}.

%---------------------------------------------------------------------------------------------------------------------------------------------------
\begin{center}
%\shadowbox{
\fbox{
\parbox{0.95\columnwidth} {
\raggedright
%\shadowbox{Insertion(edge $(u,v)$, timestamp $t$)}
\underline{\textbf{Insertion$((u,v), t)$}}
\vspace{-0.2cm}
\begin{enumerate}
\item  If $u \notin V_M$ and $v \notin V_M$, then {\sf AddEdgeToMatching$((u,v), t)$}.
\item  Else {\sf InsertToDS$((u,v)$)}.
\end{enumerate}
\vspace{-0.2cm}
}}
\end{center}

%---------------------------------------------------------------------------------------------------------------------------------------------------

\begin{center}
\fbox{
\parbox{0.95\columnwidth} {
%\shadowbox{AddEdgeToMatching$((u,v), t)$}
\underline{\textbf{AddEdgeToMatching$((u,v), t)$}}
%\vspace{-0.3cm}
\begin{enumerate}
\item Add edge $(u,v)$ to $M$ and $\mathcal{T}$.
\item For $z \in \{u,v\}$ %if $z\notin V_M$
    \begin{enumerate}
     \item $V_M \gets V_M \cup \{z\}$
     \item $t_z \gets t$
%Add vertex $z$ to $V_M$ and set $t_z$ of $z$ to $t$.
     \item Initialize sketch $S_z$ with \textsc{Update}$(S_{z}, (u,v))$.
    \end{enumerate}
\end{enumerate}
}}
\end{center}

%---------------------------------------------------------------------------------------------------------------------------------------------------

\begin{lemma}
 \label{lem:timestamps:case:1}
Let $t$ be a time when we invoke Procedure {\sf AddEdgeToMatching$((u,v), t)$}.
Suppose before time $t$, Invariants $1$, $2$ and $3$ hold.
Then, Invariants $1$, $2$ and $3$ hold after time $t$.
%%%Moreover, $\max(t_u,t_v)\le t_{(u,v)}$.
%%% GRC: I don't think this is needed.
\end{lemma}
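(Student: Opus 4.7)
The plan is to verify each of the three invariants after the operation by a case analysis on the type of edge under consideration. Throughout, the only changes caused by \textsf{AddEdgeToMatching}$((u,v),t)$ are: (i) $u$ and $v$ enter $V_M$ with timestamps $t_u=t_v=t$, (ii) sketches $S_u$ and $S_v$ are created containing only the edge $(u,v)$, and (iii) $(u,v)$ is added to $\mathcal{T}$. In particular, for any vertex $w\notin\{u,v\}$, neither $S_w$, $t_w$, nor the entries of $\mathcal{T}$ not involving $u$ or $v$ are modified. So the verification reduces to checking edges in $E_t$ that are incident on $u$ or $v$ (plus the new edge itself).

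For \textbf{Invariant 1}, I will split the incident edges into three classes. First, the edge $(u,v)$ is inserted into both $S_u$ and $S_v$, so $v\in\mathcal{N}'_u$ and $u\in\mathcal{N}'_v$. Second, for any pre-existing edge $(u,w)\in E_{t-1}$ with $w\in V_M$, the pre-invariant gives at least one of $w\in\mathcal{N}'_u$ or $u\in\mathcal{N}'_w$; since $u$ was exposed before time $t$ (and therefore had no sketch), we must have $u\in\mathcal{N}'_w$, and nothing about $S_w$ changes, so this still holds. Third, for any pre-existing edge $(u,w)$ with $w\notin V_M$, both endpoints were exposed before, which would contradict the pre-invariant — so this class is empty. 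The case for edges incident on $v$ is symmetric.

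For \textbf{Invariants 2 and 3}, the only new ``both-endpoints-matched'' edges are the new edge $(u,v)$ itself and any edge $(u,w)$ (or $(v,w)$) with $w\in V_M$ that existed before time $t$. For $(u,v)$: we have $v\in\mathcal{N}'_u$, $u\in\mathcal{N}'_v$, $(u,v)\in\mathcal{T}$, and $t_u=t_v$, so both invariants hold vacuously/trivially (the ``iff'' on each side is true for Invariant 3 and false for Invariant 2). For an edge $(u,w)$ with $w\in V_M$: as argued above, $u\in\mathcal{N}'_w$ but $w\notin\mathcal{N}'_u$ (since $S_u$ is freshly initialized with only $(u,v)$); also $(u,w)\notin\mathcal{T}$ since we did not add it; and $t_w<t=t_u$ since $w$ was already matched strictly before time $t$. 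Thus for Invariant 2 both sides are true ($w\notin\mathcal{N}'_u$, and $t_w<t_u$ with $(u,w)\notin\mathcal{T}$), and for Invariant 3 both sides are false. Edges incident on $v$ are handled symmetrically, and edges not incident on $u$ or $v$ are entirely untouched, so their invariants are preserved by hypothesis.

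The only subtlety — and what I would emphasize as the main bookkeeping point — is the third class of edges in the Invariant 1 case, namely verifying that there cannot be a pre-existing edge $(u,w)$ with $w\notin V_M$ at time $t-1$; this is ruled out by the pre-invariant since two exposed endpoints would leave the edge unrepresented in any sketch. Once this is observed, everything else follows directly from the mechanical description of \textsf{AddEdgeToMatching} and the fact that $t_u=t_v=t$ is strictly larger than all prior timestamps.
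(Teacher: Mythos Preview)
Your argument is correct and in fact considerably more thorough than the paper's own proof. The paper gives only a two-line sketch: it checks Invariant~1 for the single edge $(u,v)$ (since it is inserted into $S_u$ and $S_v$) and then asserts that ``since $(u,v)\in M$, nothing changes for Invariants~2 and~3.'' You fill in what the paper leaves implicit, namely the verification for every pre-existing edge $(u,w)$ with $w\in V_M$: such edges now have both endpoints matched and so fall under Invariants~2 and~3 for the first time, and you handle them cleanly via $t_w<t=t_u$ together with $(u,w)\notin\mathcal{T}$ and $w\notin\mathcal{N}'_u$. Your use of the pre-Invariant~1 to rule out any pre-existing edge between two exposed vertices is exactly the maximality of $M$, which the paper relies on elsewhere but does not invoke here.

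One small caveat worth flagging: your premise that $S_u$ and $S_v$ are freshly created containing only $(u,v)$ (and hence that $u$ ``had no sketch'' before) is valid when \textsf{AddEdgeToMatching} is called from \textsf{Insertion}. When it is called from within \textsf{Rematch}, one endpoint may already carry a non-empty sketch, so the deduction ``$u$ exposed $\Rightarrow$ $u$ has no sketch $\Rightarrow$ $w\notin\mathcal{N}'_u$'' would need adjustment. The paper, however, handles that composite situation separately in its lemma on \textsf{Rematch}, so for the present lemma your reading is the natural one.
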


\begin{proof}
Recall that $t_u$ is the last time $t'\le t$ such that $u$ before time $t'$ was
unmatched and is matched in the interval $[t',t]$.
Similarly, $t_v$ is the last time $t'\le t$ such that $v$ before time $t'$ was
unmatched and is matched in the interval $[t',t]$.

In Procedure {\sf AddEdgeToMatching$((u,v), t)$} we insert $(u,v)$ to sketches
$S_u$ and/or $S_v$ if the edge has not been inserted to these sketches.
So, at time $t$, Invariant $1$ for edge $(u,v)$ holds.
Since $(u,v)\in M$, nothing changes for Invariants $2$ and $3$.
Therefore, if Invariants $2$ and $3$ hold at time $t-1$,
they also hold at time $t$.
%Moreover, $\max(t_u,t_v)\le t_{(u,v)}$:
%%In fact,
%since we insert edge $(u,v)$ at time $t$, we have $t_{(u,v)}=t$, and
%since $
%Let us fix $u$. Vertex $u$ at time $t$ is either matched or exposed.
%If $u$ is matched, then $t_u<t_{(u,v)}$. If $u$ is exposed, $u$ is matched in Procedure
%{\sf AddEdgeToMatching$((u,v), t)$}, which means $t_u=t_{(u,v)}$.
%The same things is correct for $v$.
%Therefore, we have $\max(t_u,t_v)\le t_{(u,v)}$.
\end{proof}

%---------------------------------------------------------------------------------------------------------------------------------------------------

\subsection{Maintenance of Data Structure $\mathcal{T}$}
To maintain data structure $\mathcal{T}$ at every time $t$ of stream $S$,
we develop two procedures to handle insertions and deletions to the
structure.
%and we explain them in detail in this section.
%The first procedure is {\sf InsertToDS$((u,v))$} and does the following.
If $u$ and $v$ are  matched vertices,
Procedure {\sf InsertToDS($(u,v)$)} inserts edge $(u,v)$
to sketches $S_u$ and $S_v$ as well as to data structure $\mathcal{T}$.
If only one of $u$ and $v$ is matched, we insert $(u,v)$
to the sketch of the matched vertex.
We invoke this procedure upon insertion of an arbitrary edge $(u,v)$
inside Procedure {\sf Insertion$((u,v),t)$}.

%---------------------------------------------------------------------------------------------------------------------------------------------------

\begin{center}
\fbox{
\parbox{0.95\columnwidth} {
\underline{\textbf{InsertToDS$((u,v))$}}
\begin{enumerate}
\item If $u\in V_M$ and $v\in V_M$ then
%\begin{enumerate}
%   \item \textsc{Update}$(S_{u}, (u,v))$ and \textsc{Update}$(S_{v}, (u,v))$.
%   \item 
insert edge $(u,v)$ into $\mathcal{T}$.
%\end{enumerate}
\item If $u\in V_M$ then \textsc{Update}$(S_{u}, (u,v))$.
\item If $v\in V_M$ then \textsc{Update}$(S_{v}, (u,v))$.
\end{enumerate}
}}
\end{center}

%---------------------------------------------------------------------------------------------------------------------------------------------------

\begin{lemma}
 \label{lem:timestamps:case:2}
Let $t$ be a time of stream $S$ when we invoke Procedure {\sf InsertToDS($(u,v)$)}.
Suppose before time $t$, Invariants $1$, $2$ and $3$ hold.
Then, Invariants $1$, $2$ and $3$ hold after time $t$.
\end{lemma}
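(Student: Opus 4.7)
The plan is to exploit the very limited scope of Procedure {\sf InsertToDS$((u,v))$}: it touches only $S_u$, $S_v$, and $\mathcal{T}$, and in each of these it only records information about the single edge $(u,v)$. No sketch is updated with any other edge, no other pair is inserted into $\mathcal{T}$, and crucially no timestamps are modified (timestamps are only set inside {\sf AddEdgeToMatching}). Consequently, for every edge $(a,b) \in E_t$ with $\{a,b\} \neq \{u,v\}$, the truth of ``$b \in \mathcal{N}'_a$'', ``$a \in \mathcal{N}'_b$'', and ``$(a,b) \in \mathcal{T}$'' is identical before and after time $t$, so Invariants~1--3 for such $(a,b)$ transfer directly from the inductive hypothesis. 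It therefore suffices to verify the three invariants for the new edge $(u,v)$ itself.

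I would then split into cases on the membership of $u$ and $v$ in $V_M$ at time $t$. Note that {\sf InsertToDS} is only invoked from {\sf Insertion$((u,v),t)$} in the branch where at least one endpoint is already matched, so the case $u,v \notin V_M$ does not arise. In the first case, $u,v \in V_M$: the procedure inserts $(u,v)$ into $\mathcal{T}$ and invokes both \textsc{Update}$(S_u,(u,v))$ and \textsc{Update}$(S_v,(u,v))$, so after the update $v \in \mathcal{N}'_u$, $u \in \mathcal{N}'_v$, and $(u,v) \in \mathcal{T}$. Invariant~1 is immediate; Invariant~3 holds because both sides of its biconditional are true; and Invariant~2 holds because both sides of its biconditional are false (the left side fails since $u \in \mathcal{N}'_v$; the right side fails since $(u,v) \in \mathcal{T}$).

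In the second case, $u \in V_M$ and $v \notin V_M$: only \textsc{Update}$(S_u,(u,v))$ fires, giving $v \in \mathcal{N}'_u$, which establishes Invariant~1 for $(u,v)$. Invariants~2 and~3 are vacuous here because they are conditioned on both endpoints lying in $V_M$. The symmetric third case $v \in V_M$, $u \notin V_M$ is handled identically by swapping the roles of $u$ and $v$.

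The only genuinely delicate point I would be careful about is the interaction with Invariant~2 at future times: since Invariant~2 relates presence in $\mathcal{N}'_v$ to the ordering of timestamps $t_u$ and $t_v$ together with membership in $\mathcal{T}$, I must confirm that when exactly one endpoint (say $u$) is matched at time $t$, leaving $(u,v)$ only in $S_u$ and not inserting it into $\mathcal{T}$ is consistent with what Invariant~2 will demand if $v$ subsequently becomes matched at some later time $t' > t$: at that moment $t_v = t' > t \geq t_u$, and $(u,v) \notin \mathcal{T}$ and $u \notin \mathcal{N}'_v$, so the biconditional is correctly set up for future maintenance. This observation is not strictly needed to close the present lemma (which only asserts preservation at time $t$), but it is worth flagging as the conceptual reason the asymmetric update here is safe.
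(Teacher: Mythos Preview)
Your proposal is correct and follows essentially the same case analysis as the paper's own proof: split on whether both or only one of $u,v$ lie in $V_M$, and in each case verify the three invariants directly for the edge $(u,v)$. Your version is in fact slightly more careful than the paper's, since you explicitly argue that the invariants for every other edge $(a,b)\neq(u,v)$ are unaffected (the paper leaves this implicit), and your closing remark about future consistency of Invariant~2 is a nice sanity check even though, as you note, it is not needed for the lemma as stated.
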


\begin{proof}
First assume at time $t$ when we invoke Procedure {\sf InsertToDS($(u,v)$)},
vertices $u$ and $v$ are already matched.
In Procedure {\sf InsertToDS($(u,v)$)} we insert $(u,v)$ to sketches
$S_u$ and $S_v$ using \textsc{Update}$(S_{u}, (u,v))$)
and \textsc{Update}$(S_{v}, (u,v))$).
So, $u\in \mathcal{N}'_v$ and $v\in \mathcal{N}'_u$ and  Invariant $1$ holds.
Moreover, we insert $(u,v)$ to $\mathcal{T}$.
Therefore, Invariant $3$ holds.
Invariant $2$ also holds as
neither condition is true ($v\notin \mathcal{N}'_u$ and $(u,v)\notin\mathcal{T}$).

Next assume only vertex $u$ is matched.
We insert $(u,v)$ to sketch $S_u$, but not to $S_v$ and $\mathcal{T}$.
Since $v\in \mathcal{N}'_u$, Invariant $1$ is correct.
Invariant $2$ and $3$ are correct as $v$ is not matched at time $t$.
The case when only vertex $v$ is matched is symmetric.
\end{proof}

%---------------------------------------------------------------------------------------------------------------------------------------------------

The second procedure is {\sf DeleteFromDS$((u,v))$}
which is invoked in Procedure {\sf Deletion$((u,v), t)$} when $(u,v)\notin M$.
There are three main cases to consider.
If $(u,v)\in\mathcal{T}$,
we delete $(u,v)$ from sketches $S_u$ and $S_v$
as well as data structure $\mathcal{T}$.
If not, we know that $(u,v)$  is only in one of $S_u$ and $S_v$.

If $t_u<t_v$ and both $u$  and $v$ are matched, we delete the edge from $S_u$,
otherwise, if $t_v<t_u$ and $u$ and $v$ are matched from $S_v$, we delete the edge from $S_v$.
If none of these cases occur, then only one of $u$ and $v$ is matched.
If the matched vertex is $u$, we delete $(u,v)$ from $S_u$.
Otherwise, we delete $(u,v)$ from $S_v$.

%---------------------------------------------------------------------------------------------------------------------------------------------------

\begin{center}
\fbox{
\parbox{0.95\columnwidth} {
\underline{\textbf{Deletion$((u,v), t)$}}
\vspace{-0.2cm}
\raggedright
\begin{enumerate}
\item  If $(u,v)\in M$ then invoke {\sf Rematch$((u,v),t)$}
\item  Else invoke {\sf DeleteFromDS$((u,v))$}.
\item  Invoke {\sf AnnounceNeighborhood$(u)$} and {\sf AnnounceNeighborhood$(v)$}
\end{enumerate}
\vspace{-0.2cm}
}}
\end{center}

%---------------------------------------------------------------------------------------------------------------------------------------------------

\begin{center}
\fbox{
\parbox{0.95\columnwidth} {
\underline{\textbf{DeleteFromDS$((u,v))$}}
\raggedright
\begin{enumerate}
\item If $(u,v)\in \mathcal{T}$ then
\begin{enumerate}
   \item \textsc{Update}$(S_{u}, -(u,v))$ and \textsc{Update}$(S_{v}, -(u,v))$.
   \item Remove $(u,v)$ from $\mathcal{T}$.
\end{enumerate}
\item Else if $t_u< t_v $ and $u,v\in V_M$ then \textsc{Update}$(S_{u}, -(u,v))$.
\item Else if $t_v< t_u$ and $u,v\in V_M$ then \textsc{Update}$(S_{v}, -(u,v))$.
\item Else if $u\in V_M$ and $v\notin V_M$ then \textsc{Update}$(S_{u}, -(u,v))$.
\item Else if $v\in V_M$ and $u\notin V_M$ then \textsc{Update}$(S_{v}, -(u,v))$.
\end{enumerate}
}}
\end{center}

%---------------------------------------------------------------------------------------------------------------------------------------------------

\begin{lemma}
\label{lem:timestamps:case:3}
Assume Invariants $1$, $2$ and $3$ hold at time $t$ when Procedure
{\sf DeleteFromDS$((u,v))$} is invoked.
Then, Procedure {\sf DeleteFromDS($(u,v)$)} chooses the correct case.
\end{lemma}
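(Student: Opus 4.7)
The plan is to case-split on the condition checked by {\sf DeleteFromDS} and use the three invariants together with the observation that $M$ is maximal, so (since we only invoke this procedure when $(u,v) \in E_t$ and $(u,v) \notin M$) at least one of $u,v$ must be matched. The goal is to show two things for each branch of the procedure: \emph{soundness}, that whichever sketches the branch updates really do contain the edge $(u,v)$; and \emph{completeness}, that the five listed cases exhaust the possibilities so exactly one branch fires.

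For soundness, I would argue as follows. In case~1, where $(u,v) \in \mathcal{T}$, Invariant~3 immediately gives $v \in \mathcal{N}'_u$ and $u \in \mathcal{N}'_v$, so deleting $(u,v)$ from both $S_u$ and $S_v$ (and removing it from $\mathcal{T}$) is correct. In case~2, where $(u,v) \notin \mathcal{T}$, both endpoints are matched, and $t_u < t_v$, Invariant~2 yields $u \notin \mathcal{N}'_v$, while Invariant~1 (together with the fact $(u,v) \in E_t$) forces $v \in \mathcal{N}'_u$, so deleting from $S_u$ alone is correct. Case~3 is symmetric. In cases~4 and~5, only one endpoint is matched, say $u$; then no sketch $S_v$ exists, so $u \in \mathcal{N}'_v$ is vacuously false and Invariant~1 forces $v \in \mathcal{N}'_u$, making deletion from $S_u$ correct.

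For completeness, the main subtlety is ruling out the case $t_u = t_v$ when $(u,v) \notin \mathcal{T}$ with both $u,v \in V_M$. If $t_u = t_v$, then $t_u < t_v$ is false, so Invariant~2 gives $u \in \mathcal{N}'_v$; symmetrically $v \in \mathcal{N}'_u$; but then Invariant~3 forces $(u,v) \in \mathcal{T}$, contradicting the assumption. Hence whenever both endpoints are matched and $(u,v) \notin \mathcal{T}$, the timestamps must be strictly ordered, so exactly one of case~2 or case~3 fires. When exactly one endpoint is matched, cases~4 or~5 apply. Finally, maximality of $M$ (maintained by the outer procedures, which is part of the overall invariant of the algorithm) rules out the possibility that both endpoints of an edge in $E_t$ are unmatched, so the five branches are exhaustive.

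The only mildly delicate point is the $t_u = t_v$ subcase above; everything else is a direct unpacking of the three invariants, so I expect the writeup to be essentially a careful enumeration. The real intellectual content lives in the invariants themselves (and in showing that the surrounding insert/rematch procedures preserve them), which the later lemmas in the section will presumably address.
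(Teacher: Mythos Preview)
Your proposal is correct and follows essentially the same approach as the paper: both case-split on the branches of {\sf DeleteFromDS} and discharge each using Invariants~1--3 in the same way (Invariant~3 for case~1, Invariant~2 plus Invariant~1 for cases~2--3, and Invariant~1 plus the absence of a sketch for the unmatched endpoint in cases~4--5). Your treatment is in fact slightly more thorough than the paper's on the completeness side---you explicitly rule out $t_u=t_v$ via Invariants~2 and~3 and address why both endpoints cannot be unmatched, points the paper handles only informally in its outline; note, incidentally, that Invariant~1 alone already forces at least one endpoint to be matched (since $\mathcal{N}'_w$ is only defined for $w\in V_M$), so you need not appeal to maximality of $M$ here.
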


\begin{proof}
First, we consider the case that  both $u$ and $v$ are matched vertices.
Since Invariant $3$ holds, we know that edge $(u,v)$
at time $t$ is in $\mathcal{T}$ if and only if
$v\in \mathcal{N}'_u$ and $u\in \mathcal{N}'_v$.
Procedure {\sf DeleteFromDS($(u,v)$)} searches for $(u,v)$ in $\mathcal{T}$.
If this finds $(u,v)$ in $\mathcal{T}$, we then know that $v\in \mathcal{N}'_u$
and $u\in \mathcal{N}'_v$. So, we can safely delete the edge from
sketches $S_u$ and $S_v$ and data structure $\mathcal{T}$.

On the other hand, if $(u,v)\notin \mathcal{T}$, we ensure that the edge
is in only one of $S_u$ and $S_v$.  Now, we can use the claim of Invariant
$2$ which says $u\notin \mathcal{N}'_v$ if and only if $t_u < t_v$ and $(u,v) \notin\mathcal{T}$.
We compare $t_u$ and $t_v$. If $t_u<t_v$, then $u\notin \mathcal{N}'_v$.
Recall that since Invariant $1$ holds, we know that at least one of $v\in \mathcal{N}'_u$
and $u\in \mathcal{N}'_v$ is correct.
Because $u\notin \mathcal{N}'_v$, we must have $v\in \mathcal{N}'_u$.
So deleting edge $(u,v)$ from sketch $S_u$ is the correct operation.
On the other hand, if $t_v<t_u$, then $v\notin \mathcal{N}'_u$ and so
edge $(u,v)$ is only in sketch $S_v$. Thus, deleting edge $(u,v)$ from
sketch $S_v$ is the correct operation.

Next we consider the case that only one of $u$ and $v$ is matched.
Let us assume $u$ is the matched vertex.
Since Invariant $1$ holds, we know that at least one of $v\in \mathcal{N}'_u$
and $u\in \mathcal{N}'_v$ is correct. Because $u$ is the matched vertex and
we maintain the sketch of matched vertices, $(u,v)$ has been inserted to sketch
$S_u$ that is $v\in \mathcal{N}'_u$. Therefore,
deleting edge $(u,v)$ from sketch $S_u$ is the correct operation.
The case when $v$ is the matched vertex is symmetric.
\end{proof}

%{\bf GRC also need to argue that invariants are preserved after this procedure?}

%---------------------------------------------------------------------------------------------------------------------------------------------------

\subsection{Announcement and Deletion of Neighborhood of a Vertex}
In this section we develop basic primitives for the announcement
and deletion of the neighborhood of a vertex.
Announcement is performed by
Procedure {\sf AnnounceNeighborhood$(u)$} which is
invoked in Procedure {\sf Deletion$((u,v), t)$}.
%In there and upon deletion of an edge $(u,v)$ we check
%if matched vertices $u$ and/or $v$ become low-degree.
%The full neighborhood of the low-degree node is recovered and the
%existence of an edge is communicated to the neighbors.
%Suppose matched vertex $u$ becomes a low-degree vertex. Then,
%for matched neighbors of $u$ who do not know about $u$,
%we announce that they are neighbors of matched vertex $u$.
%More precisely, f
Suppose that node $u$ has low degree.
For every matched vertex $v\in \mathcal{N}'_u$,
we search for edge $(u,v)$ in $\mathcal{T}$. If $(u,v)\in\mathcal{T}$,
$(u,v)$ is in both $S_u$ and $S_v$ and no action is needed.
But if not, we insert edge $(u,v)$ into tree $\mathcal{T}$ as well as
sketch $S_v$.
%Note that we do not need to insert the edge to $S_u$ as $v\in \mathcal{N}'_u$.

%---------------------------------------------------------------------------------------------------------------------------------------------------

\begin{center}
\fbox{
\parbox{0.95\columnwidth} {
\underline{\textbf{AnnounceNeighborhood$(u)$}}
\vspace{-0.1cm}
\begin{enumerate}
\item If $u\in V_M$ and $d_u\le x$, then
     \begin{enumerate}
     \item For every edge $(u,v)$ in sketch $S_u$
           \begin{enumerate}
            \item Add $v$ to set $\mathcal{N}'_u$.
           \end{enumerate}
     \item For every $v\in \mathcal{N}'_u \cap V_M$
           %\begin{enumerate}
            %\item
               \begin{enumerate}
               \item 
If edge $(u,v)\notin \mathcal{T}$, then
insert $(u,v)$ to $\mathcal{T}$; \textsc{Update}$(S_{v}, (u,v))$.
               \end{enumerate}
           %\end{enumerate}
     \end{enumerate}
\end{enumerate}
}}
\end{center}

%---------------------------------------------------------------------------------------------------------------------------------------------------

%{\bf maybe just update T?}
%---------------------------------------------------------------------------------------------------------------------------------------------------

We also introduce a deletion primitive in the form of Procedure {\sf
  DeleteNeighborhood$(u)$}.  This is
invoked in {\sf Rematch$((u,v), t)$} when
the matched edge $(u,v)$ is removed. 
The {\sf DeleteNeighborhood$(u)$} procedure is called on a node $u$ when 
all the following three conditions hold.
\begin{enumerate}
 \item The matched edge of matched vertex $u$ is deleted.
 \item Vertex $u$ is a low-degree vertex.
 \item Vertex  $u$ does not have any exposed neighbor.
\end{enumerate}

In this case, we need to delete $u$ from $V_M$ and delete incident edges on
$u$ from data structure $\mathcal{T}$ as Invariant $3$ for $u$ is not valid
anymore. More precisely, for a low-degree matched vertex whose neighborhood
are all matched we do as follows.

We recover all edges from the sketch $S_u$ (i.e. $\mathcal{N}'_u$).
For every edge $(u,v) \in \mathcal{N}'_u$,
we check to see if $(u,v)\in\mathcal{T}$.
If so, we know that $(u,v)$
is represented in both sketches $S_u$ and $S_v$.
%, so we can safely delete $(u,v)$ from $S_u$.
We also delete $(u,v)$ from $\mathcal{T}$ as $u$ is not
matched and Invariant $3$ does not hold.
But if $(u,v)\notin\mathcal{T}$,
since Invariant $1$ holds we know that $(u,v)$ is inserted only in $S_u$ not in $S_v$.
%We want to delete all edges which are inserted to $S_u$ and release
%sketch $S_u$.
Observe that since $u$ does not have
any exposed neighbor, vertex $v$ must be a matched vertex, and so
vertex $v$ has an associated sketch $S_v$.
Therefore, in order to fulfill Invariant $1$, we first insert $(u,v)$ to sketch $S_v$.
Finally, we delete the whole sketch $S_u$,
and remove $u$ from $V_M$.

%---------------------------------------------------------------------------------------------------------------------------------------------------

\begin{center}
\fbox{
\parbox{0.95\columnwidth} {
\underline{\textbf{DeleteNeighborhood$(u)$}}
%\vspace{-0.8cm}
%\textbf{Assumption:} $d_u\le x$ and $\mathcal{N}'_u\cap \overline{V}_M=\emptyset$.
\begin{enumerate}
%\item If $u\in V_M$ and $d_u\le x$, then
%     \begin{enumerate}
%     \item For every edge $(u,v)$ in sketch $S_u$
%           \begin{enumerate}
%            \item Add $v$ to set $\mathcal{N}'_u$.
%           \end{enumerate}
%     \item For every $z\in \mathcal{N}'_u $
     \item For every edge $(u,v)$ in sketch $S_u$
           \begin{enumerate}
           \item If edge $(u,v)\in \mathcal{T}$, then
%                    \begin{enumerate}
%                    \item 
Remove $(u,v)$ from $\mathcal{T}$.
%                    \end{enumerate}
           \item Else \textsc{Update}$(S_{v}, (u,v))$.
%           \item \textsc{Update}$(S_{u}, -(u,v))$.
           \end{enumerate}
%     \end{enumerate}
\item Delete sketch $S_u$ and remove $u$ from $V_M$.
\end{enumerate}
}}
\end{center}

%---------------------------------------------------------------------------------------------------------------------------------------------------

\COMMENTED{

\begin{figure*}[t]
\centering
\fbox{
\parbox{6.0in} {
\begin{tabular}{ l | r }
%\fbox{
\parbox{2.9in} {
%\vspace{-0.45cm}
%\shadowbox{AnnounceNeighborhood$(\text{vertex } u)$}
\underline{\textbf{AnnounceNeighborhood$(u)$}}
%\vspace{-0.3cm}
\begin{enumerate}
\item If $u\in V_M$ and $d_u\le x$, then
     \begin{enumerate}
     \item For every edge $(u,v)$ in sketch $S_u$
           \begin{enumerate}
            \item Add $v$ to set $\mathcal{N}'_u$.
           \end{enumerate}
     \item For every $v\in \mathcal{N}'_u \cap V_M$
           \begin{enumerate}
            \item If edge $(u,v)\notin \mathcal{T}$, then
               \begin{enumerate}
               \item Insert $(u,v)$ to $\mathcal{T}$.
               \item \textsc{Update}$(S_{v}, (u,v))$.
               \end{enumerate}
           \end{enumerate}
     \end{enumerate}
\end{enumerate}
}
%}
\hspace{-0.5cm}
&
%\fbox{
\parbox{3.1in} {
%\vspace{0.6cm}
\vspace{-0.4cm}
%\shadowbox{DeleteNeighborhood$(u)$}\\
\underline{\textbf{DeleteNeighborhood$(u)$}}
%\vspace{-0.8cm}
%\textbf{Assumption:} $d_u\le x$ and $\mathcal{N}'_u\cap \overline{V}_M=\emptyset$.
\begin{enumerate}
%\item If $u\in V_M$ and $d_u\le x$, then
%     \begin{enumerate}
%     \item For every edge $(u,v)$ in sketch $S_u$
%           \begin{enumerate}
%            \item Add $v$ to set $\mathcal{N}'_u$.
%           \end{enumerate}
%     \item For every $z\in \mathcal{N}'_u $
     \item For every edge $(u,v)$ in sketch $S_u$
           \begin{enumerate}
           \item If edge $(u,v)\in \mathcal{T}$, then
                    \begin{enumerate}
                    \item Remove $(u,v)$ from $\mathcal{T}$.
                    \end{enumerate}
           \item Else \textsc{Update}$(S_{v}, (u,v))$.
%           \item \textsc{Update}$(S_{u}, -(u,v))$.
           \end{enumerate}
%     \end{enumerate}
\item Delete sketch $S_u$ and remove $u$ from $V_M$.
\end{enumerate}
}
%}
\end{tabular}
}}
\end{figure*}

}
%---------------------------------------------------------------------------------------------------------------------------------------------------

\begin{lemma}
\label{lem:timestamps:case:4}
Let $t$ be a time when we invoke Procedure {\sf AnnounceNeighborhood$(u)$}.
Suppose $u$ is a low-degree matched vertex at time $t$.
Suppose before time $t$, Invariants $1$, $2$ and $3$ hold.
Then after time $t$, Invariants $1$, $2$ and $3$ hold.
\end{lemma}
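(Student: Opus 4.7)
The plan is to verify each invariant by a case analysis on the updates that the procedure performs, exploiting the fact that {\sf AnnounceNeighborhood$(u)$} is purely additive.

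First, note that the procedure does nothing at all if $u$ is a high-degree vertex, so the invariants trivially persist. Assume then that $d_u\le x$, so that querying $S_u$ returns exactly the set $\mathcal{N}'_u$ (cf.\ Definition~\ref{def:ksparse}). The only write-operations performed are: (i) insertions of edges $(u,v)$ into $\mathcal{T}$, and (ii) \textsc{Update}$(S_v, (u,v))$ calls which add $u$ to $\mathcal{N}'_v$. No edge is ever deleted from a sketch and no entry is ever removed from $\mathcal{T}$. Consequently, for every edge $(w,z)\in E_t$ the sets $\mathcal{N}'_w$ and $\mathcal{N}'_z$ only grow, which means Invariant~1 (asking that at least one of $z\in\mathcal{N}'_w$ or $w\in\mathcal{N}'_z$ holds) is preserved automatically from the precondition.

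The substantive step is to check Invariants~2 and~3. The only pairs whose relevant status can change are the pairs $(u,v)$ handled inside the inner loop, namely pairs with $v\in\mathcal{N}'_u\cap V_M$ and $(u,v)\notin\mathcal{T}$ at the moment of processing. For such a pair, before the micro-step we have $v\in\mathcal{N}'_u$ (loop condition) and $(u,v)\notin\mathcal{T}$, so by Invariant~3 applied at time $t-1$ we know $u\notin\mathcal{N}'_v$. After executing the two writes, we have $(u,v)\in\mathcal{T}$ together with both $v\in\mathcal{N}'_u$ and $u\in\mathcal{N}'_v$, so both sides of the iff in Invariant~3 are now true. For Invariant~2, after the micro-step the LHS $u\notin\mathcal{N}'_v$ becomes false while the RHS $(t_u<t_v$ and $(u,v)\notin\mathcal{T})$ also becomes false because we just inserted $(u,v)$ into $\mathcal{T}$; the same holds when Invariant~2 is read with the roles of $u$ and $v$ swapped, since $v\in\mathcal{N}'_u$ is already true. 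So the iff continues to hold for the processed pair.

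For any edge $(w,z)\in E_t$ that is not a pair modified by the loop, none of the relevant quantities changes, so Invariants~2 and~3 are inherited verbatim from the precondition. The only side-conditions to justify are that $S_v$ exists when we write to it (immediate from the $v\in V_M$ guard on the inner loop) and that the edge being re-sketched indeed lies in $E_t$ (immediate from $\mathcal{N}'_u\subseteq\mathcal{N}_u$, which the earlier case lemmas maintain). I do not expect a real obstacle: the lemma is a routine verification, and the only delicate point is to read Invariant~2 with the ordered pairs carefully, so that the symmetric case $v\notin\mathcal{N}'_u$ is handled together with $u\notin\mathcal{N}'_v$.
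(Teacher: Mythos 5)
Your proof is correct and follows essentially the same approach as the paper's: fix each pair $(u,v)$ processed by the inner loop, observe that after inserting $(u,v)$ into $\mathcal{T}$ and updating $S_v$ all three iffs hold for that pair, and note that no other pair's relevant quantities change. Your writeup is somewhat more careful than the paper's — it explicitly derives $u\notin\mathcal{N}'_v$ from Invariant~3 before the micro-step, checks both truth directions of the iffs afterwards, and spells out why untouched edges inherit the invariants — but it is the same decomposition, not a genuinely different argument.
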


\begin{proof}
Let $\mathcal{N}'_u$ be the set of neighbors of $u$ that up to time $t$
are inserted into sketch $S_u$ but not deleted from $S_u$.
Since $u$ at time $t$ is a low-degree vertex we can use
Definition~\ref{def:ksparse} to recover $\mathcal{N}'_u$ in its entirety.
%Lemma \ref{lem:low:degree:recover:set}.
%This lemma shows that by querying $S_u$
%with probability at least $1-\frac{\delta}{2n^c}$,
%we can recover $\mathcal{N}'_u$.
We assume Invariants $1$, $2$ and $3$ hold before time $t$.
We prove that all three invariants continue to hold after invocation of
{\sf AnnounceNeighborhood$(u)$}.

Fix a matched neighbor $v$ of $u$ in $\mathcal{N}'_u$ that is
$v\in V_M\cap\mathcal{N}'_u$. In Procedure
{\sf AnnounceNeighborhood$(u)$} for $v$ we do the following.
If edge $(u,v)$ has not been already inserted in $\mathcal{T}$,
we insert edge $(u,v)$ to $\mathcal{T}$ and $S_v$.
So, now $v\in \mathcal{N}'_u$ and $u\in \mathcal{N}'_v$, and
$(u,v) \in \mathcal{T}$. Invariants $1$, $2$ and $3$ hold for $(u,v)$,
and continue to hold for all other edges.
\end{proof}

After processing this deletion, edge $(u,v)$ is no longer in $E_t$,
and so the invariants trivially hold in regard of this edge.
Meanwhile, for any other edge, if the invariants held before, then
they continue to hold, since the changes only affected edge $(u,v)$.

%---------------------------------------------------------------------------------------------------------------------------------------------------

\begin{lemma}
\label{lem:timestamps:case:5}
Suppose before time $t$, Invariants $1$, $2$ and $3$ hold and we
invoke {\sf DeleteNeighborhood$(u)$} at time $t$.
Here we assume $u$ is a matched vertex whose neighbors are all matched, i.e.,
$\mathcal{N}_u\cap \overline{V}_M=\emptyset$.
Then after time $t$, Invariants $1$, $2$ and $3$ hold.
\end{lemma}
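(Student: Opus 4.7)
The plan is to check each of the three invariants on every edge of $E_t$ after Procedure {\sf DeleteNeighborhood$(u)$} completes, exploiting the fact that once the procedure ends, $u$ is no longer matched. The key structural observation is that Invariants 2 and 3 only make assertions about edges with both endpoints in $V_M$; since $u \notin V_M$ after the procedure and since by hypothesis every edge of $E_t$ incident to $u$ has its other endpoint matched, Invariants 2 and 3 are vacuous for all edges incident to $u$. So the real work is verifying Invariant 1 for edges incident on $u$, and verifying all three invariants for edges not incident on $u$.

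For edges $(w,z) \in E_t$ with $u \notin \{w,z\}$, the only side effect of the procedure on these edges is that some $S_v$ (for $v \in \mathcal{N}'_u \cap V_M$) receives an insertion of $(u,v)$, which adds $u$ but no one else to $\mathcal{N}'_v$; likewise $\mathcal{T}$ only changes by removing edges incident on $u$. Hence neither $\mathcal{N}'_w$ nor $\mathcal{N}'_z$ nor the $\mathcal{T}$-membership of $(w,z)$ is affected, so Invariants 1, 2, 3 for $(w,z)$ carry over unchanged from the hypothesis.

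For an edge $(u,v) \in E_t$, I will split on whether $v \in \mathcal{N}'_u$ before the procedure. If $v \in \mathcal{N}'_u$, the loop processes this edge. Either $(u,v) \in \mathcal{T}$, in which case Invariant 3 (pre-procedure) gives $u \in \mathcal{N}'_v$, and since $S_v$ is not touched for this edge we still have $u \in \mathcal{N}'_v$ after; or $(u,v) \notin \mathcal{T}$, in which case Invariant 3 (pre-procedure) contrapositively says not both of $v \in \mathcal{N}'_u$ and $u \in \mathcal{N}'_v$ hold, so $u \notin \mathcal{N}'_v$; but the procedure then performs \textsc{Update}$(S_v,(u,v))$, so $u \in \mathcal{N}'_v$ afterwards. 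Either way Invariant 1 is restored. If instead $v \notin \mathcal{N}'_u$ before, the loop never touches $S_v$ on account of this edge, so $\mathcal{N}'_v$'s status with respect to $u$ is unchanged; and Invariant 1 (pre-procedure) together with $v \notin \mathcal{N}'_u$ forces $u \in \mathcal{N}'_v$, which therefore still holds after.

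Finally, I will note that $\mathcal{T}$ is consistently cleaned of all edges incident on $u$: by Invariant 3 (pre-procedure), every edge of $\mathcal{T}$ incident on $u$ satisfies $v \in \mathcal{N}'_u$, hence is processed in the loop and removed. The main obstacle is just being careful about which sketches and structures are being modified at each step; once the case split above is in place, the verification is essentially bookkeeping, and no probabilistic reasoning is needed since recovering $\mathcal{N}'_u$ is valid by the low-degree assumption via Definition~\ref{def:ksparse}.
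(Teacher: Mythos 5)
Your proof is correct and follows essentially the same approach as the paper's, but is more complete in its case-analysis: the paper only explicitly treats edges $(u,v)$ that appear in sketch $S_u$ (i.e.\ $v\in\mathcal{N}'_u$), whereas you also cover edges incident on $u$ with $v\notin\mathcal{N}'_u$ (where Invariant~1 pre-procedure already forces $u\in\mathcal{N}'_v$, which is preserved) and edges not incident on $u$ at all (observing that the procedure only modifies membership of $u$ in the sketched neighborhoods and only removes edges incident on $u$ from $\mathcal{T}$). Your closing remark that Invariant~3 pre-procedure guarantees every edge of $\mathcal{T}$ incident on $u$ lies in $\mathcal{N}'_u$ and is therefore removed by the loop is a useful explicit check that the paper leaves implicit.
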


\begin{proof}
Let $(u,v)$ be an edge in sketch $S_u$.
Since we assume Invariant $1$ holds before time $t$, $(u,v)$ must be
inserted into at least one of $S_u$ and $S_v$. We know edge $(u,v)$ is in $S_u$.
Since Invariants $2$ and $3$ hold, we have one of the two following cases.

(i) If edge $(u,v)$ is also inserted to $S_v$, this means this edge must be in $\mathcal{T}$.
In {\sf DeleteNeighborhood$(u)$} if edge $(u,v)$ is in $\mathcal{T}$,
we delete the edge from $\mathcal{T}$ as well as sketch $S_u$.
As $(u,v)$ is still in $S_v$, Invariant $1$ after time $t$ holds.

(ii) Else, edge $(u,v)$ is not in $S_v$. Using Invariant $2$ this happens
if and only if $t_u < t_v$ and $(u,v)\notin \mathcal{T}$.
We want to delete all edges which are inserted to $S_u$ and delete sketch $S_u$.
Observe that since $u$ does not have
any exposed neighbor, vertex $v$ must be a matched vertex and so has
an associated sketch $S_v$.
We insert $(u,v)$ to sketch $S_v$, and
subsequently $S_u$ is deleted.
Therefore, Invariant $1$ still holds.

Finally, Invariants $2$ and $3$ hold after time $t$ since $u$
is not a matched vertex anymore.
\end{proof}

%---------------------------------------------------------------------------------------------------------------------------------------------------

\subsection{Rematching Matched Vertices}
In this section we develop the last (and most involved) primitive, {\sf Rematch$((u,v),t)$}.
We invoke this procedure in Procedure {\sf Deletion$((u,v), t)$} when the matched edge
$(u,v)$ is deleted. We first delete edge $(u,v)$ from sketches $S_u$ and $S_v$ as well as data structure $\mathcal{T}$.
We also delete $(u,v)$ from current set $M$ of matched edges.
To see if we can rematch $u$ and $v$ to one of their exposed neighbors,
we perform the subsequent steps for $u$ (and then repeat for $v$).

If $u$ is a low degree vertex, by querying $S_u$ we recover $\mathcal{N}'_u$, i.e.,
the set of neighbors of $u$ that up to time $t$
are inserted into sketch $S_u$ but not deleted from $S_u$.
We then check whether there is an exposed vertex
$z\in \mathcal{N}'_u$. If so, we rematch $u$ to $z$.

But if there is no exposed vertex in $\mathcal{N}'_u$,
we announce $u$ as an exposed vertex.
We also remove sketch $S_u$ as $u$ is not a matched vertex anymore. Moreover,
we remove all incident edges on $u$ from $\mathcal{T}$ as our third invariant
does not hold anymore. Lemma \ref{lem:low:rematch} shows that in both cases,
the matching after invoking Procedure {\sf Rematch$((u,v),t)$} is maximal
if the matching before this invocation was maximal.

If $u$ is a high degree vertex, it samples an edge $(u,z)$ from sketch $S_u$.
In Lemma \ref{lem:high:rematch} we show that with high probability $z$
is an exposed vertex, so we rematch $u$ to $z$.
Therefore, if the matching before the invocation of
Procedure {\sf Rematch$((u,v),t)$} is maximal, the matching after this
invocation would be maximal as well.

\begin{center}
\fbox{
\parbox{0.95\columnwidth} {
\raggedright
\underline{\textbf{Rematch$((u,v),t)$}}
%\textbf{Assumption:} $u$ is matched.
\begin{enumerate}
\item {\sf DeleteFromDS$((u,v))$}, remove $(u,v)$ from $M$, remove
  $u$, $v$ from $V_M$
\item For $w\in\{u,v\}$
\begin{enumerate}
\item If $d_w\le x$ then
      \begin{enumerate}
      \item For every edge $(w,z)$ in sketch $S_w$, add $z$ to set $\mathcal{N}'_w$.
      \item If there is an exposed  $z \in\mathcal{N}'_w$ then invoke {\sf AddEdgeToMatching$((w,z), t)$}.
      \item Else invoke {\sf DeleteNeighborhood$(\text{vertex } w)$}.
      \end{enumerate}
\item If $d_w> x$ then
      \begin{enumerate}
      \item Query edges $(w,z_1),\cdots, (w,z_{y})$ from sketch $S_w$ for $y=8c\log(n/\delta)$.
      \item If there is an exposed $z\in \{z_1,\cdots,z_{y}\}$ then invoke {\sf AddEdgeToMatching$((w,z), t)$}.
      \end{enumerate}
\end{enumerate}
\end{enumerate}
}}
\end{center}

%---------------------------------------------------------------------------------------------------------------------------------------------------

\subsubsection{Analyzing Rematching of a Low-Degree Vertex.}

\begin{lemma}
\label{lem:low:rematch}
%Let $t$ be a time of stream $S$.
Let $u$ be a low-degree matched vertex at time $t$.
Assuming the matching $M$ before time $t$ is maximal,
then, after the invocation of Procedure {\sf Rematch$((u,v), t)$}, the matching $M$ is maximal.
The running time of Procedure {\sf Rematch$((u,v), t)$}
when $u$ is a low-degree vertex is $O(k\log^4(n/\delta))$.
\end{lemma}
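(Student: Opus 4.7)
The plan is to establish the two claims—maximality preservation and the running-time bound—separately, with maximality resting on a single clean observation about what must live in $S_u$.

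The crucial observation is that sketches are maintained only for matched vertices, so an exposed vertex $z$ has no sketch $S_z$. By Invariant $1$, every edge $(u,z) \in E_t$ satisfies $z \in \mathcal{N}'_u$ or $u \in \mathcal{N}'_z$; for exposed $z$ the second option is vacuous, forcing $z \in \mathcal{N}'_u$. Hence $S_u$ already contains all currently exposed neighbors of $u$. Since $u$ is low-degree ($d_u \le x$), we have $|\mathcal{N}'_u| \le d_u \le x$, so by the $x$-sample recovery guarantee (Definition~\ref{def:ksparse}) invoked on $S_u$ we recover the entirety of $\mathcal{N}'_u$ with probability at least $1-\delta/\poly(n)$.

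With $\mathcal{N}'_u$ in hand I would argue maximality as follows. If some $z \in \mathcal{N}'_u$ is exposed, then $\{u,z\}$ is a valid augmentation and {\sf AddEdgeToMatching$((u,z),t)$} adds it, preserving maximality (Lemma~\ref{lem:timestamps:case:1}). If no $z \in \mathcal{N}'_u$ is exposed, then by the observation above $u$ has \emph{no} exposed neighbor in $G$ whatsoever, so leaving $u$ unmatched cannot violate maximality; the call {\sf DeleteNeighborhood}$(u)$ then correctly removes $u$ from $V_M$ and cleans $\mathcal{T}$ and the other sketches while preserving Invariants $1$-$3$ by Lemma~\ref{lem:timestamps:case:5}. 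The symmetric argument applies to $v$. Since $M$ was maximal before time $t$ and the only edges removed are incident to $\{u,v\}$, combining these two local repair steps yields a maximal matching after time $t$.

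For the running-time bound I would account for the costs of each primitive invoked when $u$ is low-degree. Querying the $x$-sample recovery sketch $S_u$ and reconstructing $\mathcal{N}'_u$ takes time $\tilde{O}(x) = O(k \log(n/\delta) \cdot \log^2(n/\delta))$ by the construction underlying Lemma~\ref{lem:l0:sampling} (one $\ell_0$-sampler per slot, each of size and query cost $O(\log^2(n/\delta))$). For each recovered $z$, testing whether $z$ is exposed is $O(\log k)$ via a dictionary on $V_M$, and each subsequent insertion or deletion into $\mathcal{T}$, and each sketch update on a neighbor's $S_z$, costs $O(\log k)$ and $O(\log^2(n/\delta))$ respectively. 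Summed over at most $x$ recovered edges and repeated for both endpoints $u$ and $v$, the total is $O(k \log^4(n/\delta))$.

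The main obstacle is the maximality step: it is tempting to believe that we might miss an exposed neighbor stored only in some other sketch, and the argument goes through only because Invariant $1$ together with the absence of sketches on exposed vertices pins every edge to an exposed endpoint firmly inside $S_u$ itself. All subsequent cleanup must preserve the invariants so that future deletions continue to find edges where the algorithm expects them, which is precisely what Lemmas~\ref{lem:timestamps:case:1} and \ref{lem:timestamps:case:5} deliver.
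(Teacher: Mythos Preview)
Your proposal is correct and follows essentially the same approach as the paper: the key observation that every exposed neighbor of $u$ must lie in $\mathcal{N}'_u$ (derived from Invariant~1 together with the fact that exposed vertices carry no sketch) is exactly the paper's argument that $\mathcal{N}_u \setminus \mathcal{N}'_u \subseteq V_M$, stated contrapositively. Your running-time accounting is slightly more granular than the paper's but lands at the same $O(k\log^4(n/\delta))$ bound, and your explicit appeals to Lemmas~\ref{lem:timestamps:case:1} and~\ref{lem:timestamps:case:5} for invariant preservation are a nice addition (the paper defers those to Lemma~\ref{lem:timestamps:case:6}).
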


\begin{proof}
%Let $M$ be our matching before time $t$ and assume $M$ is maximal.
Let $\mathcal{N}'_u$ be the set of neighbors of $u$ up to time $t$
that are inserted into sketch $S_u$ but not deleted from $S_u$.
From Definition~\ref{def:ksparse},
%Using Lemma \ref{lem:low:degree:recover:set}
by querying $S_u$ and with probability at least
$1-\frac{\delta}{2n^c}$, we can recover $\mathcal{N}'_u$.
Observe that assuming Invariants $1$, $2$ and $3$ hold, we must have
$\mathcal{N}_u \backslash \mathcal{N}'_u \subseteq V_M$, that is, those neighbors of
$u$ that are not in $\mathcal{N}'_u$ at time $t$ must be matched.
Therefore, all exposed neighbors of $u$ must be in $\mathcal{N}'_u$.

Two cases can occur. The first is if there is an exposed vertex $z$ in $\mathcal{N}'_u$.
Then, Procedure {\sf Rematch$((u,v),t)$} will rematch $u$ using exposed vertex $z$.
The second is when all neighbors of $u$ are already matched. Since all neighbors of $u$
are matched, vertex $u$ cannot be matched to one of its neighbors and
so we announce $u$ as an exposed vertex and release its sketch $S_u$.
Therefore, assuming $M$ before time $t$ is maximal, $M$ after time $t$
would be maximal as well.

We next discuss the running time of Procedure {\sf Rematch$((u,v),t)$}
when $u$ is a low-degree vertex.
By properties of the sketch data structures,
%Similar to the proof of Lemma \ref{lem:low:degree:recover:set},
the time to query $x$ sampled edges from sketch $S_u$ and construct set
$\mathcal{N}'_u$ is $O(x\log^2 n\log(n/\delta))$.
If the second case happens, since we assume at every time of stream $S$, $|M|\le k$,
we then have $d_u=|\mathcal{N}'_u|\le 2k$.

Recall that $\mathcal{T}$ is a data structure with at most $k^2$ edges
whose space is $O(k^2)$. The insertion, deletion and
search times of $\mathcal{T}$ are all worst-case $O(\log k)$.
In the second case, the main cost is to remove incident edges
on $u$ from $\mathcal{T}$.
For every neighbor $z\in \mathcal{N}'_u$ we search, in time $O(\log k)$,
if edge $(u,z)$ has been inserted into $\mathcal{T}$; so overall the deletion of incident
edges on $u$ from $\mathcal{T}$ is done in time $O(k\log k)=O(x\log k)$
as $|\mathcal{N}'_u|\le 2k$. Overall, the running time of Procedure {\sf Rematch$((u,v),t)$}
when $u$ is a low-degree vertex is $O(x\log^2
n\log(n/\delta))=O(k\log^4(n/\delta))$, as we set $x = O(k \log (n/\delta))$.
\end{proof}

%---------------------------------------------------------------------------------------------------------------------------------------------------

\subsubsection{Analyzing Rematching of a High-Degree Vertex.}

\begin{lemma}
\label{lem:high:rematch}
Let $x=8ck\log(n/\delta)$ and $y=8c\log(n/\delta)$.
Let $u$ be a high degree vertex, i.e., $d_u> x$.
Suppose we query edges $(u,z_1),\cdots,(u,z_i),\cdots, (u,z_y)$ from sketch $S_u$.
The probability that there exists an exposed vertex $z\in \{z_1,\cdots,z_y\}$ is
at least $1-\delta/n^c$.
Further, the running time of Procedure {\sf Rematch$((u,v), t)$}
when $u$ is a high-degree vertex is $O(\log^4(n/\delta))$.
\end{lemma}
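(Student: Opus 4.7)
The plan is to exploit two structural facts about the sketch $S_u$: it returns samples uniformly at random from the set $\mathcal{N}'_u$ of sketched neighbors of $u$, and by Invariant~1 every exposed neighbor of $u$ is forced to lie in $\mathcal{N}'_u$. The latter holds because if $v \in \mathcal{N}_u$ is exposed then $v \notin V_M$ has no associated sketch $S_v$, so Invariant~1 (which demands $v \in \mathcal{N}'_u$ or $u \in \mathcal{N}'_v$) forces $v \in \mathcal{N}'_u$. The matched portion of $\mathcal{N}'_u$ is small: since $|V_M| \le 2k$ at all times,
\[
|\mathcal{N}'_u \cap V_M| \le 2k, \qquad |\mathcal{N}'_u| \ge d_u - |\mathcal{N}_u \cap V_M| \ge d_u - 2k > x - 2k.
\]
Hence a single uniform sample from $S_u$ is matched with probability at most $2k/(d_u - 2k) \le 2k/(x - 2k) = 1/(4c\log(n/\delta) - 1) \le 1/2$.

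Next I would bound the probability that all $y = 8c\log(n/\delta)$ samples returned by the $k$-sample recovery sketch land inside $V_M$. Since these samples are drawn distinctly and uniformly from $\mathcal{N}'_u$, sampling without replacement only decreases this probability relative to independent sampling, so the probability that no returned vertex is exposed is at most $(1/2)^{y} = 2^{-8c\log(n/\delta)}$. With $\log$ interpreted base~$2$, this is at most $\delta/(2n^{c})$ for any $c \ge 1$ and $\delta \le 1$; for any other base, the constant in $y$ can be absorbed. Combining with the failure probability of the $x$-sample recovery sketch itself, which by Lemma~\ref{lem:l0:sampling} can be tuned to at most $\delta/(2n^c)$ at the cost of an extra $\log(n/\delta)$ factor in space (already accounted for in our setting of $x$), a union bound yields the claimed $1-\delta/n^c$ success probability.

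For the running time, each query to $S_u$ is an $\ell_0$-style sample, which by Lemma~\ref{lem:l0:sampling} costs time $O(\log^2(n/\delta)\cdot\log(n/\delta))=O(\log^3(n/\delta))$ after inflating the per-sample failure parameter to $\delta/n^{O(c)}$. Making the $y = O(\log(n/\delta))$ queries therefore takes $O(\log^4(n/\delta))$ time overall; the subsequent call to \textsf{AddEdgeToMatching}$((u,z),t)$ consists of a constant number of sketch initializations plus an $O(\log k)$ dictionary insertion into $\mathcal{T}$ and is absorbed in this bound.

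The main subtlety I anticipate is being precise about what $\mathcal{N}'_u$ actually contains versus $\mathcal{N}_u$: the entire argument hinges on Invariant~1 forcing \emph{every} exposed neighbor of $u$ into $\mathcal{N}'_u$, and on upper bounding $|\mathcal{N}'_u \cap V_M|$ by $2k$ using the promise (rather than by some graph-dependent quantity that could scale with $d_u$). Once this structural observation is in place, the probability calculation and running time analysis are essentially routine applications of the $k$-sample recovery guarantee.
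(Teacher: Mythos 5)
Your proposal is correct and takes essentially the same route as the paper's proof: bound $|\mathcal{N}'_u \cap V_M| \le 2k$ and $|\mathcal{N}'_u| \ge d_u - 2k$ using the promise, observe that exposed neighbors must lie in $\mathcal{N}'_u$ by Invariant~1, conclude that each queried edge lands on a matched vertex with small probability, raise to the $y$-th power, and union bound with the sketch's FAIL probability. The only cosmetic differences are that you bound the per-sample matched probability by $1/2$ rather than $1/(2c)$, and you invoke a sampling-without-replacement argument where the paper appeals to independence of the underlying $\ell_0$-samplers; both give the same final bound.
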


\begin{proof}
From Definition \ref{def:l0}, a $\ell_0$-sampler returns an element $i\in [n]$
with probability $\Pr{i}= \frac{|x_i|^0}{\ell_0(x)}$ and returns FAIL with
probability at most $\delta$. Using Lemma \ref{lem:l0:sampling}, there exists
a linear sketch-based algorithm for $\ell_0$-sampling using
$O(\log^2 n\log\delta^{-1})$ bits of space.

Sketch $S_u$ is a $x$-sample recovery sketch which means
we can recover $\min(x,d_u)$ items (here, edges) that are
inserted into sketch $S_u$.
We can think of $S_u$ as $x$ instances of a $\ell_0$-sampler.
Note that in this way the space
to implement $S_u$ would be $x$ times the space to implement a $\ell_0$-sampler
which is $O(x\log^2 n\log\delta^{-1})$ bits of space.
Each one of these $x$ $\ell_0$-samplers
returns FAIL with probability at most $\delta$.
Using a union bound the probability that $S_u$ returns FAIL is
$x\delta$.
We rescale the failure probability $\delta$ to $\frac{\delta}{2xn^c}$
for a constant $c$.
Therefore, the probability that sketch $S_u$ returns FAIL is
$\frac{\delta}{2n^c}$, and hence the overall space of $S_u$ is
$O(x\log^2 n\log(xn^c/\delta))=O(cx\log^2 n(\log(n/\delta)+\log\log(n/\delta)))=O(cx\log^2 n\log(n/\delta))$
as $x=8ck\log(n/\delta)$ and $k\le n$.

Let $(u,z_1),\cdots,(u,z_i),\cdots, (u,z_y)$ be the edges queried from sketch $S_u$ for $y=8c\log(n/\delta)$.
Note that the time to query $y$ edges from sketch $S_u$ is
$O(y\log^2 n\log(n/\delta))=O(\log^4(n/\delta))$.
Let us define event $NoFAIL$ if $S_u$ does not return FAIL.
Let us condition on event $NoFAIL$ which happens with probability
$\Pr{NoFAIL}\ge 1-\frac{\delta}{2n^c}$.

Fix a returned edge $(u,z_i)$.
Recall that $\mathcal{N}_u$ is the neighborhood
of $u$ that is, $\mathcal{N}_u=\{v\in V: (u,v)\in E_t\}$.
The number of matched vertices is at most $2k$,
i.e., $|V_M|\le 2k$. Thus, $|\mathcal{N}_u \cap V_M|\le 2k$ and
$|\mathcal{N}_u \backslash \mathcal{N}'_u|=|\mathcal{N}_u| - |\mathcal{N}'_u|\le 2k$.
The probability that $(u,z_i)$ is a fixed edge $(u,z)$ is
$\Pr{(u,z_i)=(u,z)}=\Pr{z_i=z}=\frac{1}{|\mathcal{N}'_u|}\le\frac{1}{|\mathcal{N}_u|-2k}=\frac{1}{d_u-2k}$.
Using a union bound and since $d_u>x=8ck\log(n/\delta)$ we obtain

\begin{align*}
%\begin{split}
    \Pr{z_i \in V_M }
  &   \le \sum_{y\in \mathcal{N}'_u \cap V_M} \Pr{z_i=y}
    \le \sum_{y\in \mathcal{N}'_u \cap V_M} \frac{1}{d_u-2k}
\\
&
    \le \frac{2k}{d_u-2k}\le \frac{1}{2c\log(n/\delta)}
    \le \frac{1}{2c}\enspace .
%\end{split}
\end{align*}

Therefore the probability that $z_i$ is an exposed vertex, i.e., $z_i\notin V_M$
is $\Pr{z_i \notin V_M }\ge 1- \frac{1}{2c}$.

We define an indicator variable $I_i$ for queried edge $(u,z_i)$ for $i\in[y]$
which is one if $z_i \notin V_M $ and zero otherwise.
Note that $\Pr{I_i=1}\ge 1- \frac{1}{2c}$. Let $I=\sum_{i=1}^y I_i$.
Then, since $y$ $\ell_0$-samplers of $S_u$ use independent hash functions
we obtain

\[
  \begin{split}
   \Pr{I=0}
   &= \Pr{z_1\in V_M \wedge \cdots \wedge z_i\in V_M \wedge \cdots \wedge z_y\in V_M}\\
   &= \prod_{i=1}^{y} \Pr{z_i \in V_M}
%\Pr{z_1\in V_M } \cdots  \Pr{z_i\in V_M } \cdots  \Pr{z_y\in V_M }
   \le (\frac{1}{2c})^y= (\frac{1}{2c})^{8c\log(n/\delta)}
   \le \frac{\delta}{2n^c} \enspace .
  \end{split}
\]

Therefore, the probability that there exists an exposed vertex $z\in \{z_1,\cdots,z_y\}$
is $1-\frac{\delta}{2n^c}$.
Overall, the probability that sketch $S_u$
does not return FAIL and there exists an exposed vertex
$z\in \{z_1,\cdots,z_y\}$ is

\smallskip
{$\displaystyle
  \Pr{NoFAIL \wedge \{z_1,\cdots,z_y\}\backslash V_M\neq \emptyset}\ge 1-\delta/n^c\enspace.$}
\end{proof}

\begin{lemma}
\label{lem:timestamps:case:6}
Suppose that we invoke {\sf Rematch$((u,v),t)$}, and
 before time $t$, Invariants $1$, $2$ and $3$ hold,
and matching $M$ is maximal.
Then after time $t$, Invariants $1$, $2$ and $3$ hold and
matching $M$ is maximal.
The running time of  {\sf Rematch$((u,v),t)$} is $O(k\log^4(n/\delta))$.
\end{lemma}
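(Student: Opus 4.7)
The plan is to prove the lemma by decomposing \textsf{Rematch}$((u,v),t)$ into its constituent primitives and then invoking, at each step, the corresponding invariant-preservation lemma already established. Concretely, the procedure performs: (i) \textsf{DeleteFromDS}$((u,v))$ together with the removal of $(u,v)$ from $M$ and from $V_M$; (ii) for each endpoint $w\in\{u,v\}$, either a call to \textsf{AddEdgeToMatching}$((w,z),t)$ for some exposed neighbor $z$, or (in the low-degree case with no exposed neighbor) a call to \textsf{DeleteNeighborhood}$(w)$. Since invariant preservation has already been proved for each of these primitives in Lemmas~\ref{lem:timestamps:case:1}, \ref{lem:timestamps:case:3}, and \ref{lem:timestamps:case:5}, the proof becomes a matter of carefully chaining these results and verifying the preconditions.

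First, I would observe that since $(u,v)\in M$ before time $t$, both $u$ and $v$ are matched and, by the initialization in \textsf{AddEdgeToMatching}, $v\in\mathcal{N}'_u$, $u\in\mathcal{N}'_v$, and $(u,v)\in\mathcal{T}$; consequently \textsf{DeleteFromDS}$((u,v))$ takes the $(u,v)\in\mathcal{T}$ branch of Lemma~\ref{lem:timestamps:case:3} and cleanly removes $(u,v)$ from $S_u$, $S_v$, and $\mathcal{T}$. After this, the edge $(u,v)$ is no longer in $E_t$, so Invariants~$1$--$3$ hold vacuously for it; for every other edge, the lemma of \textsf{DeleteFromDS} guarantees the invariants are preserved.

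Next, for each $w\in\{u,v\}$ I would split into two cases. If $w$ is low-degree, Lemma~\ref{lem:low:rematch} gives us that $\mathcal{N}'_w$ is recovered exactly and contains all exposed neighbors of $w$ (this uses Invariant~$1$: any edge $(w,z)$ with $z$ exposed must lie in $S_w$, since $z$ has no sketch); hence either we find an exposed $z$ and invoke \textsf{AddEdgeToMatching}$((w,z),t)$, which preserves the invariants by Lemma~\ref{lem:timestamps:case:1} and keeps $M$ maximal, or every neighbor of $w$ is matched and \textsf{DeleteNeighborhood}$(w)$ is invoked, whose precondition $\mathcal{N}_w\cap\overline{V}_M=\emptyset$ is therefore met, so Lemma~\ref{lem:timestamps:case:5} preserves the invariants and maximality is trivially retained (there is no unmatched-unmatched edge available at $w$). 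If $w$ is high-degree, Lemma~\ref{lem:high:rematch} guarantees with probability $\ge 1-\delta/n^c$ that one of the $y$ samples is an exposed vertex $z$, after which \textsf{AddEdgeToMatching}$((w,z),t)$ preserves the invariants via Lemma~\ref{lem:timestamps:case:1} and maintains maximality. Taking a union bound over both endpoints and over the $\ell_0$-sampler failure events keeps the total failure probability at $O(\delta/n^c)$.

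The main obstacle I expect is the subtle interaction between processing $u$ and then $v$: the operations performed while rematching $u$ (in particular \textsf{DeleteNeighborhood}$(u)$ or \textsf{AddEdgeToMatching}$((u,z),t)$) can modify sketches $S_{w'}$ of other matched vertices $w'$ and entries of $\mathcal{T}$, and in principle this could break a precondition used when we subsequently handle $v$. I would address this by observing that after the initial \textsf{DeleteFromDS}$((u,v))$ the edge $(u,v)$ is gone from all data structures, so $v\notin\mathcal{N}'_u$ and $u\notin\mathcal{N}'_v$; hence processing $u$ cannot touch $S_v$ through $(u,v)$, and any edges it does insert into some $S_{w'}$ are handled by the invariant-preservation lemma for the corresponding primitive. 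Thus the state entering the processing of $v$ still satisfies Invariants~$1$--$3$, and the same case analysis applies. Finally, the running time is bounded by twice the low-degree cost from Lemma~\ref{lem:low:rematch}, namely $O(k\log^4(n/\delta))$, since \textsf{DeleteFromDS} and the high-degree branch are cheaper.
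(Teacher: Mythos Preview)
Your proposal is correct and follows essentially the same route as the paper: both arguments observe that $(u,v)\in\mathcal{T}$ so \textsf{DeleteFromDS} cleanly removes it, then split each endpoint into low- and high-degree cases and invoke Lemmas~\ref{lem:timestamps:case:1}, \ref{lem:low:rematch}, \ref{lem:timestamps:case:5}, and \ref{lem:high:rematch} exactly as you do. You are in fact slightly more careful than the paper in two places---the interaction between processing $u$ and then $v$, and the union bound over the two endpoints' failure events---both of which the paper's proof glosses over with ``The following proof is the same for vertex $v$.''
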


\begin{proof}
First of all, we invoke {\sf AddEdgeToMatching$((u,v),t)$} to add edge $(u,v)$
to matching $M$. In Procedure {\sf AddEdgeToMatching$((u,v),t')$}, we insert the edge to
$M$ as well as $\mathcal{T}$ for some $t'\le t$.
We also insert $(u,v)$ to the sketch of
whichever vertex ($u$ or $v$) was exposed before time $t'$.
So at the end of {\sf AddEdgeToMatching$((u,v),t')$} edge $(u,v)$ is in $S_u$, $S_v$ and $\mathcal{T}$.

Once we invoke, Procedure {\sf DeleteFromDS$((u, v))$}, it deletes edge $(u,v)$ from
$S_u$, $S_v$ and $\mathcal{T}$. We also delete the edge from $M$. So after invocation of
{\sf DeleteFromDS$((u, v))$}, Invariants $1$, $2$ and $3$ hold.
Let us fix vertex $u$. The following proof is the same for vertex $v$.
We consider two cases for $u$.

(i) First, $u$ is a low-degree vertex, i.e., $d_u\le x$
assuming Invariants $1$, $2$ and $3$ hold. Observe that using
Lemma \ref{lem:low:rematch}, after the invocation of Procedure
{\sf Rematch$((u,v), t)$}, matching $M$ is maximal.
Moreover, the running time of {\sf Rematch$((u,v), t)$} when $u$ is
a low-degree vertex is $O(x\log^2 n\log(n/\delta))=O(x\log^3(n/\delta))$.
Let $\mathcal{N}'_u$ be the set of neighbors of $u$ that up to time $t$
are inserted into sketch $S_u$ but not deleted from $S_u$.
By Definition~\ref{def:ksparse},
%Using Lemma \ref{lem:low:degree:recover:set}
by querying $S_u$ and with probability at least $1-\frac{\delta}{2n^c}$, we can recover $\mathcal{N}'_u$.
Observe that assuming Invariants $1$, $2$ and $3$ hold, we must have
$(\mathcal{N}_u \setminus \mathcal{N}'_u) \subseteq V_M$.
That is, those neighbors of
$u$ that are not in $\mathcal{N}'_u$ at time $t$ must be matched.
Therefore, all exposed neighbors of $u$ must be in $\mathcal{N}'_u$.
We have two sub-cases. First, if there is an exposed
$z \in\mathcal{N}'_w$ then we invoke {\sf AddEdgeToMatching$((w,z), t)$}.
Lemma \ref{lem:timestamps:case:1} shows that Invariants $1$, $2$ and $3$ hold
after invocation of {\sf AddEdgeToMatching$((w,z), t)$}.
The second subcase is if there is no exposed node in $\mathcal{N}'_w$, we then invoke
{\sf DeleteNeighborhood$(\text{vertex } w)$}.
Lemma \ref{lem:timestamps:case:5} shows that Invariants $1$, $2$ and $3$ hold
after invocation of {\sf DeleteNeighborhood$(\text{vertex } w)$}.

(ii) Second, $u$ is a high-degree vertex assuming Invariants $1$, $2$ and $3$ hold.
Observe that using Lemma \ref{lem:high:rematch}, after the invocation of Procedure
{\sf Rematch$((u,v), t)$}, matching $M$ with probability at least $1-\delta/n^c$
is maximal and the running time of Procedure {\sf Rematch$((u,v), t)$}
when $u$ is a high-degree vertex is $O(\log^4(n/\delta))$.
Since with probability at least $1-\delta/n^c$ there exists an exposed vertex
$z\in \{z_1,\cdots,z_y\}$, with this probability we invoke {\sf AddEdgeToMatching$((w,z), t)$}.
Lemma \ref{lem:timestamps:case:1} then shows that Invariants $1$, $2$ and $3$ hold
after invocation of {\sf AddEdgeToMatching$((w,z), t)$}.
\end{proof}

%---------------------------------------------------------------------------------------------------------------------------------------------------

\subsection{Completing the Proof of Theorem \ref{thm:vc:max:k}}

First we prove the claim for the space complexity of our algorithm.
We maintain at most $2k$ sketches (for matched vertices), each one is
an $x$-sample recovery sketch for $x=8ck\log(n/\delta)$.
From Definition~\ref{def:ksparse} and the proof of Lemma
%Lemmas \ref{lem:low:degree:recover:set} and
\ref{lem:timestamps:case:6}, the space to maintain an
$x$-sample recovery sketch is $O(k\log^4(n/\delta))$. So, we need
$O(k^2\log^4(n/\delta))$ bits of space to maintain the sketches of matched vertices.
%Using Lemma \ref{lem:size:tree}, t
The size of data structure $\mathcal{T}$, i.e.,
the number of edges stored in $\mathcal{T}$ is $|\mathcal{T}|\le (2k)^2$.
Thus, overall the space complexity of our algorithm is $O(k^2\log^4(n/\delta))$ bits.

Next we prove the update time and query time of our dynamic algorithm
for maximal matching is $\tilde{O}(k)$. In fact, the deletion or the insertion
time of an edge $(u,v)$ is dominated by the running time of most expensive
procedures which are {\sf AnnounceNeighborhood$(u)$}, {\sf DeleteNeighborhood$(u)$},
and {\sf Rematch$((u,v),t)$}. The running times of these procedures are also
dominated by the time to query at most $x$ edges from sketches $S_u$ and $S_v$ plus
the time to search for $x$ edges in data structure $\mathcal{T}$.

The time to query at most $x$ edges from sketches $S_u$ and $S_v$ using
Lemma \ref{lem:timestamps:case:6} is $O(k\log^4(n/\delta))$. The time to search
for $x$ edges in data structure $\mathcal{T}$ is $O(x\log k)=O(k\log^2(n/\delta))$ as we assume
the insertion, deletion and query times of $\mathcal{T}$ are all worst-case $O(\log k)$.
Therefore, the update time and query time of our dynamic algorithm
for maximal matching is $O(k\log^4(n/\delta))$.

Finally, we give the correctness proof of Theorem \ref{thm:vc:max:k}.
Observe that since after every time $t$ of stream $S$, Invariants $1,2$ and $3$ hold, and hence the matching $M$
is maximal. In fact, since Invariant $1$ holds,  for every edge $(u,v)\in E_t$
we have at least one of $v\in \mathcal{N}'_u$ or $u\in \mathcal{N}'_v$
which means $M$ is maximal.
Recall that $V_M$ is the set of vertices of matched edges in $M$.
Note that for every matched vertex $u\in V_M$,
we maintain an $x$-sample recovery sketch $S_u$.

Next, similar to the algorithm of Theorem~\ref{THM:VC:INSERTION} (Section \ref{sec:psa:vc}) we construct a graph $(G_M=(V_M,E_M),k)$.
For every matched vertex $v$, we extract up to $k$ edges incident on $v$ from
sketch $S_v$ and store them in set $E_M$. At the end,
we run the kernelization algorithm of Section~\ref{sec:kernel:VC}
on instance $(G_M=(V_M,E_M),k)$. The rest of proof of correctness of
Theorem~\ref{thm:vc:max:k} requires showing that maintaining a maximal
matching is sufficient to obtain a kernel for vertex cover, which is what was exactly argued in proof of Theorem~\ref{THM:VC:INSERTION}.

%---------------------------------------------------------------------------------------------------------------------------------------------------
%\section{DPSA for $VC(k)$}
%\label{sec:vc:dpsa}
%In this section we give the algorithm of Theorem \ref{thm:vc:dpsa}.

%---------------------------------------------------------------------------------------------------------------------------------------------------

\section{Dynamic Parameterized Streaming Algorithm (DPSA for $VC(k)$ }
\label{app:vc:dpsa}

In this section we prove Theorem~\ref{THM:VC:DPSA}, which is restated below:

\begin{reptheorem}{THM:VC:DPSA}
Let $S$ be a dynamic parameterized stream of  insertions and deletions of edges of an underlying graph $G$.
There exists a randomized $(nk, nk+2^{2k^2})$-DPSA for $VC(k)$ problem.
\end{reptheorem}
\begin{proof}
Let $S$ be a stream of insertions and deletions of edges to an
underlying graph $G(V,E)$.
We maintain a
%For each node, we keep a
$kn$-sample recovery algorithm (Definition~\ref{def:ksparse}), which
processes all the edges seen in the stream; we also keep a counter
to record the degree of the vertex.
At the end of the stream $S$, we recover a graph $G'$ by extracting
the at most $kn$ edges from the recovery algorithm data structure, or
outputting ``NO'' if there are more than $kn$ edges currently in the
graph.
%the at most $k$ nodes from each $k$-sample recovery algorithm.
We then run the kernelization algorithm of Section~\ref{sec:kernel:VC}
on instance $(G', k)$.

%\paragraph{$k$-Vertex Cover.}
Observe that if a graph has a vertex cover of
size at most $k$, then there can be at most $nk$ edges.
Each node in the cover has degree at most $n$, and every node must
either be in the cover, or be adjacent to a node in the cover.
Therefore, if the graph has more than $nk$ edges, it cannot have a
vertex cover of size $k$.
We take advantage of this fact to bound the overall cost of the
algorithm in the dynamic case.
We maintain a structure which allows us to recover at most $nk$ edges
from the input graph, along with a counter for the current number of
``live'' edges.
This can be implemented using a $k$-sample recovery algorithm
(Definition~\ref{def:ksparse}), or indeed by a deterministic algorithm
(e.g. Reed-Solomon syndromes).

The algorithm now proceeds follows.
To test for a vertex cover of size $k$, we first test whether the
number of edges is above $nk$: if so, there can be no such cover, and
we can immediately reject.
Otherwise, we can recover the full graph, and pass the graph to the
standard kernelization algorithm (Section~\ref{sec:kernel:VC}).
The total time for this algorithm is then $O(nk + 2^{2k^2})$, and the
space used is that to store the $k$-sample recovery algorithm, which
is $\tilde{O}(nk)$.

This assumes that each edge is inserted at most once, i.e. the same
edge is not inserted multiple times without intervening deletion.
This assumption can be removed, if we replace the edge counter with a
data structure which counts the (approximate) number of distinct edges
currently in the data structure.
This can provide a constant factor approximation with polylogarithmic
space.
This is sufficient to determine if the number of edges is greater than
$nk$, and if not, to recover the at most (say) $1.01nk$ edges in
the graph from the data structure storing the edges, and apply the kernelization algorithm of Section~\ref{sec:kernel:VC}.
\end{proof}

\junk{
\paragraph{[Alternate] Complexity of the algorithm.}
The space complexity of the algorithm is $\tilde{O}(nk)$: for each
node, we keep a $k$-sample recovery algorithm, which requires space
$\tilde{O}(k)$.
Tracking the degree of each vertex consumes space $O(n)$ in total.
Updating the $k$-sample algorithm takes time $\tilde{O}(1)$ per
update, be it insertion or deletion of an edge.
Querying each $k$-sample algorithm takes time linear in its size,
$\tilde{O}(nk)$ overall.
Thus, the cost of the query operation is dominated by the time to run
the kernel algorithm, which in turn is dominated by the brute-force
search step, which takes $O(2^{2k^2})$.

\paragraph{[Alternate] Correctness.}
First, we argue that each $k$-sample algorithm successfully recovers
(up to) $k$ true edges incident on each node, i.e., no instance fails
to recover, reports fewer than $\min(k,d_v)$ true nodes adjacent to $v$,
and none reports false positive neighbors.
For a single instance, the probability of such failures can be made
polynomially small within the $\tilde{O}(k)$ space bounds, so summing
these up, we obtain small failure probability overall via the union
bound.
Thus we assume that these algorithms all perform correctly, except
with this small probability.

Next, we argue that on the recovered graph $G'$, the kernelization
algorithm performs correctly: either it finds a vertex cover for $G$ of size
at most $k$, or correctly reports that no such cover exists.
We proceed by arguing that the kernelization algorithm described in
Section~\ref{sec:kernel:VC} has access to the same information from
$G'$ as it would for $G$, and therefore can follow a corresponding
execution path.
Consequently, the outcome must be the same.

We consider each of the kernelization rules in turn.
If (1) applies, there is a vertex $v \in G'$ with $d_v > k'$.
This must be a vertex whose degree in $G$ was also more than $k'$.
Accordingly, we remove $v$ and all edge incident on $v$ from $G'$, and
decrease $k'$ by one.
We call this a ``node removal''.
Note that there may well be edges present in $G$ incident on $v$ which
are not present in $G'$.
However, their absence in $G'$ does not affect the execution of the
algorithm, and they are effectively removed following the selection of
$v$ to the vertex cover.

If there is an isolated vertex $v \in G'$ (and hence is removed by rule (2)), then it would also be isolated
in $G$ following the sequence of applications of rule (1).
This is since each application of rule (1) decreases the degree of any
other vertex by at most 1, and we apply rule (1) at most $k$ times.
The only reason $v$ could become isolated in $G'$ but not in $G$ after
following the same set of node removals under rule (1) on both graphs
is if $v$ had degree $d'_v$ in $G'$ but degree $d_v$ in $G$, and was
subject to more than $d'_v$ removals of adjacent nodes.
However, observe that if $d_v \le k$, then $d'_v = d_v$, i.e. the
degree of $v$ in $G'$ is preserved, since all incident edges are
recovered correctly by the $k$-sample algorithm.
Hence, we only have $d'_v < d_v$ if $d_v > k$, and so $d'_v$ cannot
reach 0, since there cannot be this many node removals.

If we can apply rule (3), then the modified $G'$ has more than $k'^2$
edges.  Since the edges of $G'$ are a subset of the edges of $G$, then
necessarily, there can be no vertex cover of $G$ of size at most
$k'$.
This then leaves us with the case that no rules apply, and we have
reached a kernel of $G'$  with at most $2k$ vertices and at most $k'^2 \leq k^2$ edges.
We now observe that we must have the full connectivity information on
this set of at most $2k$ vertices.
Let $W$ be the set of removed nodes, and let $G \setminus W$ be the
graph $G$ with $W$ (and all edges incident on nodes in $W$) removed.

Suppose that some edge $(u,v)$ is present in $G \setminus W$ but
absent in $G' \setminus W$.
This can only occur if the degree of both $u$ and $v$ in $G$ was high:
if $d_u \le k$, then the edge $(u,v)$ would have been captured by $u$ in its
$k$-sample algorithm; and similarly for $v$.
However, if it was the case that $d_u > k$, then we could apply rule
(1) on $u$.
Note that if initially $d_u > k$, then it is the case that after each
application of (1), it remains that $d_u > k'$, since the degree of
$u$ in the modified graph decreases by at most 1, and $k'$ decreases
by 1 in each application of (1).
So if $u$ survives (is not covered by another node), then it can still
be picked by rule (1), contradicting the assumption that no rule can
be applied.
Therefore, if we reach the state that no rule can be applied, then we
must have the full adjacency information of the remaining kernel,
i.e. $G \setminus W = G'\setminus W$, and so we reach the same
conclusion when applying the exhaustive search.
Thus we correctly determine the solution to the $VC(k)$ instance from
$G'$.
}

\section{Concluding Remarks}

By combining techniques of kernelization with randomized sketch structures, we have initiated the study of \emph{parameterized streaming algorithms}. We considered the widely-studied Vertex Cover problem, and obtained results in three models: insertion only stream, dynamic stream and promised dynamic stream.
%We have initiated the study of exact streaming algorithms for NP-hard
%problems, by adopting the model of parameterized complexity.
There are several natural directions for further study. We mention some of the below.

\paragraph{Dynamic Algorithms.} Recent work has uncovered connections between streaming algorithms and
dynamic algorithms~\cite{KKM13}.
It is natural to ask whether we can make the algorithms provided
dynamic: that is, ensure that after each step they provide (implicitly
or explicitly) a current answer to the desired problem.
The current algorithm for maximal matching sometimes takes time
polynomial in $k$ to process an update: can this be made sublinear in
$k$?

Our main algorithm in Section~\ref{sec:promised:dynamic} applies in the case where
there is a promise on the size of the maximal matching.
Can this requirement be relaxed?
That is, is there a dynamic algorithm that will succeed in finding a
maximal matching of size $k$ at time $t$, even if some intermediate
maximal matching has exceeded this bound?  Or can the cost be made
proportional to the largest maximal matching encountered, i.e. remove
the requirement for $k$ to be specified at the time, and allow the
algorithm to adapt to the input instance.

\paragraph{Other Problems.} In this paper, we only considered the
Vertex Cover problem. We think it is interesting to consider other
NP-hard problems in the framework of parameterized streaming, and that
kernelization algorithms can be helpful in this endeavour. In some
cases, one might be able to obtain parameterized streaming algorithms
with simple observations. For example, in the Feedback Vertex Set
($FVS(k)$) problem, we are given a graph $G=(V, E)$ and an integer
$k$. The question is whether there exists a set $V'\subseteq V$ such
that $G\setminus V'$ has no cycles.
We can show the following results (proved in the appendix)
%We are able to show the following
%two simple results (in the insertion-only model)
for $FVS(k)$:

\begin{theorem}
\label{thm:fvs:upper}
There is a deterministic PSA for $FVS(k)$ which uses $O(nk)$ space.
\end{theorem}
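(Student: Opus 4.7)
The plan rests on a simple edge-count bound. If $G=(V,E)$ admits a feedback vertex set $S$ with $|S|\le k$, then $G\setminus S$ is a forest on $n-|S|$ vertices, hence has at most $n-|S|-1\le n-1$ edges, while the edges with at least one endpoint in $S$ number at most $k(n-1)$. Consequently $|E|\le (k+1)(n-1)=O(nk)$, so any graph exceeding this edge threshold is a trivial \textsc{no}-instance of $FVS(k)$.

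Building on this observation, I would maintain an explicit list $L$ of the edges currently present, together with a counter for $|L|$. On each insertion $(u,v)$ the algorithm appends $(u,v)$ to $L$ and increments the counter; as soon as the counter exceeds $(k+1)(n-1)$, the algorithm reports that no FVS of size at most $k$ exists and terminates. This early rejection is safe because in the insertion-only model the edge set only grows, so the property ``$G$ has an FVS of size at most $k$'' is monotonically non-increasing over the stream, exactly as in the reasoning immediately preceding Theorem~\ref{THM:VC:INSERTION}. At any earlier timestamp $i$, $L$ coincides exactly with the current edge set $E_i$, so the algorithm extracts a solution (or reports that none exists) by feeding $G_i=(V,E_i)$ into any known FPT routine for $FVS(k)$, e.g.\ the classical $O(c^k\cdot n^{O(1)})$-time branching / iterative compression algorithm.

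The space bound is immediate: we store at most $(k+1)(n-1)$ edges, each using $O(\log n)$ bits, giving $O(nk\log n)=\tilde{O}(nk)$ total space, as required. Correctness is equally clean: when the counter does not exceed the threshold, $L$ faithfully encodes the current graph and the FPT oracle returns the correct answer; when it does, the edge-count bound above certifies a \textsc{no}-instance, and monotonicity justifies termination. Determinism is preserved throughout because no randomized sketch or hashing is invoked; unlike the vertex cover setting of Theorem~\ref{THM:VC:INSERTION}, there is no need to maintain a maximal matching or a kernel sketch, since we can simply afford to retain the entire edge set. The only real ``obstacle'' is establishing the edge-count certificate used for early rejection; once that is in place the algorithm and its analysis are essentially mechanical, which is why the upper bound of $O(nk)$ is comparatively easy to achieve (and why closing the gap to a $f(k)$-space parameterized algorithm for $FVS(k)$, rather than matching the present $O(nk)$, would be the genuinely interesting direction for future work).
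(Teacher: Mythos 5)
Your proof is correct and follows essentially the same route as the paper: the same edge-count bound (a graph with an FVS of size at most $k$ has $O(nk)$ edges, since removing the FVS leaves a forest), the same early-rejection rule when the stored edge set exceeds that threshold, and the same fallback of storing the full graph and invoking an off-the-shelf FPT algorithm for $FVS(k)$. The only difference is in the exact constant in the threshold ($(k+1)(n-1)$ versus the paper's $n(k+1)$), which is immaterial.
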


\begin{theorem}
\label{thm:fvs:lower}
Any (randomized) PSA for $FVS(k)$ requires $\Omega(n)$ space.
\end{theorem}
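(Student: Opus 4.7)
The plan is to reduce from the one-way \textsc{Index} problem in which Alice holds $X\in\{0,1\}^n$ and Bob holds $\iota\in [n]$; this has a one-way randomized communication lower bound of $\Omega(n)$, as already invoked in the proof of Theorem~\ref{thm:vc:lower:bound}. Given such an \textsc{Index} instance, I will build an FVS stream on $n+2$ vertices: a hub $h$, vertices $v_1,\ldots,v_n$, and one further vertex $h'$ inserted only by Bob. Alice streams the edge $(h,v_i)$ for every $i$ with $x_i=1$; her subgraph is a subset of a star centered at $h$ and hence acyclic. She then forwards the memory state of any purported PSA for $FVS(k)$ to Bob, who appends the two edges $(h,h')$ and $(h',v_\iota)$ to the stream.

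The key observation is that the resulting graph $G$ has a cycle if and only if $x_\iota=1$. If $x_\iota=1$, the three edges $(h,v_\iota),(v_\iota,h'),(h',h)$ form a triangle. If $x_\iota=0$, the edge $(h,v_\iota)$ is absent, so $v_\iota$ has a single incident edge (to $h'$), $h'$ has degree two, and Alice's remaining edges form a star at $h$; a quick degree inspection confirms that $G$ is a tree. Hence $G$ admits a feedback vertex set of size $0$ iff $x_\iota=0$, which already yields the claimed $\Omega(n)$ bound against any PSA that must operate at $k=0$.

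To make the lower bound apply uniformly for any target value of $k\geq 1$, I would pad Alice's construction with $k$ vertex-disjoint triangles on $3k$ fresh vertices that do not depend on $X$ and are disjoint from the \textsc{Index} gadget. Each padding triangle requires at least one vertex in any FVS, so the $k$-budget is exactly consumed by the padding, and $G$ now admits an FVS of size $k$ iff the \textsc{Index} gadget contributes no further cycle, iff $x_\iota=0$. Bob therefore reads off $x_\iota$ from the YES/NO answer of the PSA, so the PSA's space is at least the one-way randomized communication complexity of \textsc{Index}, namely $\Omega(n)$. The main (mild) obstacle is verifying the ``iff'' carefully: one must check that no unintended cycle is created among Alice's edges alone or between the padding triangles and the gadget, which reduces to a short case analysis on the degrees of $h$, $v_\iota$, $h'$, and the pad vertices.
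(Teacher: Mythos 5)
Your proof is correct, and it takes a genuinely different route from the paper's. The paper reduces from \textsc{Disjointness}: it builds an $8n$-node gadget (blocks $a_i,\ldots,h_i$ chained in a path) in which Alice and Bob each add two edges per index depending on $x_i$ and $y_i$, and the resulting graph is a path precisely when $x$ and $y$ are disjoint, acquiring a cycle at index $i$ exactly when $x_i=y_i=1$. You instead reduce from \textsc{Index} with a much leaner $(n{+}2)$-vertex construction: Alice streams a partial star at a hub $h$ encoding $X$, and Bob appends the two edges $(h,h')$ and $(h',v_\iota)$, creating a triangle iff $x_\iota=1$; your case analysis for the iff is correct. Both proofs then use the identical padding trick of adjoining $k$ vertex-disjoint triangles on $3k$ fresh vertices to lift from $k=0$ to general $k$. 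The trade-off: your \textsc{Index}-based argument is shorter, uses fewer vertices, and is also more consistent with the \textsc{Index}-based reduction the paper already uses for the $VC(k)$ lower bound, but \textsc{Index} is a one-way lower bound and so applies only to one-pass algorithms; \textsc{Disjointness} has an $\Omega(n)$ lower bound even for unrestricted two-way communication, so the paper's gadget would extend without change to multi-pass streaming lower bounds. Since PSA is defined as a one-pass model, your reduction fully establishes the stated theorem.
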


%The proofs of the results are deferred to Appendix~\ref{app:FVS}.
%~\\

\paragraph{Acknowledgments.}
The third author would like to thank Marek Cygan for fruitful discussion
on early  stages of this project in a Dagstuhl workshop.
We thank Catalin Tiseanu for some useful discussions regarding the
Feedback Vertex Set problem.

%===========================================================================
%= REFERENCES
%===========================================================================

\newcommand{\Proc}{Proceedings of the~}

\newcommand{\STOC}{Annual ACM Symposium on Theory of Computing (STOC)}
\newcommand{\FOCS}{IEEE Symposium on Foundations of Computer Science (FOCS)}
\newcommand{\SODA}{Annual ACM-SIAM Symposium on Discrete Algorithms (SODA)}
\newcommand{\SOCG}{Annual Symposium on Computational Geometry (SoCG)}
\newcommand{\ICALP}{Annual International Colloquium on Automata, Languages and Programming (ICALP)}
\newcommand{\ESA}{Annual European Symposium on Algorithms (ESA)}
\newcommand{\CCC}{Annual IEEE Conference on Computational Complexity (CCC)}
\newcommand{\RANDOM}{International Workshop on Randomization and Approximation Techniques in Computer Science (RANDOM)}
\newcommand{\PODS}{ACM SIGMOD Symposium on Principles of Database Systems (PODS)}
\newcommand{\SSDBM}{ International Conference on Scientific and Statistical Database Management (SSDBM)}
\newcommand{\ALENEX}{Workshop on Algorithm Engineering and Experiments (ALENEX)}
\newcommand{\BEATCS}{Bulletin of the European Association for Theoretical Computer Science (BEATCS)}
\newcommand{\CCCG}{Canadian Conference on Computational Geometry (CCCG)}
\newcommand{\CIAC}{Italian Conference on Algorithms and Complexity (CIAC)}
\newcommand{\COCOON}{Annual International Computing Combinatorics Conference (COCOON)}
\newcommand{\COLT}{Annual Conference on Learning Theory (COLT)}
\newcommand{\COMPGEOM}{Annual ACM Symposium on Computational Geometry}
\newcommand{\DCGEOM}{Discrete \& Computational Geometry}
\newcommand{\DISC}{International Symposium on Distributed Computing (DISC)}
\newcommand{\ECCC}{Electronic Colloquium on Computational Complexity (ECCC)}
\newcommand{\FSTTCS}{Foundations of Software Technology and Theoretical Computer Science (FSTTCS)}
\newcommand{\ICCCN}{IEEE International Conference on Computer Communications and Networks (ICCCN)}
\newcommand{\ICDCS}{International Conference on Distributed Computing Systems (ICDCS)}
\newcommand{\VLDB}{ International Conference on Very Large Data Bases (VLDB)}
\newcommand{\IJCGA}{International Journal of Computational Geometry and Applications}
\newcommand{\INFOCOM}{IEEE INFOCOM}
\newcommand{\IPCO}{International Integer Programming and Combinatorial Optimization Conference (IPCO)}
\newcommand{\ISAAC}{International Symposium on Algorithms and Computation (ISAAC)}
\newcommand{\ISTCS}{Israel Symposium on Theory of Computing and Systems (ISTCS)}
\newcommand{\JACM}{Journal of the ACM}
\newcommand{\LNCS}{Lecture Notes in Computer Science}
\newcommand{\RSA}{Random Structures and Algorithms}
\newcommand{\SPAA}{Annual ACM Symposium on Parallel Algorithms and Architectures (SPAA)}
\newcommand{\STACS}{Annual Symposium on Theoretical Aspects of Computer Science (STACS)}
\newcommand{\SWAT}{Scandinavian Workshop on Algorithm Theory (SWAT)}
\newcommand{\TALG}{ACM Transactions on Algorithms}
\newcommand{\UAI}{Conference on Uncertainty in Artificial Intelligence (UAI)}
\newcommand{\WADS}{Workshop on Algorithms and Data Structures (WADS)}
\newcommand{\SICOMP}{SIAM Journal on Computing}
\newcommand{\JCSS}{Journal of Computer and System Sciences}
\newcommand{\JASIS}{Journal of the American society for information science}
\newcommand{\PMS}{ Philosophical Magazine Series}
\newcommand{\ML}{Machine Learning}
\newcommand{\DCG}{Discrete and Computational Geometry}
\newcommand{\TODS}{ACM Transactions on Database Systems (TODS)}
\newcommand{\PHREV}{Physical Review E}
\newcommand{\NATS}{National Academy of Sciences}
\newcommand{\MPHy}{Reviews of Modern Physics}
\newcommand{\NRG}{Nature Reviews : Genetics}
\newcommand{\BullAMS}{Bulletin (New Series) of the American Mathematical Society}
\newcommand{\AMSM}{The American Mathematical Monthly}
\newcommand{\JAM}{SIAM Journal on Applied Mathematics}
\newcommand{\JDM}{SIAM Journal of  Discrete Math}
\newcommand{\JASM}{Journal of the American Statistical Association}
\newcommand{\AMS}{Annals of Mathematical Statistics}
\newcommand{\JALG}{Journal of Algorithms}
\newcommand{\TIT}{IEEE Transactions on Information Theory}
\newcommand{\CM}{Contemporary Mathematics}
\newcommand{\JC}{Journal of Complexity}
\newcommand{\TSE}{IEEE Transactions on Software Engineering}
\newcommand{\TNDE}{IEEE Transactions on Knowledge and Data Engineering}
\newcommand{\JIC}{Journal Information and Computation}
\newcommand{\ToC}{Theory of Computing}
\newcommand{\MST}{Mathematical Systems Theory}
\newcommand{\Com}{Combinatorica}
\newcommand{\NC}{Neural Computation}
\newcommand{\TAP}{The Annals of Probability}
\newcommand{\TCS}{Theoretical Computer Science}

\bibliographystyle{plain}
%\bibliography{../../../gem,../../../more}
%\bibliography{../../gem,../../more}
\bibliography{gem}
%\bibliography{more}
%===========================================================================

\appendix

\section{Feedback Vertex Set}
\label{app:FVS}
In this section, we prove Theorem~\ref{thm:fvs:upper} and Theorem~\ref{thm:fvs:lower}

%give upper and lower bounds for the space complexity of PSA for $FVS(k)$.

%A feedback vertex set of an undirected graph $G=(V,E)$ is a set $V'\subseteq V$ such that $G\setminus V'$ has no cycles. The $FVS(k)$ problems asks whether there is a feedback vertex set of size at most $k$, or not. The following two theorems give upper and lower bounds respectively for the space complexity of PSAs for the $FVS(k)$ problem.

\subsection{Parameterized Streaming Algorithm (PSA) for $FVS(k)$}

We restate and prove Theorem~\ref{thm:fvs:upper} below:

\begin{reptheorem}{thm:fvs:upper}
There is a \emph{deterministic} PSA for $FVS(k)$ which uses $O(nk)$ space.
\end{reptheorem}

%give a deterministic PSA for $FVS(k)$ which needs $O(kn)$ space.
%Then we also show that no (randomized) PSA can solve $FVS(k)$ in $f(k)\cdot n^{1-\epsilon}$ space, for any function $f$ and any $\epsilon>0$.

%\subsection{$O(kn)$ deterministic PSA for $FVS(k)$}
\begin{proof}
To prove Theorem~\ref{thm:fvs:upper}, we use the following lemma bounds the number of edges of a graph with small feedback vertex set.

\begin{lemma}
Any graph with a feedback vertex set of size at most $k$ can have at most $n(k+1)$ edges, where $n$ is the number of vertices of the graph.
\end{lemma}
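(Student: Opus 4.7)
The plan is to directly bound the edges by splitting them according to whether they touch the feedback vertex set. Let $F \subseteq V$ be a feedback vertex set of $G$ with $|F| \leq k$. By definition, the induced subgraph $G[V \setminus F]$ is acyclic, i.e., a forest. Since it has at most $n - |F|$ vertices, it contains at most $n - |F| - 1 \leq n - 1$ edges.

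Next I would bound the number of edges with at least one endpoint in $F$. Each such edge contributes to the degree of at least one vertex in $F$, so the count is at most $\sum_{v \in F} \deg_G(v) \leq |F| \cdot (n-1) \leq k(n-1)$. (One could sharpen this by subtracting the edges inside $F$, but the crude bound already suffices.)

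Adding the two contributions gives
\[
|E(G)| \;\leq\; (n - |F| - 1) + |F|(n-1) \;\leq\; (n-1) + k(n-1) \;=\; (k+1)(n-1) \;\leq\; n(k+1),
\]
which is the desired bound. There is essentially no obstacle here; the only thing to be slightly careful about is not double-counting the edges inside $F$ (handled automatically by using the degree-sum upper bound rather than equality) and noting that the forest bound degenerates gracefully when $|F| = n$. Once this lemma is in hand, the PSA for $FVS(k)$ follows by storing up to $n(k+1)+1$ edges in the stream (e.g., via a simple counter plus edge buffer) and then invoking any offline FPT algorithm for feedback vertex set on the resulting graph, giving the $O(nk)$ space bound of Theorem~\ref{thm:fvs:upper}.
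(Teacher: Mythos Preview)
Your proof is correct and follows essentially the same approach as the paper: split the edges into those inside the forest $G[V\setminus F]$ (at most $n-|F|-1$) and those incident to $F$ (at most $|F|(n-1)$), then add. Your final bound $(k+1)(n-1)$ is in fact slightly sharper than the paper's $n+nk$, but the argument is the same.
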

\begin{proof}
Let the graph be $V=(G,E)$ and $S\subseteq V$ be the feedback vertex
set of size at most $k$. Then the graph $G\setminus S$ is a forest,
and hence has at most $n-|S|-1$ edges. Now each of the vertices in $S$
is adjacent to at most $n-1$ vertices in $G$. Hence the total number
of edges of $G$ is at most $(n-|S|-1)+(n-1)|S| = n+(n-2)|S|-1 \leq n+nk$ since $|S|\leq k$.
\end{proof}

\noindent The PSA algorithm for $FVS(k)$ runs as follows:
\begin{itemize}
\item Store all the edges that appear in the stream.
\item If the number of edges exceeds $n(k+1)$, output NO.
\item Otherwise the total number of edges (and hence the space complexity) is $n+nk$. Now that we have stored the entire graph, use any one of the various known FPT algorithms~\cite{DBLP:conf/swat/CaoCL10, DBLP:journals/corr/KociumakaP13} to solve the $FVS(k)$ problem.
\end{itemize}
This concludes the proof of Theorem~\ref{thm:fvs:upper}.
\end{proof}

\subsection{$\Omega(n)$ Lower Bound for $FVS(k)$}

We restate and prove Theorem~\ref{thm:fvs:lower} below:

\begin{reptheorem}{thm:fvs:lower}
Any (randomized) PSA for $FVS(k)$ requires $\Omega(n)$ space.
\end{reptheorem}

\begin{proof}
We show the proof by reduction to the {\textsc Disjointness} problem
in communication complexity.

\begin{center}
\noindent\framebox{\begin{minipage}{0.95\columnwidth}
\textsc{Disjointness}\\
\emph{Input}: Alice has a string $x\in \{0,1\}^n$ given by
$x_{1}x_{2}\ldots x_n$.\\ Bob has a string $y \in \{0, 1\}^n$.\\
\emph{Question}: Bob wants to check if $\exists i: x_{i}=y_i=1$.
\end{minipage}}
\end{center}

There is a lower bound of $\Omega(n)$ bits of communication between
Alice and Bob, even allowing randomization~\cite{nisan}.

\begin{figure}[t]
\centering
\includegraphics[width=\textwidth]{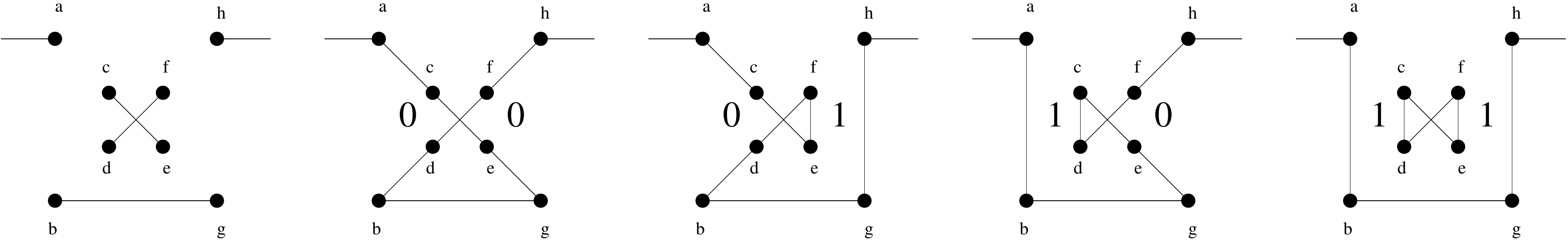}
\caption{Gadget for reduction to {\textsc Disjointness}}
\label{fig:disj}
\end{figure}

Given an instance of {\textsc Disjointness}, we create a graph on $8n$
nodes as follows.
We create nodes $a_i, b_i, \ldots h_i$, and insert edges
$(b_i, g_i), (c_i, e_i), (d_i, f_i)$ for all $i$.
We also create edges $(h_i, a_{i+1})$ for $1 \leq i < n$.
This is illustrated in the first graph in Figure~\ref{fig:disj}.

For each $i$, we add 2 edges corresponding to $x_i$, and two according
to $y_i$.
If $x_i = 0$, we add $(a_i, c_i)$ and $(b_i, d_i)$; else we add $(a_i,
b_i)$ and $(c_i, d_i)$.
If $y_i=0$, we add $(f_i, h_i)$ and $(e_i, g_i)$; else we add $(f_i,
e_i)$ and $(g_i, h_i)$.

We now observe that the resulting graph is a tree (in fact it is a
path) if the two strings are disjoint, but it has at least one cycle
if there is any $i$ such that $x_i =  y_i = 1$.
This can be seen by inspecting Figure~\ref{fig:disj}, which shows the
configuration for each possibility for $x_i$ and $y_i$.
Thus, any streaming algorithm that can determine whether a graph
stream is cycle-free or has one (or more) cycles implies a
communication protocol for {\sc Disjointness}, and hence requires
$\Omega(n)$ space.

Since $FVS(k)$ must, in the extreme case $k=0$, determine whether $G$
is acyclic, then $\Omega(n)$ space is required for this problem also.
This generalizes to any constant $k$
by simply adding $k$ triangles
on $3k$ new nodes to the graph: one node from each must be removed,
leaving the question whether the original graph is acyclic.
\junk{
First we give a lower bound for a relative of the {\textsc Index}
problem that we refer to as the \textsc{Index Same} problem.
\begin{center}
\noindent\framebox{\begin{minipage}{0.95\columnwidth}
\textsc{Index Same}\\
\emph{Input}: Alice has a string $x\in \{0,1\}^n$ given by $x_{1}x_{2}\ldots x_n$.\\ Bob has an index $i\in [n-1]$\\
\emph{Question}: Bob wants to check if $x_{i-1}=x_i$.
\end{minipage}}
\end{center}

\begin{lemma}
Any (randomized) one-way communication protocol between Alice and Bob for the \textsc{Index Same} problem requires $\Omega(n)$ bits.\label{thm:lower-index-same}
\end{lemma}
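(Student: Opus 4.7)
The plan is to reduce the standard \textsc{Index} problem (Alice holds $x \in \{0,1\}^n$, Bob holds $j \in [n]$, Bob must output $x_j$), whose one-way randomized communication complexity is known to be $\Omega(n)$, to \textsc{Index Same}. The key observation is that consecutive equality of a binary string is exactly the negation of consecutive XOR, so the running XOR (prefix parity) of $x$ encodes $x$ bit-by-bit as an equality pattern.

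Given an \textsc{Index} instance $(x, j)$, Alice constructs a new string $y \in \{0,1\}^{n+1}$ by setting $y_0 = 0$ and $y_i = y_{i-1} \oplus x_i$ for $1 \le i \le n$. By construction, for every $i \in [n]$ we have $y_{i-1} = y_i$ if and only if $x_i = 0$. Alice now simulates the hypothetical $f(n+1)$-bit one-way protocol for \textsc{Index Same} on input $y$ and forwards her single message to Bob. Bob, holding the \textsc{Index} query index $j$, plays the role of the Bob side of the \textsc{Index Same} protocol with his index set to $j$; the protocol returns a bit indicating whether $y_{j-1} = y_j$. Bob outputs $0$ if the answer is ``equal'' and $1$ otherwise, thereby recovering $x_j$ with the same success probability as the assumed \textsc{Index Same} protocol.

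The total communication is $f(n+1)$ bits, so any correct randomized one-way protocol for \textsc{Index Same} on length-$(n+1)$ strings solves \textsc{Index} on length-$n$ strings with the same success probability. Since the latter requires $\Omega(n)$ bits, we conclude $f(n+1) = \Omega(n)$, i.e. $f(n) = \Omega(n)$, as claimed.

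There is essentially no hard step here; the reduction is immediate once one notices the prefix-XOR trick. The only bookkeeping to do carefully is the indexing convention: we must make sure that the range of indices Bob is allowed to query in \textsc{Index Same} covers $\{1, \ldots, n\}$ so that every bit $x_j$ corresponds to a legitimate query $(y_{j-1}, y_j)$, and that we account for the $+1$ in length (which is absorbed asymptotically). If the problem is posed with $y$ of length $n$ and queries $i \in [n-1]$, the same reduction works after truncating $x$ to length $n-1$, losing only a constant factor in the lower bound.
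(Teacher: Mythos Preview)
Your proof is correct. Both you and the paper reduce from \textsc{Index}, but via different encodings. The paper builds a string $y\in\{0,1\}^{2n}$ by replacing each bit $x_i$ with a consecutive pair: $(y_{2i-1},y_{2i})=(0,1)$ if $x_i=0$ and $(1,1)$ if $x_i=1$, so that $y_{2j-1}=y_{2j}$ iff $x_j=1$; Bob then queries index $2j$. You instead use the prefix-XOR trick to get a string of length $n+1$ where $y_{i-1}=y_i$ iff $x_i=0$. Your encoding is tighter (length blowup $n\mapsto n+1$ rather than $n\mapsto 2n$), which is irrelevant for the $\Omega(n)$ bound but arguably cleaner; the paper's encoding has the minor advantage that Alice's map $x\mapsto y$ is coordinate-wise local rather than sequential. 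Either way the reduction is immediate, and your handling of the indexing convention is fine.
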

\begin{proof}
We give a reduction from the \textsc{Index} problem. Consider an instance $(x,j)$ of the \textsc{Index} problem, where $x\in \{0,1\}^n$ and $j\in [n]$. We now build a string $y\in \{0,1\}^{2n}$ by setting $y_{2i-1} = 0$ and $y_{2i}=1$ if $x_i = 0$ and $y_{2i-1}= 1 = y_{2i}$ otherwise. By the construction of the string $y$, it follows that $y_{2j-1}=y_{2j}$ if and only if $x_j=1$. Hence we can use the \textsc{Index Same} problem to solve the \textsc{Index} problem, and hence the lower bound carries over to the \textsc{Index Same} problem as well.
\end{proof}

The next lemma shows that if there is a single pass streaming
algorithm, say $\mathcal{B}$, which solves the problem of checking
whether a given forest on $n$ vertices is a tree
%$VC(k)$ problem
in $f(n)$ space, then there is a protocol for the \textsc{Index Same} instance with $f(n)$ bits. Hence the lower bound of $\Omega(n)$ transfers to the problem of checking whether a given forest is a tree.

\begin{lemma}
Consider a graph $G$ with $n-1$ edges on $n$ nodes. Then any (randomized) algorithm to determine whether $G$ is a tree requires $\Omega(n)$ space. \label{thm:lower-bound-checking-tree}
\end{lemma}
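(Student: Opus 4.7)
The plan is to reduce from the \textsc{Disjointness} problem, which has an $\Omega(n)$ one-way randomized communication lower bound~\cite{nisan}. Rather than go through the \textsc{Index Same}-based route sketched in the preceding paragraph, I would directly reuse the gadget already constructed in the proof of Theorem~\ref{thm:fvs:lower}, since that gadget is specifically designed to convert a \textsc{Disjointness} instance into a graph whose tree-ness encodes the answer.

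Concretely, given a \textsc{Disjointness} instance $(x,y) \in \{0,1\}^n \times \{0,1\}^n$, Alice and Bob would jointly construct the $8n$-vertex graph $G_{x,y}$ from the proof of Theorem~\ref{thm:fvs:lower}: the fixed skeleton of $3n$ edges $(b_i,g_i),(c_i,e_i),(d_i,f_i)$, together with the $n-1$ chaining edges $(h_i,a_{i+1})$, plus Alice's $2n$ edges determined by $x$ and Bob's $2n$ edges determined by $y$. The total edge count is $3n + (n-1) + 2n + 2n = 8n-1 = |V|-1$, so the constructed graph satisfies the hypothesis of the lemma. As shown in Figure~\ref{fig:disj}, the resulting graph is a tree (in fact a Hamiltonian path) exactly when $x$ and $y$ are disjoint, and contains at least one cycle whenever some index $i$ has $x_i = y_i = 1$.

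The reduction is then standard. Suppose $\mathcal{B}$ is a one-pass randomized algorithm using $f(n)$ space that decides tree-ness on $n$-vertex graphs with $n-1$ edges. Alice feeds her portion of the edges (the parts of the skeleton she is responsible for, plus her $2n$ $x$-edges) into $\mathcal{B}$, transmits the $f(8n)$-bit memory state to Bob, and Bob then completes the stream with his contribution and reads off the tree/non-tree output of $\mathcal{B}$. This produces a one-way protocol for \textsc{Disjointness} using at most $f(8n)$ bits, so the $\Omega(n)$ lower bound forces $f(8n) = \Omega(n)$ and hence $f(n) = \Omega(n)$ after rescaling.

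The main obstacle is pure bookkeeping. I need to verify that (i) the total number of edges is exactly $|V|-1$ on every input, so that the lemma's hypothesis is actually met (this is where using the fixed-size gadget of Theorem~\ref{thm:fvs:lower}, which always contributes two edges per $x_i$ and two per $y_i$ rather than a variable number, is important), and (ii) Alice's edge set depends only on $x$ and Bob's only on $y$, so that the protocol is genuinely one-way. A small additional point is that the input-independent skeleton edges must be assigned to one of the two parties (Alice inserts them before sending her sketch, for instance); this is unproblematic since they depend on neither $x$ nor $y$. Both properties are immediate from the construction in Theorem~\ref{thm:fvs:lower}, so the argument goes through without further difficulty.
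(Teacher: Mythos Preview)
Your proof is correct but takes a different route from the paper's. The paper proves this lemma via the \textsc{Index Same} problem: Alice builds a star-like graph on $n+2$ vertices (nodes $v_1,\ldots,v_n$ plus two hubs \text{ZERO} and \text{ONE}), attaching each $v_i$ to the hub matching $x_i$; Bob then adds the single edge $(v_{j},v_{j+1})$, and the resulting $(n+1)$-edge graph on $n+2$ vertices is a tree iff $x_j \neq x_{j+1}$. You instead go straight to \textsc{Disjointness} and recycle the $8n$-vertex path/cycle gadget from Theorem~\ref{thm:fvs:lower}, checking (correctly) that it always has exactly $8n-1$ edges and is a Hamiltonian path precisely when $x$ and $y$ are disjoint. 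Your approach has the virtue of reusing a construction already in the paper and avoiding the intermediate \textsc{Index}~$\to$~\textsc{Index Same} step; the paper's approach builds a much smaller instance (blow-up factor roughly $1$ rather than $8$) and keeps Bob's contribution to a single edge, which is slightly cleaner. Both yield the same $\Omega(n)$ bound.
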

\begin{proof}
Consider an instance $(x,j)$ of the \textsc{Index Same} problem, where $x\in \{0,1\}^n$ and $j\in [n-1]$. We now build a graph as follows:
\begin{itemize}
\item For each $j\in [n]$, add a vertex $v_j$
\item Add two special vertices ZERO and ONE.
\item For each $i\in [n]$, if $x_i = 0$ then Alice adds the edge $(v_{i},\text{ZERO})$; otherwise if $x_i = 1$ then Alice adds the edge $(v_{i},\text{ONE})$.
\end{itemize}
The graph now has $n$ edges and $n+2$ vertices. Alice sends this edge
set to Bob using $f(n)$ bits.
Bob, holding the desired index $i$, now adds the edge $(v_{i},v_{i-1})$
Now we have $n+1$ edges and $n+2$ vertices. It is easy to see that the
constructed graph is connected (and hence a tree) if and only if
$x_{j}\neq x_{j+1}$, i.e., Bob can check solve the \textsc{Index Same}
problem by checking whether the output of the algorithm $\mathcal{B}$
indicates if $G$ is a tree.
The total communication between Alice and Bob was $O(f(n))$ bits, and hence we can solve the \textsc{Matrix Index} problem in $f(n)$ bits. This implies $f(k)=\Omega(n)$.
\end{proof}

Finally, we are now ready to prove Theorem~\ref{thm:fvs:lower}.
%\begin{proof}
Let $\mathcal{A}$ be any PSA for $FVS(k)$. If a connected graph $G$ is
given as an input to $\mathcal{A}$, then deciding if $G$ is acyclic
(in the extreme case when $k=0$),
i.e., a tree, needs $\Omega(n)$ space from Lemma~\ref{thm:lower-bound-checking-tree}. Hence, any PSA for $FVS(k)$ needs $\Omega(n)$ space.
}
\end{proof}

%---------------------------------------------------------------------------------------------------------------------------------------------------

\end{document}